\documentclass[journal]{IEEEtran}
\ifCLASSINFOpdf
\else
\fi

\usepackage{amsmath} 
\usepackage{amsthm}
\usepackage{amssymb}  
\usepackage[nospace,noadjust]{cite}
\usepackage{nicefrac}
\usepackage{graphicx}     
\usepackage{hhline}
\usepackage{amsbsy}
\usepackage{soul}
\usepackage{units}
\usepackage{relsize}
\usepackage{siunitx}
\usepackage{mathtools}
\usepackage{mathrsfs}
\usepackage{enumerate}
\usepackage[shortlabels]{enumitem}
\usepackage{tikzscale}
\usepackage{multirow}
\usepackage{diagbox}
\usepackage{microtype}
\usepackage{pgfplots}
\usepackage{tikz}
\usepackage{booktabs}
\usepgfplotslibrary{statistics}
\usepackage{pgfplotstable}
\usetikzlibrary{calc}
\usetikzlibrary{plotmarks}
\usetikzlibrary{patterns}
\usetikzlibrary{positioning}
\newlength\figureheight
\newlength\figurewidth
\setlength\figureheight{0.618\linewidth}
\setlength\figurewidth{\linewidth}


\newcommand{\abs}[1]{\left\lvert#1\right\rvert}
\newcommand{\norm}[1]{\left\lVert#1\right\rVert}
\newcommand{\nul}{\text{o}}
\newcommand{\wpone}{\text{ as } N\rightarrow\infty, \text{ w.p.1}}

\newtheorem{assumption}{Assumption}
\newtheorem{definition}{Definition}
\newtheorem{theorem}{Theorem}
\newtheorem{lemma}{Lemma}
\newtheorem{prop}{Proposition}
\newcommand{\thetals}{\hat{\theta}^\text{LS}_N}
\newcommand{\thetawls}{\hat{\theta}^\text{WLS}_N}

\newcommand{\minus}{\scalebox{0.5}[1.0]{$-$}}
\pgfplotsset{every tick label/.append style={font=\scriptsize}}
\mathtoolsset{showonlyrefs}

\begin{document}
\title{Parametric Identification Using \\ Weighted Null-Space Fitting}
\author{Miguel~Galrinho,
        Cristian~R.~Rojas,~\IEEEmembership{Member,~IEEE,}
        and~H{\aa}kan~Hjalmarsson,~\IEEEmembership{Fellow,~IEEE}
\thanks{Automatic Control Lab and ACCESS Linnaeus Center, School of Electrical
    Engineering, KTH Royal Institute of Technology, SE-100 44 Stockholm,
    Sweden. (e-mail: \tt{\{galrinho, crro, hjalmars\}}@kth.se.)}%
\thanks{This work was supported by the Swedish Research Council under contracts 2015-05285 and 2016-06079.}
}

\markboth{IEEE Transactions on Automatic Control}%
{Galrinho \MakeLowercase{\textit{et al.}}: The Weighted Null-Space Fitting Method}

\maketitle

\begin{abstract}
In identification of dynamical systems, the prediction error method using a quadratic cost function provides asymptotically efficient estimates under Gaussian noise and additional mild assumptions, but in general it requires solving a non-convex optimization problem.
An alternative class of methods uses a non-parametric model as intermediate step to obtain the model of interest.
Weighted null-space fitting (WNSF) belongs to this class.
It is a weighted least-squares method consisting of three steps.
In the first step, a high-order ARX model is estimated.
In a second least-squares step, this high-order estimate is reduced to a parametric estimate.
In the third step, weighted least squares is used to reduce the variance of the estimates.
The method is flexible in parametrization and suitable for both open- and closed-loop data.
In this paper, we show that WNSF provides estimates with the same asymptotic properties as PEM with a quadratic cost function when the model orders are chosen according to the true system.
Also, simulation studies indicate that WNSF may be competitive with state-of-the-art methods.
\end{abstract}

\begin{IEEEkeywords}
System identification, least squares.
\end{IEEEkeywords}

\IEEEpeerreviewmaketitle

\section{Introduction}

For parametric system identification, the prediction error method
(PEM) is the reference in the field. With open-loop data, consistency is
guaranteed if the model can describe the system dynamics, irrespective
of the used noise model. 
For Gaussian noise\footnote{When maximum likelihood and asymptotic efficiency are discussed in the following, the standard assumption is that the noise is Gaussian.} and with a noise model able to describe the noise spectrum, PEM with a quadratic cost function corresponds to maximum likelihood (ML), and is asymptotically efficient with respect to the used model structure~\cite{ljung99}, meaning that the covariance of the estimate asymptotically achieves the Cram{\'e}r-Rao (CR) lower bound: the best possible covariance for consistent estimators.
 
There are two issues that may hinder successful application of PEM. The first---and most
critical---is the risk of converging to a
non-global minimum of the cost function, which is in general not
convex. Thus, PEM requires local non-linear optimization algorithms and good initialization points.
The second issue concerns closed-loop data. 
In this case, PEM is biased unless the noise model is flexible enough.
For asymptotic efficiency, the noise model must be of correct order and estimated simultaneously with the dynamic model.
 
During the half decade since the publication of \cite{astrom&bohlin:65},
alternatives to PEM/ML have appeared, addressing one or both of
the aforementioned issues. We will not attempt to fully review this vast
field, but below we highlight some of the milestones.
 
Instrumental variable (IV) methods~\cite{soderstrom1983instrumental} allow consistency to be obtained in a large variety of settings without the issue of non-convexity.
Asymptotic efficiency can be obtained for some problems using iterative algorithms~\cite{Stoica&Soderstrom:83,riv}.
However, IV methods cannot achieve the CR bound with closed-loop data~\cite{cliv}.
 
Realization-based methods \cite{Kung:1978}, which later evolved into subspace methods \cite{subspacebook}, are non-iterative and thus attractive for their computational efficiency. 
The bias issue for closed-loop data has been overcome by more recent algorithms~\cite{verhaegen:closedloop,qin2003closed,jansson03,chiuso2005consistency}.
However, structural information is difficult to incorporate, and---even if a complete analysis is still unavailable (significant contributions have been provided~\cite{janwah96, Bauer05,chiuso2005consistency})---subspace methods are in general not believed to be as accurate as PEM.
 
Some methods are based on fixing some parameters in certain places of the cost function but not others to obtain a quadratic cost function, so that the estimate can be obtained by (weighted) least squares.
Then, the fixed coefficients are replaced by an estimate from the previous iteration in the weighting or in a filtering step.
This leads to iterative methods, which date back to~\cite{sanathanankoerner}.
Some of these methods have been denoted iterative quadratic maximum likelihood (IQML), originally developed for filter design~\cite{evansfischl73,bresler86} and later applied to dynamical systems~\cite{shaw94,shawmisra94,lemmerlingettal01}.
Another classical example is the Steiglitz-McBride method~\cite{stmcb_original} for estimating output-error models, which is equivalent to IQML for an impulse-input case~\cite{mcclellanlee91}.
In the identification field, weightings or filterings have not been determined by statistical considerations.
In this perspective, the result in~\cite{stoica1981steiglitz}, showing that the Steiglitz-McBride method is not asymptotically efficient, is not surprising.
 
Another approach is to estimate, in an intermediate step,
a more flexible model, followed by a model reduction step
to recover a model with the desired structure. The motivation for
this procedure is that, in some cases, each step corresponds to a
convex optimization problem or a numerically reliable procedure. To guarantee asymptotic efficiency,
it is important that the intermediate model is a
sufficient statistic and the
model reduction step is performed in a statistically sound way. Indirect
PEM~\cite{ipem_ss91} formalizes the requirements starting with an
over-parametrized model of fixed order and uses ML
in the model reduction step. The latter step corresponds in general to
a weighted non-linear least-squares problem. 
 
It has also been recognized that the intermediate model does not need to capture the true system perfectly, but only with sufficient accuracy.
Subspace algorithms can be interpreted in this way: for example, SSARX~\cite{jansson03} estimates an ARX model followed by a singular-value-decomposition (SVD) step and least-squares estimation. 
For spectral estimation, the THREE-like approach~\cite{byrnes2000three} is also a two-step procedure that first obtains a non-parametric spectral estimate and then reduces it to a parametric estimate that in some sense is closest to the non-parametric one, and whose optimization function is convex~\cite{zorzi2015three}.

In the field of time-series estimation, methods based on an intermediate high-order time series have also been suggested as alternative to ML, whose properties were studied in~\cite{hannan1973asymptotic}.
Durbin's first method~\cite{durbin1960fitting} for auto-regressive moving-average (ARMA) time series uses an intermediate high-order AR time series to simulate the innovations sequence, which allows obtaining the ARMA parameters by least squares.
This method does not have the same asymptotic properties as ML, unlike Durbin's second method~\cite{durbin1960fitting}.
The latter is an extension of~\cite{durbin1959efficient}, which had been proposed for MA time series, whose parameters can be estimated from the high-order AR estimates using least squares, with an accuracy that can be made arbitrarily close to the Cram{\'e}r-Rao bound by increasing the AR-model order.
When applied to ARMA time series, the idea to achieve efficiency is to iterate between estimating the AR and MA polynomials using this procedure, initialized with Durbin's first method.
Another way to achieve efficiency from Durbin's first method as starting point was proposed in~\cite{mayne1982linear} by using an additional filtering step with the MA estimates from Durbin's first method, and then re-estimating the ARMA-parameters.

The asymptotic properties of these methods have been analyzed by considering the high order tending to infinity, but ``small'' compared to the sample size.
A preferable analysis should handle the relation between the high order and the sample size formally, as done in~\cite{hannanKavalieris} to prove consistency of the method in~\cite{mayne1982linear}, where the high order is assumed to tend to infinity as function of the sample size at a particular rate.
This class of methods has become popular for vector ARMA time series, with several available algorithms using different procedures for obtaining the asymptotic efficient model parameter estimates from the estimated innovations (e.g., \cite{hannan1984multivariate,reinsel1992maximum,dufour2014asymptotic}, with further information in references in these papers).
Despite sharing the same asymptotic properties, which have been analyzed with the high order as a function of the sample size, these algorithms may have different computational requirements and finite sample properties.

For identification of dynamical systems, instead of using the high-order model to estimate the innovations, it has been suggested that identification of the model of interest can be done by applying asymptotic ML directly to the high-order model~\cite{wahlberg89}. 
The ASYM method~\cite{zhu_book} is an instantiation of this approach. 
Because an ARX-model estimate and its covariance constitute a sufficient statistic as the model order grows, this approach can produce asymptotically efficient estimates. 
However, the plant and noise models are estimated separately, preventing asymptotic efficiency for closed-loop data. 
Also, although such model reduction procedures may have numerical advantages over direct application of PEM~\cite{zhu_book}, this approach still requires local non-linear optimization techniques.
The Box-Jenkins Steiglitz-McBride (BJSM) method~\cite{bjsm} instead uses the Steiglitz-McBride method in the model reduction step, resulting in asymptotically efficient estimates of the plant in open loop. 
Two drawbacks of BJSM are that the number of iterations is required to tend to infinity (as for the Steiglitz-McBride method) and that, similarly to \cite{wahlberg89} and \cite{zhu_book}, the CR bound cannot be attained in closed loop.
The Model Order Reduction Steiglitz-McBride (MORSM) method solves the first drawback of BJSM, but not the second~\cite{MORSM}.
 
In this contribution, we focus on weighted null-space fitting (WNSF),
introduced in~\cite{galrinho14}. This method uses two of the features of the methods above: i) an intermediate high-order
ARX model; ii) the high-order model is directly used for estimating the low-order model using ML-based model reduction. 
However, instead of an explicit minimization of the model-reduction cost function---as in indirect PEM (directly via the
model parameters), ASYM (in the time domain), and \cite{wahlberg89} (in the frequency domain)---the model reduction
step consists of a weighted least-squares problem. 
Asymptotic efficiency requires that the weighting depend on the (to be estimated) model parameters. 
To handle this, an additional
least-squares step is introduced. 
Consisting of three (weighted) least-squares steps, WNSF has attractive computational properties in comparison with, for example, PEM, ASYM, and BJSM. 
More steps may be added to this standard procedure, using an iterative weighted least-squares algorithm, which may improve the estimate for finite sample size.

Another interesting feature of WNSF is that,
unlike many of the methods above (including MORSM), the dynamic model and the noise model are estimated jointly.
If this is not done, an algorithm cannot be asymptotically efficient for closed-loop data~\cite{Forssell_cl}.
Nevertheless, in some applications, the noise model may be of no concern. 
WNSF can then be simplified and a noise model not estimated, still maintaining asymptotic efficiency for open-loop data. 
In closed loop, consistency is still maintained because the high-order model captures the noise spectrum consistently, while the resulting accuracy corresponds to the covariance of PEM with an infinite-order noise model~\cite{Forssell_cl}. 
Thus, besides the attractive numerical properties, WNSF has
theoretical properties matched only by PEM. 
However, WNSF has the additional benefit that an explicit noise model is not required to obtain consistency with closed-loop data.
 
In~\cite{galrinho14}, some theoretical properties of WNSF are claimed and supported by simulations, but with no formal proof.
The robust performance that the method has shown has provided the motivation to extend the simulation study and deepen the theoretical analysis.
Take Fig.~\ref{fig:peakssys} as an example, showing the FITs (see~\eqref{eq:fit} for a definition of this quality measure) 	for estimates obtained in closed loop from 100 Monte Carlo runs with the following methods: PEM with default MATLAB implementation (PEM-d), the subspace method SSARX~\cite{jansson03}, WNSF, and PEM initialized at the true parameters as benchmark (PEM-t).
Here, the default MATLAB initialization for PEM is often not accurate enough, and the non-convex cost function of PEM converges to non-global minima, while the low FIT of SSARX indicates that this method is not a viable alternative to deal with the non-convexity of PEM for the situation at hand.
On the other hand, WNSF has a performance close to PEM initialized at the true parameters, suggesting that the weighted least-squares procedure applied to a non-parametric estimate may be more robust than an explicit minimization of the PEM cost function in regards to convergence issues.

In this paper, we provide a theoretical and experimental analysis of WNSF applied to stable single-input single-output (SISO) Box-Jenkins (BJ) systems, which may operate in closed loop. 
Our main contributions are to establish conditions for consistency and asymptotic efficiency.
A major effort of the analysis is to keep track of the model errors induced by using an ARX model on data generated by a system of BJ type. 
It is a delicate matter to determine how the ARX-model
order should depend on the sample size such that it is
ensured that these errors vanish as the sample size grows: to this end, the results in \cite{ljung&wahlberg92} have been instrumental.
We finally conduct a finite sample simulation study where WNSF shows competitive performance with state-of-the-art methods.

The paper is organized as follows.
In Section~\ref{sec:prel}, we introduce definitions, assumptions, and background.
In Section~\ref{sec:wnsf}, we review the WNSF algorithm.
In Section~\ref{sec:prop}, we provide the theoretical analysis;
in Section~\ref{sec:sim}, the experimental analysis.

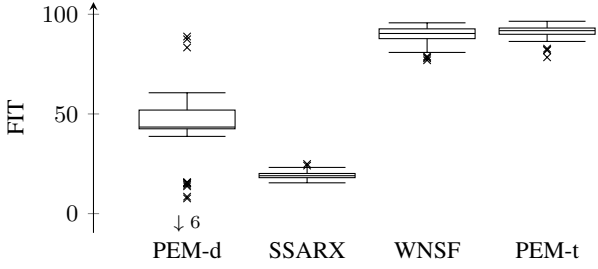
\begin{figure}
\pgfplotsset{compat=1.14}
\begin{tikzpicture}
  \begin{axis}[
    boxplot/draw direction=y,
    width=\figurewidth,
    height=.85\figureheight,
    ymin  = 0,
    x axis line style={opacity=0},
    axis x line*=bottom,
    axis y line=left,
    enlarge y limits,
    xtick style={draw=none},
    xticklabels={,,PEM-d,SSARX,WNSF,PEM-t},
    ylabel=FIT,
    label style={font=\small},
    tick label style={font=\small},
  ]
    \addplot [boxplot,mark=x] table[y index=0]{Data/peakssys_pem.csv};
    \node at (axis cs:1,4) [anchor=north] {\scriptsize{$\downarrow 6$}};
    \addplot [boxplot,mark=x] table[y index=0]{Data/peakssys_ssarx.csv};
    \addplot [boxplot,mark=x] table[y index=0]{Data/peakssys_wnsf.csv};
    \addplot [boxplot,mark=x] table[y index=0]{Data/peakssys_pem_true.csv};
  \end{axis}
\end{tikzpicture}
\caption{FITs from 100 Monte Carlo runs with a highly resonant system.}
\label{fig:peakssys}
\end{figure}

\section{Preliminaries}
\label{sec:prel}

\subsection{Notation}
\begin{itemize}
\item $||x||_p=(\sum_{k=1}^{n}\abs{x_k}^p)^{1/p}$, with $x_k$ the $k^\text{th}$ entry of the $n\times 1$ vector $x$, and $p\in\mathbb{N}$ (for simplicity $||x||:=||x||_2$).
\item $||A||_p=\sup_{x\neq 0} ||Ax||_p/||x||_p$, with $A$ a matrix, $x$ a vector of appropriate dimensions, and $p\in\mathbb{N}$ (for simplicity $||A||:=||A||_2$); also, $||A||_\infty = ||A^\top||_1$.
\item $C$ and $\bar{N}$ denote any constant, which need not be the same in different expressions, and may be random variables. 
\item $\Gamma_n(q) = [q^{-1} \quad \cdots q^{-n}]^\top$, where $q^{-1}$ is the backward time-shift operator.
\item $A^*$ is the complex conjugate transpose of the matrix $A$.
\item $\mathcal{T}_{n,m}(X(q))$ is the Toeplitz matrix of size $n\times m$ ($m\leq n$) with first column $[x_0 \; \cdots \; x_{n-1}]^\top$ and first row $[x_0 \; 0_{1\times m-1}]$, where $X(q)=\sum_{k=0}^\infty x_k q^{\minus k}$. The dimension $n$ may be infinity, denoted $\mathcal{T}_{\infty,m}(X(q))$.
\item $\mathbb{E} x$ denotes expectation of the random vector $x$.
\item $\bar{\mathbb{E}} x_t := \lim\limits_{N\to\infty} \frac{1}{N} \sum_{t=1}^{N} \mathbb{E} x_t$.
\item $x_N = \mathcal{O}(f_N)$: the function $x_N$ tends to zero at a rate not slower than $f_N$, as $N\to\infty$, w.p.1.
\item $x_N \sim As \mathcal{N} (a,P)$: the random variable $x_N$ is normally distributed with mean $a$ and covariance $P$ as $N\to\infty$.
\end{itemize}

\subsection{Definitions and Assumptions}

\begin{assumption}[Model and true system]
\label{ass:model_truesystem}
The model has input $\{u_t\}$, output $\{y_t\}$ and is subject to the noise $\{e_t\}$, all real-valued, related by
\begin{equation}
y_t = G(q,\theta)u_t + H(q,\theta)e_t .
\label{eq:model}
\end{equation}
The transfer functions $G(q,\theta)$ and $H(q,\theta)$ are rational functions in $q^{\minus 1}$, according to
\begin{alignat}{5}
& G(q,\theta) && := && \frac{L(q,\theta)}{F(q,\theta)} && := && \frac{l_1 q^{\minus 1} + \dots + l_{m_l} q^{\minus m_l}}{1 + f_1 q^{\minus 1} + \dots + f_{m_f} q^{\minus m_f}} , \\
& H(q,\theta) && := && \frac{C(q,\theta)}{D(q,\theta)} && := && \frac{1 + c_1 q^{\minus 1} + \dots + c_{m_c} q^{\minus m_c}}{1 + d_1 q^{\minus 1} + \dots + d_{m_d} q^{\minus m_d}} ,
\end{alignat}
where $\theta$ is the parameter vector to be estimated, given by
\begin{equation}
\theta = 
\begin{bmatrix}
f^\top & l^\top & c^\top & d^\top
\end{bmatrix}^\top \in \mathbb{R}^{m_f+m_l+m_c+m_d},
\label{eq:theta}
\end{equation}
with
\begin{equation}
f = 
\begin{bmatrix}
f_1 \\ \vdots \\ f_{m_f}
\end{bmatrix} , \quad
l = 
\begin{bmatrix}
l_1 \\ \vdots \\ l_{m_l}
\end{bmatrix} , \quad
c = 
\begin{bmatrix}
c_1 \\ \vdots \\ c_{m_c}
\end{bmatrix} , \quad
d = 
\begin{bmatrix}
d_1 \\ \vdots \\ d_{m_d}
\end{bmatrix} .
\label{eq:flc}
\end{equation}
If the noise model is not of interest, we consider that we want to obtain an estimate $G(q,\bar{\theta})$, where
$\bar{\theta} = [f^\top \quad l^\top]^\top$.

The true system is described by~\eqref{eq:model} when $\theta=\theta_\nul$. The transfer functions ${G_\nul:=G(q,\theta_\nul)}$ and ${H_\nul:=H(q,\theta_\nul)}$ are assumed to be stable, and $H_\nul$ inversely stable. The polynomials $L_\nul:=L(q,\theta_\nul)$ and $F_\nul:=F(q,\theta_\nul)$, as well as $C_\nul:=C(q,\theta_\nul)$ and $D_\nul:=D(q,\theta_\nul)$, do not share common factors.
\end{assumption}


Because we allow for data to be collected in closed loop, the input $\{u_t\}$ is allowed to have a stochastic part.
Then, let $\mathcal{F}_{t-1}$ be the $\sigma$-algebra generated by ${\{e_s, u_s, s\leq t-1\}}$. 
For the noise, the following assumption applies.

\begin{assumption}[Noise]
\label{ass:noise}
The noise sequence $\{e_t\}$ is a stochastic process that satisfies
\begin{equation}
\mathbb{E}[e_t|\mathcal{F}_{t-1}] = 0, \quad  \mathbb{E}[e_t^2|\mathcal{F}_{t-1}] = \sigma_\nul^2, \quad \mathbb{E}[|e_t|^{10}] \leq C, \forall t .
\end{equation}
\end{assumption}

Before stating the assumption on the input sequence, we introduce the following definitions, used in~\cite{ljung&wahlberg92}.

\begin{definition}[$f_N$-quasi-stationarity]
Let $f_N$ be a decreasing sequence of positive scalars, with $f_N \to 0$ as $N \to \infty$, and
\begin{equation}
R^N_{vv}(\tau) = \left \{ 
\begin{array}{lr}
\frac{1}{N}\sum_{t = \tau+1}^N v_tv_{t-\tau}^\top, & 0 \le \tau < N, \\
\frac{1}{N}\sum_{t = 1}^{N+\tau} v_tv_{t-\tau}^\top, & -N < \tau \leq 0,\\
0, & \mathrm{otherwise.}
\end{array}
\right.
\end{equation}
The vector sequence $\{v_t\}$ is $f_N$-quasi-stationary if
\begin{enumerate}[i)]
\item There exists $R_{vv}(\tau)$ such that \\ $\sup_{|\tau| \leq N} \norm{R^N_{vv}(\tau)- R_{vv}(\tau)} \leq C_1 f_N$,
\item $\frac{1}{N}\sum_{t = -N}^N \norm{v_t}^2  \le C_2$
\end{enumerate}
for all $N$ large enough, where $C_1$ and $C_2$ are finite constants.
\end{definition}
This definition allows us to work with some stochastic signals that have deterministic components, as in~\cite{ljung99}.
In addition to the standard definition of quasi-stationarity, a rate of convergence for the sample covariances is defined.

\begin{definition}[$f_N$-stability]
A filter $G(q)  =  \sum_{k=0}^\infty  g_k q^{\minus k}$
is $f_N$-stable if $\sum_{k=0}^\infty |g_k| /f_k < \infty $.
\end{definition}

\begin{definition}[Power spectral density]
The power spectral density of an $f_N$-quasi-stationary sequence $\{v_t\}$ is given by $\Phi_{v}(z)=\sum_{\tau=\minus\infty}^\infty R_{vv}(\tau)z^{-\tau}$, if the sum exists for $|z|=1$.
\end{definition}

For the input, the following assumption applies.

\begin{assumption}[Input]
\label{ass:input}
The input sequence $\{u_t\}$ is defined by
$u_t = - K(q) y_t + r_t$
under the following conditions.
\begin{enumerate}[i)]
\item The sequence $\{r_t\}$ is independent of $\{e_t\}$, $f_N$-quasi-stationary with $f_N\!=\!\sqrt{\log N/N}$, and uniformly bounded.
\item With $\Phi_{r}(z) = F_r(z)F_r(z^{\minus 1})$ the spectral factorization of $\{r_t\}$ and $F_r(z)$ causal, $F_r(q)$ is BIBO stable.
\item The closed loop system is $f_N$-stable with $f_N = 1/\sqrt{N}$.
\item The transfer function $K(z)$ is bounded on the unit circle.
\item The spectral density of $\{[ r_t \; e_t ]^\top \}$ is coercive (i.e., bounded from below by the matrix $\delta I$, for some $\delta>0$).
\end{enumerate}
\end{assumption}
Operation in open loop is obtained by taking $K(q)=0$. 
The choice of $f_N$ in \emph{iii)} guarantees that the impulse responses of the closed-loop system have a minimum rate of decay, necessary to derive the results in~\cite{ljung&wahlberg92}.
This minimum decay rate is trivially satisfied here, as the system is stable and finite dimensional, and hence has exponentially decaying impulse responses.

\subsection{The Prediction Error Method}

The prediction error method minimizes a cost function of prediction errors, which, for the model structure~\eqref{eq:model}, are
\begin{equation}
\varepsilon_t(\theta) = \frac{D(q,\theta)}{C(q,\theta)} \left( y_t - \frac{L(q,\theta)}{F(q,\theta)} u_t \right) .
\label{eq:epsilon}
\end{equation}
Using a quadratic cost function, the PEM estimate of $\theta$ is obtained by minimizing
\begin{equation}
\textstyle
J(\theta) = \frac{1}{N} \sum_{t=1}^{N} \varepsilon_t^2(\theta) ,
\label{eq:J}
\end{equation}
where $N$ is the sample size.
Assuming that $\theta$ belongs to an appropriate domain~\cite[Def. 4.3]{ljung99}, when the data set is informative~\cite[Def. 8.1]{ljung99} and under appropriate technical conditions~\cite[Chap. 8]{ljung99}, the global minimizer $\hat{\theta}_N^\text{PEM}$ of~\eqref{eq:J} is asymptotically distributed as~\cite[Theorem 9.1]{ljung99}
\begin{equation}
\sqrt{N} (\hat{\theta}_N^\text{PEM}-\theta_\nul) \sim As \mathcal{N} \big(0,\sigma_\nul^2 M_\text{CR}^{\minus 1}\big),
\label{eq:PEMcov}
\end{equation}
where
\begin{equation}
M_\text{CR} = \frac{1}{2\pi} \int_{-\pi}^{\pi} \Omega(e^{i\omega}) \Phi_z(e^{i\omega}) \Omega^*(e^{i\omega}) d\omega ,
\label{eq:CR}
\end{equation}
with (for notational simplicity, we omit the argument $e^{i\omega}$)
\begin{equation}
\Omega = 
\begin{bmatrix}
-\frac{G_\nul}{H_\nul F_\nul} \Gamma_{m_f} & 0\\
\frac{1}{H_\nul F_\nul} \Gamma_{m_l} & 0 \\
0 & \frac{1}{C_\nul} \Gamma_{m_c} \\
0 & -\frac{1}{D_\nul} \Gamma_{m_d}
\end{bmatrix}
\label{eq:Omega}
\end{equation}
and $\Phi_z$ the spectrum of $\left[u_t \;\; e_t\right]^\top\!$.
When the error sequence is Gaussian, PEM with a quadratic cost function is asymptotically efficient, with~\eqref{eq:CR} corresponding to the CR bound~\cite[Chap. 9]{ljung99}.

In open loop, the asymptotic covariance of the dynamic-model parameters is the top-left block of~\eqref{eq:PEMcov} even if the noise-model orders $m_c$ and $m_d$ are larger than the true ones; if smaller, the dynamic-model estimates are consistent but not asymptotically efficient.
In closed loop, the covariance of the dynamic-model estimates only corresponds to the top-left block of~\eqref{eq:PEMcov} if the noise-model orders are the true ones; if smaller, the dynamic-model estimates are biased; if larger, they are consistent and the asymptotic covariance matrix can be bounded by $\sigma_\nul^2 M_\text{CL}^{-1}$, where~\cite{Forssell_cl}
\begin{equation}
M_\text{CL} = 
\frac{1}{2\pi} \int_{-\pi}^{\pi} \bar{\Omega}(e^{i\omega}) \Phi_u^r(e^{i\omega}) \bar{\Omega}^*(e^{i\omega}) d\omega 
\label{eq:Mcl}
\end{equation}
with $\Phi_u^r$ the spectrum of the input due to the reference.
This corresponds to the case with infinite noise-model order.

The main drawback with PEM is that minimizing~\eqref{eq:J} is in general a non-convex optimization problem.
Therefore, the global minimizer $\hat{\theta}_N^\text{PEM}$ is not guaranteed to be found.
An exception is the ARX model.

\subsection{High-Order ARX Modeling}

The true system can alternatively be written as
\begin{equation}
A_\nul(q) y_t = B_\nul(q) u_t + e_t ,
\label{eq:truearx}
\end{equation}
where the transfer functions
\begin{equation}
\begin{alignedat}{3}
A_\nul(q) &:= \frac{1}{H_\nul(q)} &&=: 1+&&\sum_{k=1}^{\infty} a^\nul_k q^{\minus k} , \\
B_\nul(q) &:= \frac{G_\nul(q)}{H_\nul(q)} &&=: &&\sum_{k=1}^{\infty} b^\nul_k q^{\minus k} 
\end{alignedat}
\label{eq:truearxpoly}
\end{equation}
are stable (Assumption~\ref{ass:model_truesystem}).
Therefore, the ARX model
\begin{equation}
A(q,\eta^n) y_t = B(q,\eta^n) u_t + e_t ,
\label{eq:arxmodel}
\end{equation}
where
\begin{equation}
\eta^n = 
\begin{bmatrix}
a_1 & \cdots & a_n & b_1 & \cdots & b_n
\end{bmatrix}^\top ,
\end{equation}
\begin{equation}
A(q,\eta^n) = 1+\sum_{k=1}^{n} a_k q^{\minus k} , \quad B(q,\eta^n) = \sum_{k=1}^{n} b_k q^{\minus k} ,
\end{equation}
can approximate~\eqref{eq:truearx} arbitrarily well if the model order $n$ is chosen arbitrarily large.

Because the prediction errors for the ARX model~\eqref{eq:arxmodel},
$
\varepsilon_t(\eta^n) = A(q,\eta^n) y_t - B(q,\eta^n) u_t ,
$
are linear in the model parameters $\eta^n$, the corresponding PEM cost function~\eqref{eq:J} can be minimized with least squares.
This is done as follows.
First, re-write~\eqref{eq:arxmodel} in regression form as
\begin{equation}
y_t = (\varphi_t^n)^\top \eta^n + e_t ,
\label{eq:ARXregression}
\end{equation}
where
\begin{equation}
\varphi_t^n =
\begin{bmatrix}
-y_{t-1} & \cdots & - y_{t-n} & u_{t-1} & \cdots & u_{t-n}
\end{bmatrix}^\top.
\label{eq:phi}
\end{equation}
Then, the least-squares estimate of $\eta^n$ is obtained by
\begin{equation}
\hat{\eta}^{n,\text{ls}}_N = [R^n_N]^{\minus 1} r^n_N ,
\label{eq:eta_ls}
\end{equation}
where
\begin{equation}
R^n_N = \frac{1}{N} \sum_{t=n+1}^N \varphi_t^n (\varphi_t^n)^\top , \quad
r^n_N = \frac{1}{N} \sum_{t=n+1}^N \varphi_t^n y_t .
\label{eq:RnN}
\end{equation}
As the sample size increases, we have~\cite{ljung&wahlberg92}
\begin{equation}
\begin{alignedat}{3}
R^n_N &\to \bar{R}^n&& \left(:=\bar{\mathbb{E}}\left[ \varphi_t^n (\varphi_t^n)^\top \right]\right) , &&\wpone , \\
r^n_N &\to \bar{r}^n&& \left(:=\bar{\mathbb{E}}\left[ \varphi_t^n y_t \right] \right) , &&\wpone .
\end{alignedat}
\label{eq:Rbar}
\end{equation}
Consequently,
\begin{equation}
\hat{\eta}^{n,\text{ls}}_N \to \bar{\eta}^n := \left[\bar{R}^n\right]^{\minus 1} \bar{r}^n, \wpone .
\label{eq:etahat_conv}
\end{equation}
For future reference, we define
\begin{equation}
\begin{aligned}
\eta^n_\nul &:= 
\begin{bmatrix}
a_1^\nul & \cdots & a_n^\nul & b_1^\nul & \cdots & b_n^\nul
\end{bmatrix}^\top , \\
\eta_\nul &:= 
\begin{bmatrix}
a_1^\nul & a_2^\nul & \cdots & b_1^\nul & b_2^\nul & \cdots
\end{bmatrix}^\top .
\end{aligned}
\label{eq:eta0}
\end{equation}

The attractiveness of ARX modeling is the simplicity of estimation while approximating more general classes of systems with arbitrary accuracy.
However, as the order $n$ typically has to be taken large, the estimated ARX model will have high variance.
Nevertheless, this estimate can in principle be used as a means to obtain an asymptotically estimate of the Box-Jenkins (BJ) model~\eqref{eq:model} when the measurement noise is Gaussian.
To understand intuitively why this is the case, we observe that if we neglect the truncation error from approximating the true system~\eqref{eq:truearx} by the model~\eqref{eq:arxmodel}, $\hat{\eta}^{n,\text{ls}}_N$ and $R_N^n$ constitute a sufficient statistic for our problem.
Therefore, they can replace the data without loss of information.
If ML is used for the subsequent estimation, we need to solve a non-convex optimization problem~\cite{wahlberg89}.
An accurate estimate to initialize the optimization procedure is then crucial; a standard result is that, if initialized with a strongly consistent estimate, one Gauss-Newton iteration provides an asymptotically efficient estimate (e.g., \cite[Chap. 23]{gourieroux1995statistics}).

\section{Weighted Null-Space Fitting Method}
\label{sec:wnsf}

The idea of weighted null-space fitting~\cite{galrinho14} is to avoid the burden of a non-convex optimization by using weighted least squares, but maintaining the properties of maximum likelihood.
The method consists of three steps.
In the first step, a high-order ARX model is estimated with least squares.
In the second step, the parametric model is estimated from the high-order ARX model with least squares, providing a consistent estimate.
In the third step, the parametric model is re-estimated with weighted least squares.
Because the optimal weighting depends on the true parameters, we replace these by the consistent estimate obtained in the previous step.
Similarly to maximum likelihood with an optimization algorithm initialized at a consistent estimate, this provides an asymptotically efficient estimate.
We now proceed to detail each of these steps.

The first step consists in estimating $\hat{\eta}^{n,\text{ls}}_N$ from~\eqref{eq:eta_ls}.
As discussed before, $\hat{\eta}^{n,\text{ls}}_N$ and $R^n_N$ are almost a sufficient statistic for our problem, if the ARX-model truncation error is small enough (later, this will be treated formally).
Then, we will use $\hat{\eta}^{n,\text{ls}}_N$ and $R^n_N$ instead of data to estimate the model of interest.

The second step implements this as follows.
Re-write~\eqref{eq:truearxpoly} as
\begin{equation}
\begin{aligned}
C_\nul(q)A_\nul(q) - D_\nul(q) &= 0  , \\
F_\nul(q)B_\nul(q) - L_\nul(q)A_\nul(q) &= 0 .
\end{aligned}
\label{eq:CA-D=0andFB-LA=0} 
\end{equation}
Then, \eqref{eq:CA-D=0andFB-LA=0} can be expanded as
\begin{subequations}
\begin{equation}
\begin{multlined}
(1+c_1^\nul q^{\minus 1}+\cdots+c_{m_c}^\nul q^{\minus m_c}) \Big(1+\sum_{k=1}^\infty a_k^\nul q^{\minus k}\Big) \\
- (1+d_1^\nul q^{\minus 1}+\cdots+d_{m_d}^\nul q^{\minus m_d}) = 0 ,
\end{multlined} \label{eq:CA-D=0_exp}
\end{equation}
\begin{equation}
\begin{multlined}
(1+f_1^\nul q^{\minus 1}+\cdots+f_{m_f}^\nul q^{\minus m_f}) \sum_{k=1}^\infty b_k^\nul q^{\minus k} \\
 - (l_1^\nul q^{\minus 1}+\cdots+l^\nul_{m_l}q^{\minus m_l}) \Big(1+\sum_{k=1}^\infty a_k^\nul q^{\minus k}\Big) = 0 . 
\end{multlined} \label{eq:FB-LA=0_exp}
\end{equation}
\label{eq:CA-D=0andFB-LA=0_exp}
\end{subequations}
To express $\theta_\nul$ in terms of $\eta_\nul$, we can do so in vector form.
Because a power-series product (e.g., $ \sum_{k=0}^\infty \alpha_k q^{\minus k} \sum_{k=0}^\infty \beta_k q^{\minus k} $) can be written as the (Toeplitz-)matrix-vector product
\begin{equation}
\begin{bmatrix}
\begin{matrix}
\alpha_0 &         \\
\alpha_1 & \alpha_0
\end{matrix} \!\! &
\begin{matrix}
\mathlarger{ \mathlarger{ 0 } }
\end{matrix} \\
\begin{matrix}
\alpha_2 & \alpha_1 \\
\vdots   & \ddots
\end{matrix} \!\! & 
\begin{matrix}
\alpha_0 &   \\
\ddots   & \ddots
\end{matrix}
\end{bmatrix}
\begin{bmatrix}
\beta_0 \\ \beta_1 \\ \beta_2 \\ \vdots
\end{bmatrix}=
\begin{bmatrix}
\begin{matrix}
\beta_0 &         \\
\beta_1 & \beta_0
\end{matrix} \!\! &
\begin{matrix}
\mathlarger{ \mathlarger{ 0 } }
\end{matrix} \\
\begin{matrix}
\beta_2 & \beta_1 \\
\vdots   & \ddots
\end{matrix} \!\!& 
\begin{matrix}
\beta_0 &   \\
\ddots   & \ddots
\end{matrix}
\end{bmatrix}
\begin{bmatrix}
\alpha_0 \\ \alpha_1 \\ \alpha_2 \\ \vdots
\end{bmatrix},
\label{eq:Toeplitzform}
\end{equation}
we may write~\eqref{eq:CA-D=0andFB-LA=0_exp} as (keeping the first $n$ equations)
\begin{equation}
\eta^n_\nul - Q_n(\eta^n_\nul) \theta_\nul = 0 ,
\label{eq:eta-Qtheta=0}
\end{equation}
with $\theta_\nul$ defined by~\eqref{eq:theta} evaluated at the true parameters and
\begin{equation}
Q_n(\eta^n) =
\begin{bmatrix}
0 & 0 & -Q^c_n(\eta^n) & Q^d_n \\
-Q^f_n(\eta^n) & Q^l_n(\eta^n) & 0 & 0
\end{bmatrix} ,
\label{eq:Qeta}
\end{equation}
where, when evaluated at the true parameters $\eta^n_\nul$,
\begin{equation}
\begin{aligned}
&Q^c_n(\eta^n_\nul) = \mathcal{T}_{n,m_c}(A(q,\eta_\nul)), \;\;\; 
Q^l_n(\eta^n_\nul) = \mathcal{T}_{n,m_l}(A(q,\eta_\nul)), \;\;\; \\
&Q^f_n(\eta^n_\nul) = \mathcal{T}_{n,m_f}(B(q,\eta_\nul)), \;\;\;
Q^d_n = 
\begin{bmatrix}
I_{m_d,m_d} \\ 0_{n-m_d,m_d}
\end{bmatrix} .
\end{aligned}
\label{eq:Q0blocksdef}
\end{equation}
Motivated by~\eqref{eq:eta-Qtheta=0}, we replace $\eta^n_\nul$ by its estimate $\hat{\eta}^{n,\text{ls}}_N$, obtaining an over-determined system of equations, which may be solved for $\theta$ using, for example, least squares:
\begin{equation}
\hat{\theta}_N^\text{LS} = \left( Q_n^\top(\hat{\eta}^{n,\text{ls}}_N) Q_n(\hat{\eta}^{n,\text{ls}}_N) \right)^{\minus 1} Q_n^\top(\hat{\eta}^{n,\text{ls}}_N) \hat{\eta}^{n,\text{ls}}_N .
\label{eq:theta_ls}
\end{equation}
In~\eqref{eq:theta_ls}, invertibility follows from convergence of $\hat{\eta}^{n,\text{ls}}_N$ to $\eta^n_\nul$, which is of larger dimension than $\theta$, and the block-Toeplitz structure of $Q_n(\eta^n)$ (this is treated formally in Lemma~\ref{thm:ls_inv}).

With~\eqref{eq:theta_ls}, we have not accounted for the residuals in~\eqref{eq:eta-Qtheta=0} when $\hat{\eta}^{n,\text{ls}}_N$ replaces $\eta^n_\nul$.
The third step remedies this by re-estimating $\theta$ in a statistically sound way.

For some $\eta^n$, and using the same logic as~\eqref{eq:Toeplitzform}, we can write~\eqref{eq:eta-Qtheta=0} as
\begin{equation}
\eta^n - Q_n(\eta^n) \theta_\nul = T_n(\theta_\nul) (\eta^n-\eta^n_\nul) =: \delta_n(\eta^n,\theta_\nul).
\label{eq:Tetadiff}
\end{equation}
where
\begin{equation}
T_n(\theta) = 
\begin{bmatrix}
T^c_n(\theta) & 0 \\
-T^l_n(\theta) & T^f_n(\theta)
\end{bmatrix},
\label{eq:T0}
\end{equation}
with $T^c_n(\theta_\nul) = \mathcal{T}_{n,n}(C(q,\theta_\nul))$, $T^l_n(\theta_\nul) = \mathcal{T}_{n,n}(L(q,\theta_\nul))$, and $T^f_n(\theta_\nul) = \mathcal{T}_{n,n}(F(q,\theta_\nul))$.
The objective is then to estimate $\theta$ that minimizes the residuals $\delta_n(\hat{\eta}^{n,\text{ls}}_N,\theta)$.
If we neglect the bias error from truncation of the ARX model, which should be close to zero for sufficiently large $n$, we have that, approximately,
\begin{equation}
\sqrt{N} \big(\hat{\eta}^{n,\text{ls}}_N-\eta^n_\nul\big) \sim As\mathcal{N}\big(0,\sigma_\nul^2 \left[\bar{R}^n\right]^{\minus 1} \big).
\label{eq:eta_dist}
\end{equation}
Then, using~\eqref{eq:eta_dist} and~\eqref{eq:Tetadiff}, we may write that, approximately,
\begin{equation}
\delta_n(\hat{\eta}^{n,\text{ls}}_N) \sim As\mathcal{N} \left(0,T_n(\theta_\nul)\sigma_\nul^2[\bar{R}^n]^{\minus 1}T^\top_n(\theta_\nul)\right).
\label{eq:Tetadiff_dist}
\end{equation}
Because the residuals we want to minimize, given by $\delta_n(\hat{\eta}^{n,\text{ls}}_N,\theta)=\hat{\eta}^{n,\text{ls}}_N-Q_n(\hat{\eta}^{n,\text{ls}}_N)\theta$, are asymptotically distributed by~\eqref{eq:Tetadiff_dist}, the estimate of $\theta$ with minimum variance is given by the weighted least-squares estimate
\begin{equation}
\hat{\theta}^{\text{WLS}_\nul}_N \! = \! \left( Q_n^\top(\hat{\eta}^{n,\text{ls}}_N) \bar{W}_n(\theta_\nul) Q_n(\hat{\eta}^{n,\text{ls}}_N) \right)^{\minus 1} \!\!\! Q_n^\top(\hat{\eta}^{n,\text{ls}}_N) \bar{W}_n(\theta_\nul) \hat{\eta}^{n,\text{ls}}_N ,
\end{equation}
where the weighting matrix
\begin{equation}
\bar{W}_n(\theta_\nul) = \left( T_n(\theta_\nul) \sigma_\nul^2[\bar{R}^n]^{\minus 1} T^\top_n(\theta_\nul) \right)^{\minus 1} 
\label{eq:barW}
\end{equation}
is the inverse of the covariance of the residuals~\cite{kailath:le}.
Because $\theta_\nul$ and $\bar{R}^n$ are not available, we replace them by $\hat{\theta}^{\text{LS}}_N$ and $R^n_N$, respectively ($\sigma_\nul^2$ can be disregarded, because the weighting can be scaled arbitrarily without influencing the solution).
Thus, the third step consists in re-estimating $\theta$ by
\begin{equation}
\hat{\theta}^{\text{WLS}}_N \!\! = \! \left(\! Q_n^\top(\hat{\eta}^{n,\text{ls}}_N) W_n(\hat{\theta}^{\text{LS}}_N) Q_n(\hat{\eta}^{n,\text{ls}}_N) \!\right)^{\minus 1} \!\!\!\! Q_n^\top(\hat{\eta}^{n,\text{ls}}_N) W_n(\hat{\theta}^{\text{LS}}_N) \hat{\eta}^{n,\text{ls}}_N , 
\label{eq:theta_wls}
\end{equation}
where (we take the inverses of the matrices individually)
\begin{equation}
W_n(\hat{\theta}^{\text{LS}}_N) 
= T_n^{\minus \top} (\hat{\theta}^{\text{LS}}_N) R^n_N T_n^{\minus 1} (\hat{\theta}^{\text{LS}}_N)  ,
\label{eq:Wls}
\end{equation}
with $T_n(\thetals)$ obtained using~\eqref{eq:T0}.
Invertibility in~\eqref{eq:theta_wls} follows from (besides what was mentioned for Step 2) invertibility of $W_n(\thetals)$, which in turn follows from the lower-Toeplitz structure of $T_n(\theta)$, convergence of $\thetals$ to $\theta_\nul$ and $R^n_N$ to $\bar{R}^n$ (this is treated formally in Lemmas~\ref{thm:wls_inv} and~\ref{lem:wls_inv_stoch}).
Because $\thetals$ is a consistent estimate of $\theta_\nul$ with an error decaying sufficiently fast, using $\thetals$ in the weighting should not change the asymptotic properties of $\thetawls$. 
Analogously to taking one Gauss-Newton iteration for a maximum likelihood cost function initialized at a strongly consistent estimate, $\thetawls$ is an asymptotically efficient estimate, as will be proven in the next section.

In summary, WNSF consists of the following three steps:
\begin{enumerate}
\item estimate a high-order ARX model with least squares~\eqref{eq:eta_ls};
\item reduce the high-order ARX model to the model of interest with least squares~\eqref{eq:theta_ls};
\item re-estimate the model of interest by weighted least squares~\eqref{eq:theta_wls} using the weighting~\eqref{eq:Wls}.
\end{enumerate}
Two notes can be made about this procedure.
First, the objective of the second step is to obtain a consistent estimate to construct the weighting; hence, the choice of least squares is arbitrary, and weighted least squares with any invertible weighting (e.g., $W_n = R^n_N$) can be used.
Second, although $\hat{\theta}^{\text{WLS}}_N$ is asymptotically efficient, it is possible to continue iterating, which may improve the estimate for finite sample size.

\subsection*{Other Settings}

Despite having been presented for a fully parametrized SISO BJ model, we point out that the method is flexible in parametrization.
For example, it is possible to fix some parameters in $\theta$ if they are known, or to impose linear relations between parameters.
Hence, other common model structures (e.g., OE, ARMA, ARMAX) may also be used, as well as multi-input multi-output (MIMO) versions of such structures.
The requirement is that a relation between the high- and low-order parameters can be written in the form~\eqref{eq:eta-Qtheta=0}.

Moreover, a parametric noise model does not need to be estimated.
In this case, disregard~\eqref{eq:CA-D=0_exp} and consider only~\eqref{eq:FB-LA=0_exp}.
The subsequent steps can then be derived similarly.
This approach is presented in detail and analyzed in \cite{spWNSF}.
In open loop, it provides asymptotically efficient estimates of the dynamic model; in closed loop, the estimates are consistent and with asymptotic covariance corresponding to~\eqref{eq:Mcl}.

\section{Asymptotic Properties}
\label{sec:prop}

We now turn to the asymptotic analysis of WNSF.
Here, we make a distinction between the main algorithm presented here and the aforementioned case without a low-order noise model estimate.
Although apparently simpler because of the smaller dimension of the problem, the case without a noise-model estimate requires additional care in the analysis.
The reason is that the corresponding $T_n(\theta)$ in that case will not be square.
Then, inverting the weighting as in \eqref{eq:Wls} (a relation that will be used for the analysis in this paper) will not be valid, requiring another approach.
Including in this paper under-parametrized noise models is then not possible for space concerns.
Thus, the asymptotic analysis in this paper considers the dynamic and noise models correctly parametrized, in which case the algorithm is consistent and asymptotically efficient.
The case with an under-parametrized noise model (in particular, the limit case where \eqref{eq:CA-D=0_exp} is neglected and no noise-model is estimated) is considered in~\cite{spWNSF}.

Because the ARX model~\eqref{eq:ARXregression} is a truncation of the true system~\eqref{eq:truearx}, its estimate (and the respective covariance) will not be a sufficient statistic for finite order, and some information will be lost in this step.
Then, we need to make sure that, as $N$ grows, the truncation error will be sufficiently small so that, asymptotically, no information is lost.
To keep track of the truncation error in the analysis (see appendices), we let the model order $n$ depend on the sample size $N$---denoting ${n=n(N)}$---according to the following assumption.

\begin{assumption}[ARX-model order]
\label{ass:ARXorder}
It holds that
\begin{itemize}
\item [D1.] $n(N)\to\infty$, as $N\to\infty$;
\item [D2.] $n^{4+\delta}(N)/N \to 0$, for some $\delta>0$, as $N\to\infty$.
\end{itemize}
\end{assumption}
\noindent
Condition D1 implies that, as the sample size $N$ tends to infinity, so does the model order $n$.
Condition D2 establishes a maximum rate at which the model order $n$ is allowed to grow, as we cannot use too high order compared with the number of observations.
A consequence of Condition D2 is that~\cite{ljung&wahlberg92}
\begin{equation}
n^2(N)\log(N)/N \to 0, \text{ as } N\to\infty ,
\label{eq:D3}
\end{equation}
Moreover, defining
$
d(N) := \textstyle{\sum_{k=n(N)+1}^\infty} \abs{a_k^\nul} + \abs{b_k^\nul},
$
we have
\begin{equation}
\sqrt{N}d(N)\to 0, \text{ as } N\to\infty ,
\label{eq:D4}
\end{equation}
as consequence of stability and rational description of the true system in Assumption~\ref{ass:model_truesystem}.
Although~\eqref{eq:D3} and~\eqref{eq:D4} follow from other assumptions, they are stated explicitly as they will be required to show our theoretical results.

To facilitate the statistical analysis, the results in this section consider, instead of~\eqref{eq:eta_ls}, a regularized estimate
\begin{equation}
\hat{\eta}^n_N := \hat{\eta}^{n,\text{reg}}_N = [R^n_\text{reg}(N)]^{\minus 1} r^n_N ,
\label{eq:eta_regls}
\end{equation}
where
\begin{equation}
R^n_\text{reg}(N) =
\left\{
\begin{array}{ll}
R^n_N & \text{if } ||[R^n_N]^{\minus 1}||<2/\delta , \\
R^n_N + \frac{\delta}{2} I_{2n} & \text{otherwise} ,
\end{array}
\right.
\end{equation}
for some small $\delta>0$.
Asymptotically, the first and second order properties of $\hat{\eta}^{n,\text{ls}}_N$ and $\hat{\eta}^n_N$ are identical~\cite{ljung&wahlberg92}.

When we let $n=n(N)$ according to Assumption~\ref{ass:ARXorder}, we use $\hat{\eta}_N := \hat{\eta}^{n(N)}_N$.
We will also denote $\bar{\eta}^{n(N)}$ and $\eta^{n(N)}_\nul$, defined in~\eqref{eq:etahat_conv} and~\eqref{eq:eta0}, respectively.
Concerning the matrices~\eqref{eq:Rbar}, \eqref{eq:Qeta}, \eqref{eq:T0}, \eqref{eq:barW}, and~\eqref{eq:Wls}, for notational simplicity we maintain the subscript $n$ even if $n=n(N)$. 

Some of the technical assumptions used in this paper differ from those used for the asymptotic analysis of PEM~\cite{ljung99}. 
For example, the bound in Assumption~\ref{ass:noise} is stronger than what is required for PEM.
On the other hand, for PEM the parameter vector $\theta$ is required to belong to a compact set, which is not imposed here.
However, such differences in technical assumptions have little relevance in practice.

We have the following result for consistency $\hat{\theta}_N^\text{LS}$. 

\begin{theorem}
\label{thm:consistencyLS}
Let Assumptions~\ref{ass:model_truesystem}, \ref{ass:noise}, \ref{ass:input}, and~\ref{ass:ARXorder} hold, and and $\hat{\theta}_N^\text{LS}$ be defined by~\eqref{eq:theta_ls}.
Then, 
\begin{equation}
\hat{\theta}_N^\text{LS} \to \theta_\nul, \wpone .
\end{equation}
Moreover, we have that
\begin{equation}
||\hat{\theta}^\text{LS}_N-\theta_\nul|| = \mathcal{O} \bigg( \sqrt{n(N)\frac{\log N}{N}}\big(1+d(N)\big) \bigg) .
\label{eq:thetalsdecay}
\end{equation}

\end{theorem}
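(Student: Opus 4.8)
The plan is to reduce everything to the already-known convergence rate of the high-order ARX estimate and to propagate that rate through two uniformly bounded linear maps. Writing $Q_n := Q_n(\hat\eta^{n,\text{ls}}_N)$ for brevity and using that $(Q_n^\top Q_n)^{-1}Q_n^\top$ is a left inverse of $Q_n$, I would start from
\begin{equation*}
\hat{\theta}_N^\text{LS} - \theta_\nul = (Q_n^\top Q_n)^{-1} Q_n^\top \big( \hat\eta^{n,\text{ls}}_N - Q_n \theta_\nul \big).
\end{equation*}
The residual is handled exactly, not approximately: the first $n$ rows of \eqref{eq:CA-D=0andFB-LA=0_exp} involve only the entries of $\eta^n_\nul$, so \eqref{eq:eta-Qtheta=0} holds with no truncation error, and the bilinear identity \eqref{eq:Tetadiff} gives $\hat\eta^{n,\text{ls}}_N - Q_n(\hat\eta^{n,\text{ls}}_N)\theta_\nul = T_n(\theta_\nul)\big(\hat\eta^{n,\text{ls}}_N - \eta^n_\nul\big)$. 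Taking norms,
\begin{equation*}
\norm{\hat{\theta}_N^\text{LS} - \theta_\nul} \le \norm{(Q_n^\top Q_n)^{-1} Q_n^\top}\,\norm{T_n(\theta_\nul)}\,\norm{\hat\eta^{n,\text{ls}}_N - \eta^n_\nul},
\end{equation*}
and it remains to bound the two matrix factors uniformly in $n$.

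For $T_n(\theta_\nul)$ in \eqref{eq:T0}, the blocks are Toeplitz matrices generated by the fixed, finite-degree polynomials $C_\nul$, $L_\nul$, $F_\nul$; the matrix is therefore banded with bandwidth independent of $n$, and its spectral norm is bounded uniformly, for instance by the $\ell^1$-norms of the coefficient vectors (equivalently the sup of the block symbol on the unit circle), so $\norm{T_n(\theta_\nul)} \le C_T$.

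The main obstacle is the pseudoinverse factor, for which $\norm{(Q_n^\top Q_n)^{-1}Q_n^\top} = 1/\sigma_{\min}(Q_n)$. I would first obtain a lower bound $\sigma_{\min}(Q_n(\eta^n_\nul)) \ge c_0 > 0$ that is \emph{uniform} in $n$; this is precisely the content of the invertibility analysis in Lemma~\ref{thm:ls_inv}, and it rests on the coprimeness of $(L_\nul,F_\nul)$ and of $(C_\nul,D_\nul)$ in Assumption~\ref{ass:model_truesystem}, which keeps the Sylvester/resultant structure of $Q_n$ in \eqref{eq:Qeta} non-degenerate as its dimension grows. I would then transfer this bound to $Q_n(\hat\eta^{n,\text{ls}}_N)$ by perturbation: since $Q_n$ is linear in $\eta^n$ with Toeplitz blocks whose number of columns equals the fixed dimension of $\theta$, one has $\norm{Q_n(\hat\eta^{n,\text{ls}}_N) - Q_n(\eta^n_\nul)} \le C \norm{\hat\eta^{n,\text{ls}}_N - \eta^n_\nul}$ (bounding each block's spectral norm by its Frobenius norm), which tends to zero w.p.1. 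Weyl's inequality then yields $\sigma_{\min}(Q_n(\hat\eta^{n,\text{ls}}_N)) \ge c_0/2$ for $N$ large enough, w.p.1.

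Combining the three factors gives $\norm{\hat{\theta}_N^\text{LS} - \theta_\nul} \le (2C_T/c_0)\,\norm{\hat\eta^{n,\text{ls}}_N - \eta^n_\nul}$ w.p.1, so the statement follows once I insert the convergence rate for the high-order estimate from \cite{ljung&wahlberg92}, namely $\norm{\hat\eta^{n,\text{ls}}_N - \eta^n_\nul} = \mathcal{O}\big(\sqrt{n(N)\log N/N}\,(1+d(N))\big)$, valid under Assumptions~\ref{ass:noise}, \ref{ass:input}, and \ref{ass:ARXorder}, with the order conditions ensuring via \eqref{eq:D4} that the truncation bias is dominated by the variance term. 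This proves \eqref{eq:thetalsdecay}, and strong consistency is immediate because Assumption~\ref{ass:ARXorder} forces the right-hand side to zero. I expect the uniform-in-$n$ coercivity of $Q_n$ to be the delicate point, since the dimension of $Q_n$ grows with $N$ while its smallest singular value must stay bounded away from zero; the rest is routine propagation of a known bound through uniformly bounded linear maps.
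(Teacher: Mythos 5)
Your proposal is correct and follows essentially the same route as the paper's proof: the same exact residual identity $\hat\eta^{n}_N - Q_n(\hat\eta^{n}_N)\theta_\nul = T_n(\theta_\nul)(\hat\eta^{n}_N-\eta^n_\nul)$, the same uniform bound on $\norm{T_n(\theta_\nul)}$, the coprimeness-based invertibility of the limiting Gram matrix (Lemma~\ref{thm:ls_inv}) transferred to $Q_n(\hat\eta^n_N)$ by perturbation, and the insertion of the Ljung--Wahlberg rate with the truncation bias $d(N)$ dominated via~\eqref{eq:D4}. The only cosmetic differences are that you phrase the Gram-matrix control through $\sigma_{\min}(Q_n)$ and Weyl's inequality rather than through convergence $M(\hat\eta_N)\to M(\eta_\nul)$ and eigenvalue continuity, and that the paper formally carries out the analysis for the regularized estimate~\eqref{eq:eta_regls}, whose asymptotic properties coincide with those of~\eqref{eq:eta_ls}.
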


\begin{proof}
See Appendix~\ref{app:ls_consistency_proof}.
\end{proof}

We have the following result for consistency of $\hat{\theta}_N^\text{WLS}$.

\begin{theorem}
\label{thm:consistencyWLS}
Let Assumptions~\ref{ass:model_truesystem}, \ref{ass:noise}, \ref{ass:input}, and~\ref{ass:ARXorder} hold, and $\hat{\theta}_N^\text{WLS}$ be defined by~\eqref{eq:theta_wls}.
Then, 
\begin{equation}
\hat{\theta}_N^\text{WLS} \to \theta_\nul, \wpone .
\end{equation}
\end{theorem}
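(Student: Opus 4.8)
The plan is to follow the same two-stage strategy as in the proof of Theorem~\ref{thm:consistencyLS}: first isolate the estimation error as a product of matrix factors, then show that each factor stays bounded uniformly in the growing order $n=n(N)$ while the residual ARX error tends to zero. Abbreviating $Q_n:=Q_n(\hat{\eta}_N)$ and $W_n:=W_n(\thetals)$, I would rewrite the normal-equation form of~\eqref{eq:theta_wls} and subtract $\theta_\nul$ to obtain
\begin{equation}
\thetawls-\theta_\nul = \left(Q_n^\top W_n Q_n\right)^{-1} Q_n^\top W_n \left(\hat{\eta}_N - Q_n(\hat{\eta}_N)\theta_\nul\right).
\end{equation}
The decisive simplification is that the residual on the right is precisely the quantity in~\eqref{eq:Tetadiff}: because the first $n$ rows of the polynomial identities~\eqref{eq:CA-D=0andFB-LA=0_exp} involve only the first $n$ coefficients of $A_\nul$ and $B_\nul$, the relation $\hat{\eta}_N - Q_n(\hat{\eta}_N)\theta_\nul = T_n(\theta_\nul)(\hat{\eta}_N-\eta^n_\nul)$ holds identically, with no separate truncation remainder beyond the ARX error $\hat{\eta}_N-\eta^n_\nul$ itself. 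Submultiplicativity then gives
\begin{equation}
\norm{\thetawls-\theta_\nul} \le \norm{\left(Q_n^\top W_n Q_n\right)^{-1}} \, \norm{Q_n} \, \norm{W_n} \, \norm{T_n(\theta_\nul)} \, \norm{\hat{\eta}_N-\eta^n_\nul},
\end{equation}
and it remains to control each factor.

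The second step exploits the \emph{banded} lower-triangular Toeplitz structure of $T_n(\theta)$, whose nonzero diagonals are the finitely many coefficients of $C(q,\theta)$, $L(q,\theta)$, and $F(q,\theta)$: consequently both $\norm{T_n(\theta)}$ and the Lipschitz bound $\norm{T_n(\theta)-T_n(\theta')}\le C\norm{\theta-\theta'}$ are independent of $n$. Since $C_\nul$ and $F_\nul$ are minimum phase (inverse stability of $H_\nul$ and stability of $G_\nul$ in Assumption~\ref{ass:model_truesystem}), the inverse symbols are stable and $\norm{T_n^{-1}(\theta_\nul)}\le C$ uniformly in $n$; by continuity this extends to a fixed neighborhood of $\theta_\nul$, which by Theorem~\ref{thm:consistencyLS} eventually contains \thetals\ w.p.1, yielding $\norm{T_n^{-1}(\thetals)}\le C$ --- this is the content of Lemmas~\ref{thm:wls_inv} and~\ref{lem:wls_inv_stoch}. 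Combined with $\norm{R^n_N}\le C$ and, from coercivity of the spectrum (Assumption~\ref{ass:input}) via $R^n_N\to\bar{R}^n$, the eventual bound $\lambda_{\min}(R^n_N)\ge c>0$, relation~\eqref{eq:Wls} gives $\norm{W_n}\le C$ and $\lambda_{\min}(W_n)\ge c>0$. The factor $\norm{Q_n}$ is bounded because its tall Toeplitz blocks have columns built from $\hat{\eta}_N$, whose squared entrywise sums remain bounded, converging to those of the summable true sequences $\{a_k^\nul\}$ and $\{b_k^\nul\}$. Finally $Q_n^\top W_n Q_n \succeq \lambda_{\min}(W_n)\,\sigma_{\min}^2(Q_n)\,I$, and the uniform lower bound $\sigma_{\min}(Q_n)\ge c>0$ --- a consequence of the block-Toeplitz structure, the no-common-factor assumption, and $\hat{\eta}_N\to\eta_\nul$, established for the least-squares step in Lemma~\ref{thm:ls_inv} --- makes $\norm{(Q_n^\top W_n Q_n)^{-1}}\le C$, all w.p.1 for $N$ large.

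With the four matrix factors bounded uniformly in $n$, consistency reduces to $\norm{\hat{\eta}_N-\eta^n_\nul}\to 0$. I would split this into the variance part $\hat{\eta}_N-\bar{\eta}^{n}$ and the truncation-bias part $\bar{\eta}^{n}-\eta^{n}_\nul$; by the results of~\cite{ljung&wahlberg92} underlying the whole analysis, the former is $\mathcal{O}(\sqrt{n(N)\log N/N})$, which vanishes by Condition~D2 of Assumption~\ref{ass:ARXorder} (indeed by~\eqref{eq:D3}), and the latter is dominated by $d(N)$, which vanishes by~\eqref{eq:D4}. Hence $\norm{\hat{\eta}_N-\eta^n_\nul}\to 0$ and therefore $\thetawls\to\theta_\nul$ w.p.1. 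The main obstacle is not the algebraic decomposition but the \emph{uniformity in the growing order $n=n(N)$} of the bounds in the second step --- above all, guaranteeing that the data-dependent weighting $W_n(\thetals)$, which embeds the random estimate \thetals\ inside the $n\times n$ inverse $T_n^{-1}(\thetals)$, stays invertible with an $n$-independent bound. This is exactly where the finite bandwidth of $T_n$ and the minimum-phase property of $C_\nul$ and $F_\nul$ are indispensable, and where the convergence rate~\eqref{eq:thetalsdecay} from Theorem~\ref{thm:consistencyLS} is used to confirm that \thetals\ lies in the required neighborhood.
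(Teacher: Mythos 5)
Your proposal is correct, and its skeleton coincides with the paper's proof: the same exact identity~\eqref{eq:Tetadiff} (which, as you note, holds with no extra truncation remainder because the first $n$ equations of~\eqref{eq:CA-D=0andFB-LA=0_exp} involve only the first $n$ coefficients of $A_\nul,B_\nul$), the same submultiplicative bound, the same $n$-uniform bounds on $\norm{T_n(\theta_\nul)}$, $\norm{T_n^{\minus 1}(\thetals)}$, $\norm{Q_n(\hat{\eta}_N)}$, $\norm{R^n_N}$, $\norm{W_n(\thetals)}$, and the same final split of $\norm{\hat{\eta}_N-\eta^{n(N)}_\nul}$ into a variance part and a bias part $d(N)$. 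Where you genuinely diverge is the invertibility of the normal matrix $M=Q_n^\top W_n Q_n$: the paper proves it by showing $M(\hat{\eta}_N,\thetals)\to\bar{M}(\eta_\nul,\theta_\nul)$ w.p.1 (Lemma~\ref{lem:wls_inv_stoch}, via the product-convergence Proposition~\ref{prop:Xdiff->0}) and then identifying, through a frequency-domain computation, $\bar{M}=M_\text{CR}$, which is invertible for an informative experiment (Lemma~\ref{thm:wls_inv}); you instead use the direct spectral bound $M\succeq\lambda_{\min}(W_n)\,\sigma_{\min}^2(Q_n)\,I$, lower-bounding $\sigma_{\min}(Q_n)$ via Lemma~\ref{thm:ls_inv} together with $M(\hat{\eta}_N)\to M(\eta_\nul)$, and $\lambda_{\min}(W_n)\geq\lambda_{\min}(R^n_N)/\norm{T_n(\thetals)}^2$ via coercivity. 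Your route is more elementary and self-contained for consistency alone---it never needs the limit of $M$ to be computed, only bounded away from singularity---whereas the paper's identification $\bar{M}=M_\text{CR}$ is work it must do anyway for Theorem~\ref{thm:asycov}, so within the paper it comes for free and simultaneously prepares the covariance result. Two caveats on your version: first, the step $\lambda_{\min}(R^n_N)\geq c>0$ uniformly in $n=n(N)$ requires $\bar{R}^n\succeq\delta' I$ for all $n$, which follows from Assumption~\ref{ass:input}(v) only after arguing that the closed-loop transfer matrix mapping $[r_t\;\,e_t]^\top$ to the regressor signals is boundedly invertible on the unit circle (its determinant is $-H_\nul/(1+KG_\nul)$, nonzero there); the paper never states this fact explicitly---it is inherited from \cite{ljung&wahlberg92} and the regularization in~\eqref{eq:eta_regls}---so your proof should supply it. Second, a labeling slip: the uniform bounds on $\norm{T_n^{\minus 1}(\theta_\nul)}$ and $\norm{T_n^{\minus 1}(\thetals)}$ are not ``the content of Lemmas~\ref{thm:wls_inv} and~\ref{lem:wls_inv_stoch}''; they are the unlabeled estimates~\eqref{eq:Tinv0_bounded} and~\eqref{eq:invTls_bounded} in the appendix, while those two lemmas concern precisely the convergence-to-$M_\text{CR}$ argument that your approach replaces.
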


\begin{proof}
See Appendix~\ref{app:wls_consistency_proof}.
\end{proof}

We have the following result for asymptotic distribution and covariance of $\hat{\theta}_N^\text{WLS}$.

\begin{theorem}
\label{thm:asycov}
Let Assumptions~\ref{ass:model_truesystem}, \ref{ass:noise}, \ref{ass:input}, and~\ref{ass:ARXorder} hold, and $\hat{\theta}_N^\text{WLS}$ be defined by~\eqref{eq:theta_wls}.
Then, 
\begin{equation}
\sqrt{N}(\hat{\theta}_N^\text{WLS} - \theta_\nul) \sim As\mathcal{N}(0,\sigma_\nul^2 M_{CR}^{\minus 1}) ,
\end{equation}
where
$M_{CR}$ is given by~\eqref{eq:CR}.
\end{theorem}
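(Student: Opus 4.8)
The plan is to reduce the weighted least-squares estimator to an explicit linear functional of the high-order ARX error $\hat{\eta}_N-\eta^n_\nul$, to show that this functional is asymptotically insensitive to replacing the estimated quantities by their deterministic limits, and finally to evaluate the resulting covariance as $n(N)\to\infty$. First I would subtract $\theta_\nul$ from~\eqref{eq:theta_wls} and exploit the identity~\eqref{eq:Tetadiff}, namely $\hat{\eta}_N-Q_n(\hat{\eta}_N)\theta_\nul=T_n(\theta_\nul)(\hat{\eta}_N-\eta^n_\nul)$, which holds for any argument because of the (block-)Toeplitz structure. Writing $\hat{Q}_n:=Q_n(\hat{\eta}_N)$ and $\hat{W}_n:=W_n(\thetals)$, this gives
\[
\sqrt{N}(\thetawls-\theta_\nul)=\big(\hat{Q}_n^\top\hat{W}_n\hat{Q}_n\big)^{-1}\hat{Q}_n^\top\hat{W}_n\,T_n(\theta_\nul)\,\sqrt{N}(\hat{\eta}_N-\eta^n_\nul),
\]
with invertibility supplied by Lemmas~\ref{thm:wls_inv} and~\ref{lem:wls_inv_stoch}. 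This isolates all the randomness of the estimator in the last factor.

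Next I would show that the random gain $G_n:=(\hat{Q}_n^\top\hat{W}_n\hat{Q}_n)^{-1}\hat{Q}_n^\top\hat{W}_n T_n(\theta_\nul)$ may be replaced by the deterministic $\bar{G}_n:=(Q_n^\top\bar{W}_n Q_n)^{-1}Q_n^\top\bar{W}_n T_n(\theta_\nul)$, evaluated at $\eta^n_\nul$ and $\theta_\nul$, without affecting the limit. This needs $\|\hat{Q}_n-Q_n(\eta^n_\nul)\|$ bounded by the decay of $\hat{\eta}_N-\eta^n_\nul$ and $\|\hat{W}_n-\bar{W}_n(\theta_\nul)\|$ bounded by the decay of $\thetals-\theta_\nul$ from Theorem~\ref{thm:consistencyLS}. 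Since $\sqrt{N}(\hat{\eta}_N-\eta^n_\nul)$ is $\mathcal{O}(\sqrt{n})$ in norm, the cross term $(G_n-\bar{G}_n)\sqrt{N}(\hat{\eta}_N-\eta^n_\nul)$ is of order $\sqrt{n}\cdot\mathcal{O}(\sqrt{n\log N/N})$, which vanishes precisely by the rate~\eqref{eq:D3}.

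I would then split $\hat{\eta}_N-\eta^n_\nul=(\hat{\eta}_N-\bar{\eta}^{n})+(\bar{\eta}^{n}-\eta^n_\nul)$. The second, deterministic, term is the ARX truncation bias; propagated through $\bar{G}_n$ it must be shown to be $\nul(1/\sqrt{N})$, which is exactly where $\sqrt{N}d(N)\to0$ from~\eqref{eq:D4} enters. The first term is a sum of martingale differences driven by $\{e_t\}$ (Assumption~\ref{ass:noise}); because $n=n(N)\to\infty$, the central-limit step concerns a growing-dimensional object, so I would apply a martingale CLT to its fixed-dimensional image $\bar{G}_n\sqrt{N}(\hat{\eta}_N-\bar{\eta}^{n})$, using the moment bound in Assumption~\ref{ass:noise} and the high-order ARX asymptotics of~\cite{ljung&wahlberg92}. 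This yields asymptotic normality with zero mean.

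Finally I would evaluate the limiting covariance. With the optimal weighting realized in the limit, $\bar{W}_n(\theta_\nul)=(T_n\sigma_\nul^2[\bar{R}^n]^{-1}T_n^\top)^{-1}$ from~\eqref{eq:barW}, the sandwich collapses and
\[
\lim_{N\to\infty}\mathrm{Cov}\big(\sqrt{N}(\thetawls-\theta_\nul)\big)=\sigma_\nul^2\lim_{n\to\infty}\big(Q_n^\top(\eta^n_\nul)\,T_n^{-\top}(\theta_\nul)\,\bar{R}^n\,T_n^{-1}(\theta_\nul)\,Q_n(\eta^n_\nul)\big)^{-1}.
\]
It then remains to prove that the matrix in parentheses converges to $M_{CR}$ of~\eqref{eq:CR}: here $T_n^{-1}Q_n$ and $\bar{R}^n$ are Toeplitz sections generated by the stable rational factors $1/C_\nul$, $1/F_\nul$, $A_\nul$, $B_\nul$ and by the spectrum of $[\,u_t\ e_t\,]^\top$, and a Szeg\H{o}/Parseval-type limit turns the matrix product into $\tfrac{1}{2\pi}\int_{-\pi}^{\pi}\Omega\,\Phi_z\,\Omega^*\,d\omega$, with $\Omega$ as in~\eqref{eq:Omega}. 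I expect the main obstacle to be not this spectral identification but the preceding control of the substitution and truncation errors uniformly in the growing dimension $n(N)$, so that each is $\nul(1/\sqrt{N})$ after projection; this is the delicate balancing of $n(N)$ against $N$ that Assumption~\ref{ass:ARXorder} and the rate~\eqref{eq:thetalsdecay} are designed to enable, and that the appendices for Theorems~\ref{thm:consistencyLS}--\ref{thm:consistencyWLS} set up.
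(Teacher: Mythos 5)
Your overall architecture coincides with the paper's: the same reduction via \eqref{eq:Tetadiff}, the same program of swapping the stochastic gain for a deterministic one, the same split of $\hat{\eta}_N-\eta^{n(N)}_\nul$ into a noise part and a truncation bias killed by $\sqrt{N}d(N)\to 0$, the growing-dimension CLT (Proposition~\ref{thm:Upsilon}) applied to a fixed-dimensional projection, and the spectral identification of the limiting matrix with $M_\text{CR}$ (the paper's Lemma~\ref{thm:wls_inv}). Your sandwich-collapse covariance formula is also consistent with the paper's conventions, since $\lim_{n\to\infty} Q_n^\top(\eta^n_\nul) T_n^{\minus\top}(\theta_\nul)\bar{R}^n T_n^{\minus 1}(\theta_\nul) Q_n(\eta^n_\nul)=M_\text{CR}$ is exactly what Lemma~\ref{thm:wls_inv} establishes.

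However, there is a genuine gap in the rate analysis of your gain-replacement step, and it sits exactly where the paper has to work hardest. You bound $\norm{\hat{W}_n-\bar{W}_n(\theta_\nul)}$ ``by the decay of $\thetals-\theta_\nul$,'' but by \eqref{eq:Wls} and \eqref{eq:barW} this difference also contains $R^n_N-\bar{R}^n$, whose norm by Proposition~\ref{lemma:RnN-barRn} decays only as $\mathcal{O}\big(n\sqrt{\log N/N}+n^2/N\big)$ --- a factor $\sqrt{n}$ worse than the rate \eqref{eq:thetalsdecay} you rely on. Moreover, $\sqrt{N}\,\norm{\hat{\eta}_N-\eta^{n(N)}_\nul}$ is $\mathcal{O}(\sqrt{n\log N})$ by Proposition~\ref{thm:hateta-bareta}, not $\mathcal{O}(\sqrt{n})$ as you state. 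With the correct rates, the substitution errors are of order $n\log N/\sqrt{N}$ (from the $Q_n$ and $T_n^{\minus 1}$ replacements) and $n^{3/2}\log N/\sqrt{N}$ (from the $R^n_N$ replacement), and neither is implied by \eqref{eq:D3}: for instance, $n\log N/\sqrt{N}=\sqrt{(n^2\log N/N)\log N}$ can diverge even when $n^2\log N/N\to 0$. This is precisely why the paper invokes the full strength of Condition D2 of Assumption~\ref{ass:ARXorder}, i.e.\ $n^{4+\delta}(N)/N\to 0$, in \eqref{eq:nlogN/sqrtN->0} and \eqref{eq:n3/2logN/sqrtN->0}; your claim that \eqref{eq:D3} suffices would, taken literally, make the argument appear to work under a weaker order condition than it actually requires. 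The repair is mechanical --- redo the bookkeeping term by term as in Proposition~\ref{prop:SSlemma}, using Propositions~\ref{thm:hateta-bareta} and~\ref{lemma:RnN-barRn} together with D2 --- but as written this step fails.
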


\begin{proof}
See Appendix~\ref{app:asymp_cov_proof}.
\end{proof}

Theorem~\ref{thm:asycov} implies, comparing with~\eqref{eq:PEMcov}, that WNSF has the same asymptotic properties as PEM. For Gaussian noise, this corresponds to an asymptotically efficient estimate.

\section{Simulation Studies}
\label{sec:sim}

In this section, we perform simulation studies and discuss practical issues.
First, we illustrate the asymptotic properties of the method.
Second, we consider how to choose the order of the non-parametric model.
Third, we exemplify with two difficult scenarios for PEM how WNSF can be advantageous in terms of robustness against convergence to non-global minima and convergence speed.
Fourth, we perform a simulation with random systems to test the robustness of the method compared with other state-of-the-art methods.

Although WNSF and the approach in~\cite{MORSM} are different algorithms, they share the similarities of using high-order models and iterative least squares. 
However, \cite{MORSM} is only applicable in open loop.
Here, to differentiate WNSF as a more general approach that is applicable in open or closed loop without changing the algorithm, we focus on the typically more challenging closed-loop setting, for which many standard methods are not consistent.

\subsection{Illustration of Asymptotic Properties}
\label{subsec:sim:fixedsys}

The first simulation has the purpose of illustrating that the method is asymptotically efficient.
Here, we consider only the case where we estimate a correct noise model (the case where a low-order noise model is not estimated is illustrated in \cite{spWNSF}).
We perform open- and closed-loop simulations, where the closed-loop data are generated by
\begin{equation}
\begin{aligned}
u_t &= \frac{1}{1+K(q)G_\nul(q)} r_t  -  \frac{K(q)H_\nul(q)}{1+K(q)G_\nul(q)} e_t, \\
y_t  &=  \frac{G_\nul(q)}{1+K(q)G_\nul(q)} r_t  + \frac{H_\nul(q)}{1+K(q)G_\nul(q)} e_t, 
\end{aligned}
\label{eq:Kbelow_CL}
\end{equation}
and the open-loop data by
\begin{equation}
u_t = \frac{1}{1+K(q) G_\nul(q)} r_t  , \qquad
y_t = G_\nul(q) u_t  + H_\nul(q) e_t,
\end{equation}
where $\{r_t\}$ and $\{e_t\}$ are independent Gaussian white sequences with unit variance, $K(q)=1$, and
\begin{equation}
G_\nul(q) = \frac{q^{\minus 1} + 0.1q^{\minus 2}}{1 -0.5 q^{\minus 1} + 0.75 q^{\minus 2}} , \quad H_\nul(q) = \frac{1 + 0.7 q^{\minus 1}}{1 - 0.9 q^{\minus 1}} .
\label{eq:sim_G_H}
\end{equation}

We perform 1000 Monte Carlo runs, with sample sizes $N\in\{300,600,1000,3000,6000,10000\}$.
We apply WNSF with an ARX model of order 50 with the open- and closed-loop data.
Performance is evaluated by the mean-squared error of the estimated parameter vector of the dynamic model,
$\text{MSE} = ||\hat{\bar{\theta}}^\text{WLS}_N-\bar{\theta}_\nul||^2$, where $\bar{\theta}$ contains only the elements of $\theta$ contributing to $G(q,\theta)$.
As this simulation has the purpose of illustrating asymptotic properties, initial conditions are zero and assumed known---that is, the sums in~\eqref{eq:RnN} start at $t=1$ instead of $t=n+1$.

The results are presented in Fig.~\ref{fig:asymp_eff}, with the average MSE over 1000 Monte Carlo runs plotted as function of the sample size (closed loop in solid line, open loop in dash-dotted line), where we also plot the corresponding CR bounds (closed loop in dashed line, open loop in dotted line).
The respective CR bounds are attained as the sample size increases.

\begin{figure}
\centering
%
%
\definecolor{mycolor1}{rgb}{0.09804,0.32941,0.65098}%
\definecolor{mycolor3}{rgb}{0.61569,0.06275,0.17647}%
\definecolor{mycolor2}{rgb}{0.38431,0.57255,0.18039}%
\definecolor{mycolor4}{rgb}{0.98039,0.72549,0.09804}%
\definecolor{mycolor5}{rgb}{0.84706,0.32941,0.59216}%
\definecolor{mycolor6}{rgb}{0.39608,0.39608,0.42353}%
\begin{tikzpicture}

\begin{axis}[%
width=.8\figurewidth,
height=.5\figureheight,
at={(0.758in,0.352in)},
scale only axis,
xmode=log,
xmin=300,
xmax=10000,
xminorticks=false,
ymode=log,
ymin=1e-04,
ymax=0.05,
yminorticks=false,
xlabel=$N$,
ylabel=MSE,
ytick={1e-4,1e-3,1e-2},
axis x line=bottom,
axis y line=left,
x label style={at={(axis description cs:.5,0)},anchor=west,font=\footnotesize},
y label style={at={(axis description cs:.06,.5)},anchor=south,font=\footnotesize},
axis background/.style={fill=white},
legend style={legend cell align=left,align=left,draw=white!15!black}
]

\addplot [color=black,solid]
  table[row sep=crcr]{%
300	0.022135320748126\\
600	0.00317826404394755\\
1000	0.00131233978376296\\
3000	0.000359190859135027\\
6000	0.000176358722996808\\
10000	9.93249915139521e-05\\
};


\addplot [color=black,dashdotted]
  table[row sep=crcr]{%
300	0.0242051513897697\\
600	0.00574969826869674\\
1000	0.00269644052860984\\
3000	0.000663632281260718\\
6000	0.000329012724544283\\
10000	0.000184520327609484\\
};

\addplot [color=black,dashed]
  table[row sep=crcr]{%
300	0.00341966666666667\\
600	0.00170983333333333\\
1000	0.0010259\\
3000	0.000341966666666667\\
6000	0.000170983333333333\\
10000	0.00010259\\
};

\addplot [color=black,dotted]
  table[row sep=crcr]{%
300	0.006524\\
600	0.003262\\
1000	0.0019572\\
3000	0.0006524\\
6000	0.0003262\\
10000	0.00019572\\
};

\end{axis}
\end{tikzpicture}%
\caption{Illustration of asymptotic properties: CR bounds in closed loop (dashed) and open loop (dotted), and average MSE for the dynamic-model parameter estimates as function of sample size obtained with WNSF in closed loop (solid) and open loop (dash-dotted).}
\label{fig:asymp_eff}
\end{figure}
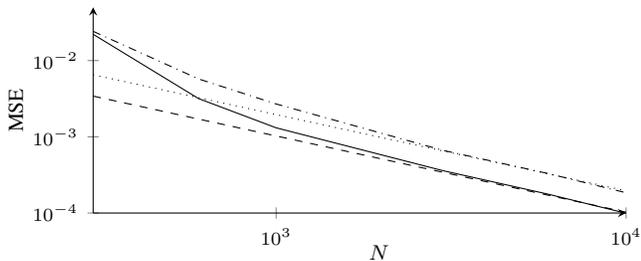

\subsection{Practical Issues}
\label{subsec:sim:practical}

In the previous simulation, an ARX model of order 50 was estimated in the first step.
Although the order of this model should, in theory, tend to infinity at some maximum rate to attain efficiency (Assumption~\ref{ass:ARXorder}), a fixed order was sufficient to illustrate the asymptotic properties of WNSF in this particular scenario.
This suggests that when the number of data samples increases, a non-parametric model of fixed order with sufficiently low bias error may be enough for practical purposes.
However, for fixed sample size, the question remains on how to choose the most appropriate non-parametric model order: a too small $n$ will introduce bias, and a too large $n$ will introduce unnecessary noise in the non-parametric model estimate, which may affect the accuracy of the parametric model estimate.
Some previous knowledge about the speed of the system may help in choosing this order, but the most appropriate value may also depend on sample size and signal-to-noise ratio.
In this paper, we use the PEM cost function~\eqref{eq:J} as criterion to choose $n$: we compute $\hat{\theta}_N^\text{WLS}$ for several $n$, and choose the estimate that minimizes~\eqref{eq:J}.

Also, $\hat{\theta}^\text{WLS}_N$ need not be used as final estimate, as, for finite sample size, performance may improve by iterating.
However, because WNSF does not minimize the cost function~\eqref{eq:J} explicitly, it is not guaranteed that subsequent iterations correspond to a lower cost-function value than previous ones.
Here, we will also use the cost function~\eqref{eq:J} as criterion to choose the best model among the iterations performed.

\subsection{Comparison with PEM}
\label{subsec:sim:PEMcomp}

One of the main limitations of PEM is the non-convex cost function, which may make the method sensitive to the initialization point.
Here, we provide examples illustrating how WNSF may be a more robust method than PEM regarding initialization: in cases where the PEM cost function is highly non-convex, WNSF may require less iterations and be more robust against convergence to non-global minima.

We consider a system where $H_\nul(q)=1$, $K(q)=0.3$, and
\begin{equation}
G_\nul(q) = \frac{1.0 q^{-1}-1.2q^{-2}}{1-2.5q^{-1}+2.4q^{-2}-0.88q^{-3}},
\end{equation}
with data generated according to~\eqref{eq:Kbelow_CL}, where 
\begin{equation}
r_t = \frac{1+0.7q^{-1}}{1-0.9q^{-1}} r_t^w ,
\end{equation}
with $\{e_t\}$ and $\{r_t^w\}$ Gaussian white noise sequences with variances 4 and 0.25, respectively.
The sample size is $N=2000$.
We estimate an OE model with the following algorithms:
\begin{itemize}
\item WNSF with a non-parametric model of order $n=250$;
\item PEM with default MATLAB initialization and Gauss-Newton (GN) algorithm;
\item PEM with default MATLAB initialization (MtL) and Levenberg-Marquardt (LM) algorithm;
\item WNSF with a non-parametric model of order $n=250$, where the weighting matrix, instead of being initialized with $\hat{\theta}_N^\text{LS}$~\eqref{eq:theta_ls}, is initialized with the default MATLAB initialization (MtL);
\item PEM initialized with $\hat{\theta}_N^\text{LS}$ (LS) and the GN algorithm;
\item PEM initialized with $\hat{\theta}_N^\text{LS}$ (LS) and the LM algorithm;
\item PEM initialized at the true parameters (true).
\end{itemize}
All the methods use a maximum of 100 iterations, but stop early upon convergence (default settings for PEM, $10^{\minus 4}$ as tolerance for the normalized relative change in the parameter estimates) and initial conditions are zero.

Performance is evaluated by the FIT of the impulse response of the estimated OE model $G(q,\hat{\theta}_N^\text{WLS})$, given in percent by
\begin{equation}
\text{FIT} = 100\left(1 -\frac{\norm{g_\nul-\hat{g}}}{\norm{g_\nul-\text{mean}(g_\nul)}}\right) ,
\label{eq:fit}
\end{equation}
where $g_\nul$ is a vector with the impulse response parameters of $G_\nul(q)$, and similarly for $\hat{g}$ but for the estimated model.
In~\eqref{eq:fit}, sufficiently long impulse responses are taken to make sure that the truncation of their tails does not affect the FIT.

The average FITs for 100 Monte Carlo runs are shown in Table~\ref{tbl:meanFITfinaliteration}.
For PEM, the results depend on the optimization method and the initialization point: as consequence of the non-convexity of PEM, the algorithms do not always converge to the global optimum.
For PEM implementations, the average FIT is the same as for PEM started at the true parameters only with default MATLAB initialization and LM algorithm.
For WNSF, the average FIT is the same as for PEM started at the true parameters independently of the initialization point used in the weighting matrix, suggesting robustness to different initial weighting matrices.

\begin{table}
  \caption{Comparison with PEM: average FITs with different methods (Meth) and initializations (Init). }
  \label{tbl:meanFITfinaliteration}
  \begin{center}
    \begin{tabular}{ l | *{5}{c} }
      \diagbox{Meth}{Init} & MtL & LS & true \\
      \hline
      WNSF    & 98   & 98  & --  \\
      PEM GN   & 74   & 87  & 98   \\
      PEM LM    & 98   & 85  & 98   \\
      \hline
    \end{tabular}
  \end{center}
  \vspace{-1.3em}
\end{table}

In this simulation, PEM was most robust with the LM algorithm and the default MATLAB initialization, having on average the same accuracy as WNSF.
Then, it is appropriate to compare the performance of these methods by iteration when WNSF also is initialized with the same parameter values.
In Fig.~\ref{fig:iterations}, we plot the average FITs for these methods as function of the maximum number of iterations.
Here, WNSF reaches an average FIT of $98$ after two iterations, while PEM with LM takes 20 iterations to reach the same value.
This suggests that, even if WNSF and some PEM implementation start and converge to the same value, WNSF may do it faster than standard optimization methods for PEM.

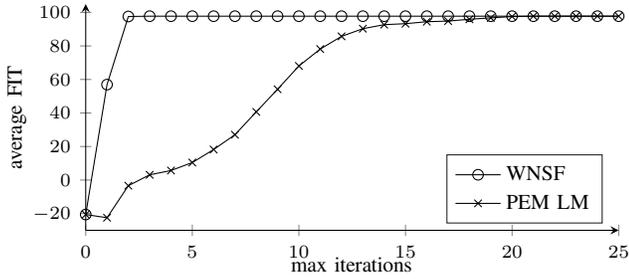
\begin{figure}
\begin{tikzpicture}
  \begin{axis}[%
    width=.8\figurewidth,
    height=.55\figureheight,
    scale only axis,
    xlabel={max iterations},
    axis x line=bottom,
    axis y line=left,
    ymin=-30,
    ymax=105,
    xmax=25,
    xmin=0,
    ylabel={average FIT},
    axis background/.style={fill=white},
    x label style={at={(axis description cs:.5,-.05)},anchor=south,font=\footnotesize},
    y label style={at={(axis description cs:.08,.5)},anchor=south,font=\footnotesize},
    legend style={at={(0.97,0.03)},anchor=south east,legend cell align=left,align=left,draw=white!15!black,font=\footnotesize},
    ytick={-20,0,20,40,60,80,100},
  ]
    \addplot [color=black,solid,mark=o] table[col sep=comma]{Data/iterations_wnsf.csv};
    \addlegendentry{WNSF};
    
    \addplot [color=black,solid,mark=x] table[col sep=comma]{Data/iterations_pem.csv};
    \addlegendentry{PEM LM};

  \end{axis}
\end{tikzpicture}%
\caption{Comparison with PEM: average FIT from 100 Monte Carlo runs function of the maximum number of iterations.}
\label{fig:iterations}
\end{figure}

The robustness of WNSF against convergence to non-global minima compared with different instances of PEM can be even more evident than in Table~\ref{tbl:meanFITfinaliteration}, as WNSF seems to be appropriate for modeling systems with many resonant peaks, for which the PEM cost function can be highly non-linear.
Take the example in Fig.~\ref{fig:peakssys}, based on 100 Monte Carlo runs for a system with
\begin{equation}
\begin{aligned}
&L_\nul(q) = q^{-1}-3.4q^{-2}+4.8q^{-3}-3.3q^{-4}+0.96q^{-5}, \\
&\begin{multlined}
F_\nul(q) = 1-5.4q^{-1}+13.5q^{-2}-20.1q^{-3}+19.5q^{-4}\\-12.1q^{-5}+4.5q^{-6},
\end{multlined}
\end{aligned}
\end{equation}
and data generated according to~\eqref{eq:Kbelow_CL} with $K(q)=-0.05$,
\begin{equation}
r_t = \frac{0.05}{1-0.99q^{-1}}r_t^w,
\end{equation}
where $\{r_t^w\}$ and $\{e_t\}$ are Gaussian white sequences with unit variance.
Here, initial conditions are not assumed zero: PEM estimates initial conditions by backcasting and WNSF uses the approach in~\cite{galrinho15}.
In this scenario, PEM with the LM algorithm and default initialization fails in most runs to find the global optimum. 
Subspace methods, often used to avoid the non-convexity of PEM, may not help in this scenario: SSARX~\cite{jansson03}, a subspace method that is consistent in closed loop, provides an average FIT around $20\%$ (default MATLAB implementation).
Here, WNSF with $n$ between 100 and 600 spaced with intervals of 50 performs similarly to PEM initialized at the true parameters, accurately capturing the resonance peaks of the system.

\subsection{Random Systems}
\label{subsec:sim:randsys}

In order to test the robustness of the method, we now perform a simulation with random systems.
Also, closed-loop data often introduces additional difficulties: for example, many standard methods are not consistent.
Thus, we perform a simulation with these settings and compare the performance of WNSF with other methods available in the Mathworks MATLAB System Identification Toolbox.
For a fair comparison, we only use methods that are consistent in closed loop and only use input and output data.
From the subspace class, we use SSARX, as this method is competitive with other subspace algorithms such as CVA~\cite{larimore83,petsch96} and N4SID~\cite{vanodem94}, while it is consistent in closed loop~\cite{jansson03}.
IV methods are not considered, as they in closed loop require the reference signal to construct the instruments.

For the simulation, we use 100 systems with structure
\begin{equation}
G_\nul(q) = \frac{l_1^\nul q^{\minus 1} + \cdots + l_4^\nul q^{\minus 4}}{1+f_6^\nul q^{\minus 6} + \cdots + f_m^\nul q^{\minus 6}} . 
\end{equation}
As we have observed, PEM may have difficulties with slow resonant systems: therefore, it is for this class of systems that WNSF may be most beneficial.
With this purpose, we generate the polynomial coefficients in the following way.
The poles are located in an annulus with the radius uniformly distributed between $0.88$ and $0.98$, and the phase uniformly distributed between 0 and $90^\circ$ (and respective complex conjugates).
One pair of zeros is generated in the same way, and a third real zero is uniformly distributed between $-1.2$ and $1.2$ (this allows for non-minimum-phase systems).
The noise models have structure
\begin{equation}
H_\nul(q) = \frac{1 + c_1^\nul q^{\minus 1} + c_2^\nul q^{\minus 2}}{1+d_1^\nul q^{\minus 1} + d_2^\nul q^{\minus 2}}, 
\end{equation}
with the poles and zeros having uniformly distributed radius between 0 and 0.95, and uniformly distributed phase between 0 and 180$^\circ$ (and respective complex conjugates).

The data are generated in closed loop by 
\begin{equation}
\begin{aligned}
u_t &= \frac{K(q)}{1+K(q)G_\nul(q)} r_t  -  \frac{K(q)H_\nul(q)}{1+K(q)G_\nul(q)} e_t , \\
y_t  &=  \frac{K(q)G_\nul(q)}{1+K(q)G_\nul(q)} r_t  + \frac{H_\nul(q)}{1+K(q)G_\nul(q)} e_t, \\
\end{aligned}
\end{equation}
where
\begin{equation}
r_t = \frac{1-1.273q^{-1}+0.81q^{-2}}{1-1.559q^{-1}+0.81q^{-2}} r_t^w
\end{equation}
with $\{r_t^w\}$ a Gaussian white-noise sequence with unit variance, $\{e_t\}$ a Gaussian white-noise sequence with the variance chosen such that the signal-to-noise ratio (SNR) is
\begin{equation}
\text{SNR} = \frac{ \sum_{t=1}^N \left[ \frac{K(q)G_\nul(q)}{1+K(q)G_\nul(q)} r_t \right]^2 }{ \sum_{t=1}^N \left[ H_\nul(q) e_t \right]^2 } = 2,
\end{equation}
and the controller $K(q)$ is obtained using a Youla-parametrization to have an integrator and a closed-loop transfer function that has the same poles as the open loop except that the radius of the slowest open-loop pole pair is reduced by 80\%.
The sample size is $N=2000$ and we perform 100 Monte Carlo runs (one for each system; different noise realizations).

We compare the following methods:
\begin{itemize}
\item PEM initialized at the true parameters (PEMt);
\item PEM with default MATLAB initialization (PEMd);
\item SSARX with the default MATLAB options;
\item WNSF using the approach in Section~\ref{subsec:sim:practical} to choose $n$ from the grid $\{50,100,150,200,250,300\}$.
\item PEM initialized with WNSF (PEMw).
\end{itemize}
All methods estimate a fully parametrized noise model.
We use the MATLAB2016b System Identification Toolbox implementation of SSARX and PEM.
For PEM, the optimization algorithm is LM.
For SSARX, the horizons are chosen automatically by MATLAB, based on the Akaike Information Criterion.
WNSF and PEM use a maximum of 100 iterations, but stop earlier upon convergence (default settings for PEM, $10^{\minus 4}$ as tolerance for the normalized relative change in the parameter estimates.
PEM estimates initial conditions by backcasting and WNSF truncates them (\cite{galrinho15} does not apply to BJ models).

The FITs obtained in this simulation are presented in Fig.~\ref{fig:randomsys}.
In this scenario, PEM with default MATLAB initialization (PEMd) often fails to find a point close to the global optimum, which can be concluded by comparison with PEM initialized at the true parameters (PEMt).
Also, SSARX is not an alternative for achieving better performance.
WNSF can be an appropriate alternative, failing only once to provide an acceptable estimate, and having otherwise a performance close to the practically infeasible PEMt.
The estimate obtained with WNSF may be used to initialize PEM.
This provides a small improvement only, suggesting that the estimates obtained with WNSF are already close to a (local) minimum of the PEM cost function.

\begin{figure}
\pgfplotsset{compat=1.14}
\begin{tikzpicture}
  \begin{axis}[
    boxplot/draw direction=y,
    width=\figurewidth,
    height=.85\figureheight,
    ymin  = -30,
    x axis line style={opacity=0},
    axis x line*=bottom,
    axis y line=left,
    enlarge y limits,
    xtick style={draw=none},
    xticklabels={,,PEMd,SSARX,WNSF,PEMw,PEMt},
    ylabel=FIT,
    label style={font=\small},
    tick label style={font=\small},
    ytick={-25,0,25,50,75,100},
  ]
    \addplot [boxplot,mark=x] table[y index=0]{Data/randomsys_pem.csv};
    \addplot [boxplot,mark=x] table[y index=0]{Data/randomsys_ssarx.csv};
    \node at (axis cs:2,-26) [anchor=north] {\scriptsize{$\downarrow 3$}};
    \addplot [boxplot,mark=x] table[y index=0]{Data/randomsys_wnsf.csv};
    \addplot [boxplot,mark=x] table[y index=0]{Data/randomsys_pem_wnsf.csv};
    \addplot [boxplot,mark=x] table[y index=0]{Data/randomsys_pem_true.csv};
  \end{axis}
\end{tikzpicture}
\caption{Random systems: FITs from 100 Monte Carlo runs.}
\label{fig:randomsys}
\end{figure}
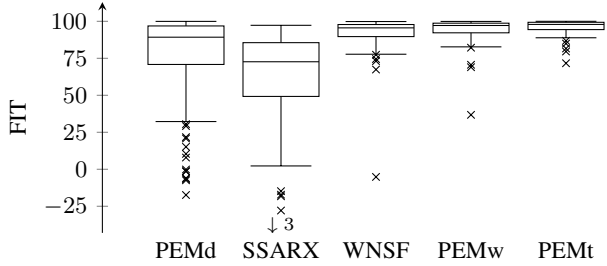

\section{Conclusion}

Methods for parameter estimation based on an intermediate unstructured model have a long history in system identification (e.g., \cite{wahlberg89,zhu_book,jansson03,bjsm}).
Here, we believe to have taken a significant step further in this class of methods, with a method that is flexible in parametrization and provides consistent and asymptotically efficient estimates in open and closed loop without using a non-convex optimization or iterations.

In this paper, we provided a theoretical and experimental analysis of this method, named weighted null-space fitting (WNSF).
Theoretically, we showed that the method is consistent and asymptotically efficient for stable Box-Jenkins systems.
Experimentally, we performed Monte Carlo simulations, comparing PEM, subspace, and WNSF under settings where PEM typically performs poorly.
The simulations suggest that WNSF is competitive with these methods, being a viable alternative to PEM or to provide initialization points for PEM.

Although WNSF was here presented for SISO BJ models, it was also pointed out that the flexibility in parametrization allows for a wider range of structures to be used, as well as for incorporating structural information (e.g., fixing specified parameters).
Moreover, based on the analysis in~\cite{spWNSF}, WNSF does not require a parametric noise model to achieve asymptotic efficiency in open loop and consistency in closed loop.

An extension that was not covered in this paper is the MIMO case, where subspace or IV methods are typically used~\cite{stoicajansson_mimo}, as PEM often has difficulty with estimation of such systems.
Based on the theoretical foundation provided in this contribution, this important extension is already in preparation. 
Future work includes also extensions to dynamic networks and non-linear model structures.

\appendices

\section{Auxiliary Results}

In this appendix, we present some results that will be applied in the remainder of the paper.

\begin{prop}
\label{lemma:etabar-eta0}
Let Assumptions~\ref{ass:model_truesystem}, \ref{ass:noise}, and \ref{ass:input} hold.
Also, let $\bar{\eta}^n$ be defined by \eqref{eq:etahat_conv} and $\eta^n_\nul$ by \eqref{eq:eta0}. 
Then,
\begin{equation}
\textstyle
\norm{\bar{\eta}^n-\eta^n_\nul} \leq C \sum_{k=n+1}^\infty \abs{a^\nul_k} + \abs{b^\nul_k} \to 0, \text{ as } n\to\infty.
\end{equation}
\end{prop}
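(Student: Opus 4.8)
The plan is to derive an explicit formula for the asymptotic bias $\bar{\eta}^n-\eta^n_\nul$ by confronting the normal equations that define $\bar{\eta}^n$ with the true data-generating equation. Writing the true system~\eqref{eq:truearx} in regression form but keeping \emph{all} lags, we have $y_t=(\varphi_t^n)^\top\eta^n_\nul + w_t + e_t$, where the truncation residual is $w_t:=\sum_{k>n}\big(b_k^\nul u_{t-k}-a_k^\nul y_{t-k}\big)$. Since $\varphi_t^n$ is $\mathcal{F}_{t-1}$-measurable (true both in open and closed loop), Assumption~\ref{ass:noise} gives $\bar{\mathbb{E}}[\varphi_t^n e_t]=0$, so that $\bar{r}^n=\bar{\mathbb{E}}[\varphi_t^n y_t]=\bar{R}^n\eta^n_\nul+\bar{\mathbb{E}}[\varphi_t^n w_t]$. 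Combined with $\bar{R}^n\bar{\eta}^n=\bar{r}^n$ from~\eqref{eq:etahat_conv}, this yields the key identity
\begin{equation}
\bar{\eta}^n-\eta^n_\nul = \big[\bar{R}^n\big]^{\minus 1}\,\bar{\mathbb{E}}\big[\varphi_t^n w_t\big].
\label{eq:biasformula}
\end{equation}

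It then remains to bound the two factors in~\eqref{eq:biasformula}. For the first, I would argue that $\norm{[\bar{R}^n]^{\minus 1}}\leq C$ \emph{uniformly in} $n$: this is the persistent-excitation content of the coercivity condition (Assumption~\ref{ass:input}, item v), since $\bar{R}^n$ is the covariance of the stacked regressor and its smallest eigenvalue is bounded below by the essential infimum of the spectral density of $[u_t\;\;y_t]^\top$, which is coercive because the spectrum of $[r_t\;\;e_t]^\top$ is. For the second factor I expand $\bar{\mathbb{E}}[\varphi_t^n w_t]=\sum_{k>n}\big(b_k^\nul\,\bar{\mathbb{E}}[\varphi_t^n u_{t-k}]-a_k^\nul\,\bar{\mathbb{E}}[\varphi_t^n y_{t-k}]\big)$ and take norms:
\begin{equation}
\norm{\bar{\mathbb{E}}[\varphi_t^n w_t]}\leq \sum_{k>n}\Big(\abs{b_k^\nul}\,\norm{\bar{\mathbb{E}}[\varphi_t^n u_{t-k}]}+\abs{a_k^\nul}\,\norm{\bar{\mathbb{E}}[\varphi_t^n y_{t-k}]}\Big).
\label{eq:crossbound}
\end{equation}
The crux is that each vector norm on the right is bounded by a constant \emph{independent of $n$ and $k$}. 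Indeed, the squared norm $\norm{\bar{\mathbb{E}}[\varphi_t^n u_{t-k}]}^2$ equals $\sum_{i=1}^n\big(R_{yu}(i-k)^2+R_{uu}(i-k)^2\big)$, which is dominated by the full sums $\sum_\tau R_{yu}(\tau)^2+\sum_\tau R_{uu}(\tau)^2$; these are finite by Parseval, as $\{u_t\}$ and $\{y_t\}$ are quasi-stationary with bounded (rational) spectra, so their correlation sequences are square-summable. The same holds with $u_{t-k}$ replaced by $y_{t-k}$. Substituting into~\eqref{eq:crossbound} gives $\norm{\bar{\mathbb{E}}[\varphi_t^n w_t]}\leq C\sum_{k>n}(\abs{a_k^\nul}+\abs{b_k^\nul})$, and plugging this and the uniform resolvent bound into~\eqref{eq:biasformula} delivers the claimed inequality. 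The final convergence to zero is immediate: $A_\nul=1/H_\nul$ and $B_\nul=G_\nul/H_\nul$ are stable and rational (Assumption~\ref{ass:model_truesystem}), so $\sum_k(\abs{a_k^\nul}+\abs{b_k^\nul})<\infty$ and its tail vanishes.

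The main obstacle I anticipate is precisely the norm estimate in the previous paragraph: a naive entrywise bound on $\bar{\mathbb{E}}[\varphi_t^n w_t]$ treats each of the $2n$ coordinates separately and produces a spurious factor $\sqrt{n}$, which would be fatal for the clean $\ell_1$-tail bound. The remedy is to resist bounding coordinatewise and instead exploit square-summability of the correlations to control the \emph{whole} vector $\bar{\mathbb{E}}[\varphi_t^n u_{t-k}]$ uniformly in $n$ before summing the tail coefficients; this is the one place where the spectral (as opposed to purely time-domain) structure of the signals is essential. The remaining ingredients---the martingale orthogonality of the noise and the coercivity-based lower bound on $\bar{R}^n$---are standard and enter only as cited structural facts.
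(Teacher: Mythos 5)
Your proof is correct, but it is a genuinely different route from the paper's: the paper disposes of this proposition in one line, citing Lemma~5.1 of the Ljung--Wahlberg (1992) reference for the bound and invoking the tail-decay property \eqref{eq:D4} (stability and rationality of $A_\nul$, $B_\nul$) for the convergence. You instead reconstruct the content of that cited lemma from first principles: the exact bias identity $\bar{\eta}^n-\eta^n_\nul=[\bar{R}^n]^{\minus 1}\bar{\mathbb{E}}[\varphi_t^n w_t]$ obtained from the martingale-difference property of $\{e_t\}$ (Assumption~\ref{ass:noise}), a uniform-in-$n$ bound on $\norm{[\bar{R}^n]^{\minus 1}}$ from coercivity (Assumption~\ref{ass:input}, item v), and the square-summability trick to bound $\norm{\bar{\mathbb{E}}[\varphi_t^n u_{t-k}]}$ uniformly in $n$ and $k$, correctly identifying and avoiding the spurious $\sqrt{n}$ factor that a coordinatewise bound would produce. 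What the paper's approach buys is brevity and the fact that its assumptions were deliberately matched to the cited reference; what yours buys is transparency about exactly where each hypothesis enters (orthogonality from the noise assumption, resolvent bound from coercivity, tail decay from stability). Two small points you gloss over, both repairable under the stated assumptions: (i) the step ``coercivity of $\Phi_{[r\;e]}$ implies coercivity of the regressor spectrum'' requires the closed-loop map from $[r_t\;\,e_t]^\top$ to $[y_t\;\,u_t]^\top$ to be uniformly nonsingular on the unit circle, which uses inverse stability of $H_\nul$ and stability of the closed loop (its determinant is, up to sign, $H_\nul/(1+K G_\nul)$); (ii) exchanging the infinite sum over $k>n$ with $\bar{\mathbb{E}}$ needs the absolute summability of $\{a_k^\nul\}$, $\{b_k^\nul\}$ together with uniformly bounded second moments, which hold here.
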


\begin{proof}
The result follows from \cite[Lemma 5.1]{ljung&wahlberg92} and~\eqref{eq:D4}.
\end{proof}

\begin{prop}
\label{thm:hateta-bareta}
Let Assumptions~\ref{ass:model_truesystem}, \ref{ass:noise}, \ref{ass:input}, and~\ref{ass:ARXorder} hold.
Also, let $\hat{\eta}_N:=\hat{\eta}^{n(N)}_N$ be defined by~\eqref{eq:eta_regls} and $\bar{\eta}^{n(N)}$ by \eqref{eq:etahat_conv}.
Then,
\begin{equation}
\big|\hspace{-1pt}\big|\hat{\eta}_N-\bar{\eta}^{n(N)}\big|\hspace{-1pt}\big| =
\mathcal{O}\bigg( \sqrt{\frac{n(N)\log N}{N}} [1+d(N)] \bigg) ,
\label{eq:hateta-bareta_rate}
\end{equation}
and
$
\big|\hspace{-1pt}\big|\hat{\eta}_N-\bar{\eta}^{n(N)}\big|\hspace{-1pt}\big| \to 0, \wpone.
$
\end{prop}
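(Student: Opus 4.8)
The plan is to reduce this to the high-order ARX rate estimates of \cite{ljung&wahlberg92}, after first separating the purely stochastic estimation error from the deterministic truncation error. Writing $R^n_{\text{reg}}(N)$, $R^n_N$, $r^n_N$ as in \eqref{eq:eta_regls} and \eqref{eq:RnN}, and using $\bar{r}^n = \bar{R}^n\bar{\eta}^{n(N)}$ from \eqref{eq:Rbar}--\eqref{eq:etahat_conv}, the starting point is the algebraic identity
\begin{equation}
\hat{\eta}_N - \bar{\eta}^{n(N)} = [R^n_{\text{reg}}(N)]^{-1}\big[ (r^n_N - \bar{r}^n) + (\bar{R}^n - R^n_{\text{reg}}(N))\,\bar{\eta}^{n(N)} \big],
\end{equation}
which expresses the error through the sample-moment deviations $r^n_N-\bar{r}^n$ and $\bar{R}^n - R^n_{\text{reg}}(N)$ (the regularization perturbs $R^n_N$ only when its inverse is large, so it may be folded into the latter deviation). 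Taking norms yields
\begin{equation}
\big\| \hat{\eta}_N - \bar{\eta}^{n(N)} \big\| \le \big\| [R^n_{\text{reg}}(N)]^{-1} \big\|\big( \| r^n_N - \bar{r}^n \| + \| \bar{R}^n - R^n_{\text{reg}}(N) \|\,\| \bar{\eta}^{n(N)} \| \big).
\end{equation}

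First I would dispose of the two easy factors. The regularized inverse is uniformly bounded, $\|[R^n_{\text{reg}}(N)]^{-1}\| \le 2/\delta$, directly by its construction; moreover, by coercivity (Assumption~\ref{ass:input}, item~v)) the regularization is inactive for $N$ large, so asymptotically $R^n_{\text{reg}}(N)=R^n_N$. The factor $\|\bar{\eta}^{n(N)}\|$ is bounded by a constant uniformly in $n$, since $\bar{\eta}^{n(N)}\to\eta_\nul$ with $\eta_\nul$ absolutely summable (Proposition~\ref{lemma:etabar-eta0} together with \eqref{eq:eta0}).

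The heart of the argument is the bound on the growing-dimension sample-moment deviations. Substituting the truncation $y_t = (\varphi^n_t)^\top\eta^n_\nul + \tilde{w}^n_t + e_t$ of \eqref{eq:truearx}, where $\tilde{w}^n_t := \sum_{k=n(N)+1}^\infty(-a^\nul_k y_{t-k} + b^\nul_k u_{t-k})$ collects the tail, splits $r^n_N - \bar{r}^n$ into a martingale noise term $\frac{1}{N}\sum \varphi^n_t e_t$, a truncation term involving $\tilde{w}^n_t$, and the deviation of $R^n_N$ acting on $\eta^n_\nul$. For the noise term, the $2n(N)$ components are each sums of martingale differences (since $\varphi^n_t\in\mathcal{F}_{t-1}$ and $\mathbb{E}[e_t|\mathcal{F}_{t-1}]=0$ by Assumption~\ref{ass:noise}), and the uniform-in-lag covariance and law-of-iterated-logarithm estimates of \cite{ljung&wahlberg92} give an $\ell^2$ norm of order $\sqrt{n(N)\log N/N}$ --- crucially $\sqrt{n}$ rather than $n$, with the $\sqrt{n}$ coming from the vector dimension and the $\sqrt{\log N/N}$ from each martingale sum. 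The same results bound the spectral norm $\| \bar{R}^n - R^n_N \|$ by the same rate, while the tail $\tilde{w}^n_t$ contributes terms controlled by $d(N)$, producing the multiplicative factor $[1+d(N)]$. Collecting the pieces yields \eqref{eq:hateta-bareta_rate}.

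Finally, the almost-sure convergence to zero is immediate from the rate: under Assumption~\ref{ass:ARXorder}, condition D2 gives \eqref{eq:D3}, hence $n(N)\log N/N\to 0$ and $\sqrt{n(N)\log N/N}\to 0$, while $d(N)\to 0$ by stability (cf.\ \eqref{eq:D4}), so the right-hand side of \eqref{eq:hateta-bareta_rate} tends to zero w.p.1. The main obstacle I anticipate is exactly establishing the $\sqrt{n}$ (rather than $n$) growth of $\|\frac{1}{N}\sum\varphi^n_t e_t\|$ and of $\|\bar{R}^n - R^n_N\|$ uniformly over all $n(N)$ lags simultaneously; this is where the strong-law and quasi-stationarity machinery of \cite{ljung&wahlberg92}, together with the tenth-moment bound in Assumption~\ref{ass:noise}, must be invoked in place of a naive entrywise summation, which would only give the too-large rate $n(N)\sqrt{\log N/N}$.
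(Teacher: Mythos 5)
Your overall plan---reduce everything to the uniform sample-covariance rates of \cite{ljung&wahlberg92} via the identity $\hat{\eta}_N-\bar{\eta}^{n(N)}=[R^n_{\text{reg}}(N)]^{-1}\big[(r^n_N-\bar{r}^n)+(\bar{R}^n-R^n_{\text{reg}}(N))\bar{\eta}^{n(N)}\big]$, with $\norm{[R^n_{\text{reg}}(N)]^{-1}}\leq 2/\delta$ by construction---is sound, and it is essentially a reconstruction of the result that the paper itself simply cites (the paper's proof of this proposition is one line: Theorem 5.1 of \cite{ljung&wahlberg92} for the rate, plus \eqref{eq:D3} and \eqref{eq:D4} for the almost-sure convergence). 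However, one step of your argument is wrong as stated: you claim that the spectral norm $\norm{\bar{R}^n-R^n_N}$ is bounded at the same rate $\sqrt{n(N)\log N/N}$ as the vector term. This contradicts the paper's own Proposition~\ref{lemma:RnN-barRn} (i.e., Lemma 4.1 of \cite{ljung&wahlberg92}), which gives only $\norm{R^{n(N)}_N-\bar{R}^{n(N)}}=\mathcal{O}\big(2n(N)\sqrt{\log N/N}+Cn^2(N)/N\big)$, and the techniques at hand give no better: the entries of $R^n_N-\bar{R}^n$ are uniformly $\mathcal{O}(\sqrt{\log N/N})$, and for a $2n\times 2n$ matrix this yields a spectral norm of order $n\sqrt{\log N/N}$, not $\sqrt{n\log N/N}$. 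Since your displayed inequality handles the second term by submultiplicativity, $\norm{\bar{R}^n-R^n_{\text{reg}}(N)}\,\norm{\bar{\eta}^{n(N)}}$, your route produces the rate $\mathcal{O}\big(n(N)\sqrt{\log N/N}\big)$, a factor $\sqrt{n(N)}$ worse than \eqref{eq:hateta-bareta_rate}; the proof as written does not deliver the proposition.

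The missing idea is that this term must be bounded as a vector, never through the spectral norm. Each component of $(\bar{R}^n-R^n_N)\bar{\eta}^{n(N)}$ is $\sum_j(\bar{R}^n-R^n_N)_{ij}\,\bar{\eta}^{n(N)}_j$, and since the entries of $\bar{R}^n-R^n_N$ are uniformly $\mathcal{O}(\sqrt{\log N/N})$ (plus edge effects of order $n(N)/N$ from starting the sums at $t=n+1$, which are negligible by \eqref{eq:D3}) while $\sum_j\abs{\bar{\eta}^{n(N)}_j}\leq C$ uniformly in $n$ (Proposition~\ref{lemma:etabar-eta0} together with the exponential decay of $\eta_\nul$), each component is $\mathcal{O}(\sqrt{\log N/N})$; the Euclidean norm of this $2n(N)$-dimensional vector is then $\mathcal{O}(\sqrt{n(N)\log N/N})$, as required. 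This also corrects the misdiagnosis in your closing paragraph: for vectors, the ``naive'' bound $\norm{x}\leq\sqrt{2n}\,\norm{x}_\infty$ already yields the $\sqrt{n}$ factor (it is exactly how your martingale term acquires its rate); the factor $n$ appears only if one insists on passing through the spectral norm of the matrix deviation, which is precisely the step that must be avoided.
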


\begin{proof}
For the first part, see \cite[Theorem 5.1]{ljung&wahlberg92}.
The second part follows from~\eqref{eq:D3} and~\eqref{eq:D4}.
\end{proof}

\begin{prop}
\label{lemma:RnN-barRn}
Let Assumptions~\ref{ass:model_truesystem}, \ref{ass:noise}, \ref{ass:input}, and~\ref{ass:ARXorder} hold.
Then,
\begin{equation}
\big|\hspace{-1pt}\big|R^{n(N)}_N-\bar{R}^{n(N)}\big|\hspace{-1pt}\big| = \mathcal{O} \left( 2 n(N) \sqrt{\frac{\log N}{N}} + C \frac{n^2(N)}{N}\right).
\end{equation}
\end{prop}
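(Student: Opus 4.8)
The plan is to bound the operator $2$-norm of the symmetric matrix $R^{n(N)}_N - \bar{R}^{n(N)}$ by controlling its individual entries, each of which is the difference between a sample second moment of the regressor signals and its limiting value. Since $R^n_N$ and $\bar{R}^n$ are both symmetric (sums/limits of outer products), the difference is symmetric, so I would first invoke the standard inequality $\norm{A}\le\sqrt{\norm{A}_1\norm{A}_\infty}$ together with $\norm{A}_1=\norm{A^\top}_\infty=\norm{A}_\infty$ to get $\norm{A}\le\norm{A}_\infty$. This reduces the problem to bounding the maximum absolute row sum of $R^{n(N)}_N-\bar{R}^{n(N)}$, whose rows each contain $2n(N)$ entries; it therefore suffices to bound a single entry uniformly and multiply by $2n(N)$.

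Next, writing $v_t:=[\,y_t\;\;u_t\,]^\top$, I would observe that every entry of $R^n_N$ in the $yy$, $yu$, $uy$, or $uu$ block has the form $\frac{1}{N}\sum_{t=n+1}^N v^{(a)}_{t-i}v^{(b)}_{t-j}$, i.e.\ a sample covariance of $\{v_t\}$ at lag $\tau=i-j$ taken over the window $t\in\{n+1,\dots,N\}$, while the matching entry of $\bar{R}^n$ is the corresponding limit $R_{vv}(\tau)$. I would split each entry difference as (i) the gap between this windowed sum and the aligned sample covariance $R^N_{vv}(\tau)$ from the definition (which sums from $t=\tau+1$), plus (ii) the gap $R^N_{vv}(\tau)-R_{vv}(\tau)$.

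For piece (ii) I would use that $\{v_t\}$ is $f_N$-quasi-stationary with $f_N=\sqrt{\log N/N}$, which follows from Assumption~\ref{ass:input} via the results of \cite{ljung&wahlberg92}, since the closed-loop signals are obtained by passing the $f_N$-quasi-stationary reference and the noise through $f_N$-stable filters. The definition of $f_N$-quasi-stationarity then yields $\sup_{|\tau|\le N}\norm{R^N_{vv}(\tau)-R_{vv}(\tau)}\le C_1 f_N$, so piece (ii) is $\mathcal{O}(\sqrt{\log N/N})$ per entry, hence $\mathcal{O}(n(N)\sqrt{\log N/N})$ per row — the first term. For piece (i), a change of summation index shows the two sums differ only in $O(i)+O(j)=O(n(N))$ boundary products $v^{(a)}_s v^{(b)}_{s'}$, each weighted by $1/N$; using the second-moment bound $\frac{1}{N}\sum_t\norm{v_t}^2\le C_2$ supplied by the same definition (backed by the moment bound in Assumption~\ref{ass:noise} and the boundedness in Assumption~\ref{ass:input}), each boundary product is controlled, so piece (i) is $\mathcal{O}(n(N)/N)$ per entry, hence $\mathcal{O}(n^2(N)/N)$ per row — the second term. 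Summing the two contributions over the $2n(N)$ entries of a row and applying the symmetry reduction gives the claimed bound.

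The main obstacle I anticipate is piece (i): one must argue that the $O(n(N))$ boundary terms created by the window mismatch are uniformly $O(1/N)$ \emph{with probability one} (not merely in expectation), uniformly over all lags $\tau$ up to $n(N)$. This requires the almost-sure control of the products $v_s v_{s'}$ afforded by the moment conditions and $f_N$-quasi-stationarity, and care in verifying that the number of mismatched terms stays $O(n(N))$ rather than $O(N)$ for every entry of the matrix.
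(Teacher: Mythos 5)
Your overall architecture is sound, and it is essentially a reconstruction of the result the paper invokes: the paper's own proof of this proposition is a one-line citation to Lemma~4.1 of \cite{ljung&wahlberg92}, so any self-contained derivation differs from the paper by construction. The symmetry reduction $\norm{A}\le\sqrt{\norm{A}_1\norm{A}_\infty}=\norm{A}_\infty$, the identification of the $(i,j)$ entry of $R^{n(N)}_N-\bar{R}^{n(N)}$ with a lag-$\tau$ sample-covariance error ($\abs{\tau}=\abs{i-j}\le n(N)\le N$), and the split into the quasi-stationarity gap (your piece (ii)) plus the window mismatch (your piece (i)) are all correct; piece (ii) indeed contributes $\mathcal{O}(f_N)$ per entry and hence the $2n(N)\sqrt{\log N/N}$ term per row.

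The genuine gap is in piece (i), exactly at the point you flag as the ``main obstacle'' and then dispose of too quickly. The mismatch for entry $(i,j)$ consists of $\mathcal{O}(n(N))$ products, but some of them sit at the far end of the record, $s\in\{N-j+1,\dots,N\}$, and your claim that each such product is ``controlled'' by the Ces\`{a}ro bound $\frac{1}{N}\sum_t\norm{v_t}^2\le C_2$ does not hold: that bound controls long-run averages only and gives no uniform almost-sure bound on individual terms. Under Assumption~\ref{ass:noise} the noise has only a bounded tenth moment, so (e.g., for i.i.d.\ noise with tails $P(\abs{e_1}>x)\asymp x^{-10}(\log x)^{-2}$) one has $e_t^2\ge t^{1/5}(\log t)^{-2}$ infinitely often w.p.1; since Assumption~\ref{ass:ARXorder} permits slowly growing orders such as $n(N)=\log N$, a \emph{single} end-boundary product can exceed $C\,n(N)$ along a subsequence, so ``piece (i) $=\mathcal{O}(n(N)/N)$ per entry'' is false in general. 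The repair is not a pointwise bound but a differencing argument using condition (i) of the quasi-stationarity definition at lag zero: writing $\sum_{s=N-j+1}^{N}\norm{v_s}^2 = N\,\mathrm{tr}\,R^N_{vv}(0)-(N-j)\,\mathrm{tr}\,R^{N-j}_{vv}(0)$ and applying the rate $\sup_{\abs{\tau}\le N}\norm{R^N_{vv}(\tau)-R_{vv}(\tau)}\le C_1 f_N$ to both terms gives $j\,\mathrm{tr}\,R_{vv}(0)+\mathcal{O}(\sqrt{N\log N})$; combined with Cauchy--Schwarz over the end window, piece (i) becomes $\mathcal{O}\big(n(N)/N+\sqrt{\log N/N}\big)$ per entry (the terms near $t\le n(N)$ are genuinely $\mathcal{O}(n(N)/N)$ per entry, since $\sum_{t\le n(N)}\norm{v_t}^2=\mathcal{O}(n(N))$ by the same condition with window length $n(N)\to\infty$). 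Summing over the $2n(N)$ entries of a row then yields $\mathcal{O}\big(n^2(N)/N+n(N)\sqrt{\log N/N}\big)$, which is still inside the stated bound, so your decomposition does deliver the proposition---but only after replacing the uniform-boundary-product argument by this differencing step (or by deferring, as the paper does, to \cite{ljung&wahlberg92}).
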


\begin{proof}
See~\cite[Lemma 4.1]{ljung&wahlberg92}.
\end{proof}

\begin{prop}
\label{thm:Upsilon}
Let Assumptions~\ref{ass:model_truesystem}, \ref{ass:noise}, \ref{ass:input}, and~\ref{ass:ARXorder} hold.
Also, let $\Upsilon^n$ be an $m\times 2n$ deterministic matrix, with $m$ fixed.
Then, 
\begin{equation}
\sqrt{N} \Upsilon^n (\hat{\eta}_N-\bar{\eta}^{n(N)}) \sim As\mathcal{N} (0,P) ,
\end{equation}
where
$
P = \sigma_\nul^2 \lim\limits_{n\to\infty} \Upsilon^n [\bar{R}^n]^{\minus 1} (\Upsilon^n)^\top ,
$
if the limit exists.
\end{prop}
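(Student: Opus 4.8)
The plan is to reduce $\sqrt{N}\,\Upsilon^n(\hat\eta_N-\bar\eta^{n(N)})$, after discarding asymptotically negligible terms, to a normalized sum of martingale differences, and then to invoke a martingale central limit theorem through the Cram\'er--Wold device. A preliminary observation that makes the growing dimension manageable is that $\norm{\Upsilon^n}$ is bounded uniformly in $n$: the coercivity in Assumption~\ref{ass:input} (with the stability assumed there) makes $\bar{R}^n$ uniformly well-conditioned, so that $\tfrac1C I \preceq [\bar R^n]^{\minus 1}\preceq\tfrac1\delta I$ with constants independent of $n$; hence $\Upsilon^n(\Upsilon^n)^\top\preceq C\,\Upsilon^n[\bar R^n]^{\minus 1}(\Upsilon^n)^\top$, and since the right-hand side converges by hypothesis while $m$ is fixed, $\norm{\Upsilon^n}$ stays bounded.

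Next I would establish the decomposition. Since $R^n_N\to\bar R^n$ in norm (Proposition~\ref{lemma:RnN-barRn} together with~\eqref{eq:D3}) and $\bar R^n\succeq\delta I$, the regularization in~\eqref{eq:eta_regls} is inactive for all $N$ large enough, w.p.1, so the regularized and plain least-squares estimates coincide asymptotically. Writing $\nu_t^n:=y_t-(\varphi_t^n)^\top\bar\eta^{n}$ for the population projection residual, so that $\bar{\mathbb E}[\varphi_t^n\nu_t^n]=0$, a rearrangement of~\eqref{eq:eta_ls} gives
\begin{equation}
\hat\eta_N-\bar\eta^{n}=[R^n_N]^{\minus 1}\,\frac1N\sum_t\varphi_t^n\nu_t^n .
\end{equation}
I would then perform two substitutions. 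Replacing $[R^n_N]^{\minus 1}$ by $[\bar R^n]^{\minus 1}$ produces an error proportional to $\Upsilon^n[R^n_N]^{\minus 1}(R^n_N-\bar R^n)[\bar R^n]^{\minus 1}\tfrac1N\sum_t\varphi_t^n\nu_t^n$, which after scaling by $\sqrt N$ is bounded using Propositions~\ref{thm:hateta-bareta} and~\ref{lemma:RnN-barRn} and the boundedness of $\norm{\Upsilon^n}$, and vanishes by Condition~D2. Splitting $\nu_t^n=e_t+(\nu_t^n-e_t)$, the second part contributes a truncation term whose population cross-moment with $\varphi_t^n$ vanishes (both $\bar{\mathbb E}[\varphi_t^n\nu_t^n]$ and $\bar{\mathbb E}[\varphi_t^n e_t]$ are zero) and whose remainder is $\mathcal{O}(\sqrt N\,d(N))$, vanishing by~\eqref{eq:D4} (Proposition~\ref{lemma:etabar-eta0}). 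What survives is the dominant term
\begin{equation}
S_N:=\frac{1}{\sqrt N}\sum_t \Upsilon^n[\bar R^n]^{\minus 1}\varphi_t^n e_t=\frac{1}{\sqrt N}\sum_t\zeta_{t,N},
\end{equation}
where $\zeta_{t,N}:=\Upsilon^n[\bar R^n]^{\minus 1}\varphi_t^n e_t$ is an $m$-vector martingale difference with respect to $\mathcal{F}_{t-1}$, since $\varphi_t^n$ is $\mathcal{F}_{t-1}$-measurable and $\mathbb E[e_t|\mathcal{F}_{t-1}]=0$.

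Finally, fixing $a\in\mathbb R^m$ and setting $g_n:=[\bar R^n]^{\minus 1}(\Upsilon^n)^\top a$, I would apply the scalar martingale central limit theorem to $a^\top S_N=\tfrac1{\sqrt N}\sum_t (g_n^\top\varphi_t^n)e_t$. Its conditional variance is $\tfrac{\sigma_\nul^2}{N}\sum_t (g_n^\top\varphi_t^n)^2=\sigma_\nul^2\,g_n^\top R^n_N g_n$, which converges to $\sigma_\nul^2\,g_n^\top\bar R^n g_n=a^\top\big(\sigma_\nul^2\Upsilon^n[\bar R^n]^{\minus 1}(\Upsilon^n)^\top\big)a\to a^\top P a$, using $R^n_N\to\bar R^n$ and $\norm{g_n}$ bounded. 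The conditional Lindeberg requirement I would verify through a Lyapunov condition of the form $N^{\minus(1+\epsilon/2)}\sum_t\mathbb E\,\abs{(g_n^\top\varphi_t^n)e_t}^{2+\epsilon}\to 0$ for some $\epsilon>0$; by Cauchy--Schwarz this reduces to uniformly (in $n,t$) bounded higher-order moments of the filtered regressor $g_n^\top\varphi_t^n$ and of $e_t$. The latter is provided by Assumption~\ref{ass:noise}; the former follows because $g_n^\top\varphi_t^n$ is a linear combination of $\{u_s,y_s\}$ with $\ell^2$-bounded coefficients (as $\norm{g_n}$ and $[\bar R^n]^{\minus 1}$ are bounded), and these signals are generated by stable filters driven by sequences with bounded moments. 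Cram\'er--Wold then yields $S_N\sim As\mathcal N(0,P)$, and hence the claim.

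The main obstacle throughout is that the regressor dimension $2n=2n(N)$ grows with the sample size, so neither a fixed-dimension central limit theorem nor the norm bound of Proposition~\ref{thm:hateta-bareta} can be used directly: applying the latter would give a contribution growing like $\sqrt{n\log N}$ after the $\sqrt N$ scaling. The resolution is that $\Upsilon^n$ acts as a bounded $m$-dimensional linear functional, which collapses this growth; establishing that boundedness (from coercivity) and then tracking the functional---rather than the full vector norm---through every remainder term is the delicate part. The growing dimension is also precisely why moments of $e_t$ beyond the second are needed to close the Lyapunov condition uniformly in $n$, which is the reason for the stronger-than-PEM moment hypothesis in Assumption~\ref{ass:noise}.
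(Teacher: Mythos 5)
The paper does not actually prove this proposition: its ``proof'' is a one-line pointer to \cite[Theorem 7.3]{ljung&wahlberg92}, which is stated for exactly this high-order ARX setting and whose hypotheses are covered by Assumptions~\ref{ass:model_truesystem}--\ref{ass:ARXorder}. Your proposal instead reconstructs that theorem from scratch, and your architecture is the standard route by which such increasing-dimension CLTs are established: deactivate the regularization, reduce $\sqrt{N}\,\Upsilon^n(\hat\eta_N-\bar\eta^{n(N)})$ to the martingale sum $\frac{1}{\sqrt N}\sum_t \Upsilon^n[\bar R^n]^{-1}\varphi_t^n e_t$ by peeling off the $[R^n_N]^{-1}\to[\bar R^n]^{-1}$ substitution error (which vanishes at rate $n^{3/2}\log N/\sqrt N$ under Condition D2, the same rate the paper itself exploits in the proof of Theorem~\ref{thm:asycov}) and the truncation/projection error (rate $\sqrt N d(N)\to 0$ by~\eqref{eq:D4}), then apply a martingale-array CLT through Cram\'er--Wold with the bounded functional $g_n=[\bar R^n]^{-1}(\Upsilon^n)^\top a$. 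Your two structural observations---that coercivity plus the existence of the limit $P$ forces $\norm{\Upsilon^n}$ to be uniformly bounded, and that the norm bound of Proposition~\ref{thm:hateta-bareta} is useless after $\sqrt N$-scaling so the functional must be tracked through every remainder---are exactly the right ones, and they are what the cited theorem's proof also turns on.

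What the citation buys the paper is precisely the step you leave soft: the conditional Lindeberg condition uniformly in the growing dimension $n(N)$. Your reduction ``by Cauchy--Schwarz to uniformly bounded higher moments of $g_n^\top\varphi_t^n$,'' justified because it is ``a linear combination of $\{u_s,y_s\}$ with $\ell^2$-bounded coefficients,'' does not close as stated: with only an $\ell^2$ (not $\ell^1$) bound on the coefficients, Minkowski's inequality gives a moment bound growing like $\sqrt{n}$, and for correlated summands $\mathbb{E}\abs{g_n^\top\varphi_t^n}^{4}$ can in general grow with $n$. To finish, one must expand $g_n^\top\varphi_t^n$ back to the underlying innovations and reference through the (exponentially stable) closed-loop filters and apply a Rosenthal/Burkholder-type inequality to the $e$-driven part---this is where the tenth-moment bound of Assumption~\ref{ass:noise} genuinely enters---together with a separate argument for the bounded deterministic-component reference part, or else run a truncation argument; that is essentially the content of the lemmas behind \cite[Theorem 7.3]{ljung&wahlberg92}. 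So your route is sound and more illuminating than the paper's citation about why the result holds, but at the one genuinely hard point it currently asserts what the cited theorem proves.
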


\begin{proof}
See~\cite[Theorem 7.3]{ljung&wahlberg92}.
\end{proof}

\begin{prop}
\label{prop:Xdiff->0}
Consider the product $\prod_{i=1}^p X^{(i)}_N$, where $p$ is finite and $X^{(i)}_N$ are stochastic matrices of appropriate dimensions (possibly a function of $N$) such that
\begin{equation}
||X^{(i)}_N - \bar{X}^{(i)}|| \to 0, \wpone
\end{equation}
and $||\bar{X}^{(i)}||<C_i$.
Then, we have that
\begin{equation}
\textstyle
\big|\hspace{-1pt}\big|\prod_{i=1}^p X^{(i)}_N - \prod_{i=1}^p \bar X^{(i)}\big|\hspace{-1pt}\big| \to 0, \wpone.
\label{eq:prodX-prodXbar->0}
\end{equation}
\end{prop}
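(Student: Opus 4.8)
The plan is to reduce the general product to the two-factor case by induction on $p$, and to treat the two-factor case with a telescoping identity together with submultiplicativity of the operator norm. First I would fix a sample point in the intersection of the $p$ events on which $\norm{X^{(i)}_N - \bar{X}^{(i)}} \to 0$; since $p$ is finite, this intersection still has probability one, so it suffices to argue deterministically along each such path. On that path, $\norm{X^{(i)}_N} \le \norm{X^{(i)}_N - \bar{X}^{(i)}} + \norm{\bar{X}^{(i)}}$, and because the first summand tends to zero while the second is bounded by $C_i$, each factor $X^{(i)}_N$ is eventually bounded in norm by, say, $C_i + 1$.

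For the two-factor case I would write the telescoping identity
\[
X^{(1)}_N X^{(2)}_N - \bar{X}^{(1)} \bar{X}^{(2)} = X^{(1)}_N \big( X^{(2)}_N - \bar{X}^{(2)} \big) + \big( X^{(1)}_N - \bar{X}^{(1)} \big) \bar{X}^{(2)} ,
\]
take norms, and invoke submultiplicativity, $\norm{AB} \le \norm{A}\,\norm{B}$. The first term is then bounded by $\norm{X^{(1)}_N}\,\norm{X^{(2)}_N - \bar{X}^{(2)}}$ and the second by $\norm{X^{(1)}_N - \bar{X}^{(1)}}\,\norm{\bar{X}^{(2)}}$; using the uniform bound on $\norm{X^{(1)}_N}$ just established, the bound $\norm{\bar{X}^{(2)}} < C_2$, and the fact that the two difference factors vanish, both contributions tend to zero. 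For the induction step I would set $Y_N := \prod_{i=1}^{p-1} X^{(i)}_N$ and $\bar{Y} := \prod_{i=1}^{p-1} \bar{X}^{(i)}$. The induction hypothesis gives $\norm{Y_N - \bar{Y}} \to 0$ w.p.1, while submultiplicativity yields $\norm{\bar{Y}} \le \prod_{i=1}^{p-1} \norm{\bar{X}^{(i)}} < \prod_{i=1}^{p-1} C_i =: C < \infty$. Thus the pair $(Y_N, \bar{Y})$ satisfies the same hypotheses as a single factor, and applying the two-factor result to $Y_N X^{(p)}_N$ versus $\bar{Y}\bar{X}^{(p)}$ closes the induction.

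The only delicate point is that the matrix dimensions are allowed to grow with $N$, so I must be careful that no constant silently degrades with dimension. This is exactly why the hypotheses supply the uniform bounds $\norm{\bar{X}^{(i)}} < C_i$, independent of $N$, and why submultiplicativity of the operator $2$-norm---which holds with constant $1$ in every dimension---is the right tool: the argument never invokes a dimension-dependent equivalence of norms, and the finiteness of $p$ keeps both the accumulated bound $\prod_i C_i$ and the probability-one intersection under control. I do not expect any genuine obstacle beyond bookkeeping; the substance is entirely in the telescoping identity and the uniform boundedness of the partial-product norms.
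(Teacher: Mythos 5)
Your proposal is correct and takes essentially the same route as the paper: induction on $p$, a telescoping identity for the two-factor case, submultiplicativity of the operator norm, and the finite intersection of probability-one events. The only cosmetic difference is that you split as $X^{(1)}_N\big(X^{(2)}_N-\bar X^{(2)}\big)+\big(X^{(1)}_N-\bar X^{(1)}\big)\bar X^{(2)}$ (using the eventual boundedness of $\norm{X^{(1)}_N}$) while the paper expands into three terms in the differences $\Delta^{(i)}_N$; your induction step, treating the partial product $Y_N$ as a single factor satisfying the same hypotheses, is if anything a slightly cleaner formulation than the paper's.
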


\begin{proof}
We show this by induction.
First, let $p=2$ and define $\Delta_N^{(i)}:=X_N^{(i)}-\bar{X}_N^{(i)}$. Then, we can write
\begin{equation}
X^{(1)}_N X^{(2)}_N - \bar{X}^{(1)}\bar{X}^{(2)} =
\Delta^{(1)}_N \bar X^{(2)} + \bar{X}^{(1)} \Delta^{(2)}_N + \Delta^{(1)}_N \Delta^{(2)}_N,
\label{eq:rewriteX1X2-X1X2bar}
\end{equation}
which yields, using the assumptions,
\begin{equation}
\begin{multlined}
||X^{(1)}_N X^{(2)}_N \!-\! \bar{X}^{(1)}\bar{X}^{(2)}|| \!\leq\! ||\Delta^{(1)}_N|| \, ||\bar X^{(2)}|| \\ \!\!\!\!+\! ||\bar X^{(1)}|| \, || \Delta^{(2)}_N|| \!+\! ||\Delta^{(1)}_N|| \, ||\Delta^{(2)}_N|| \to 0 \wpone.
\end{multlined}
\label{eq:X1X2pN-X1X2pbar}
\end{equation}
Second, we consider an arbitrary $p$, and assume that
\begin{equation}
\textstyle
\big|\hspace{-1pt}\big|\prod_{i=1}^{p-1} X^{(i)}_N - \prod_{i=1}^{p-1} \bar X^{(i)}\big|\hspace{-1pt}\big| \to 0, \wpone.
\label{eq:prodX-prodXbar->0_p-1}
\end{equation}
Then, using a similar procedure as~\eqref{eq:rewriteX1X2-X1X2bar}, we have
\begin{equation}
\begin{multlined}
\textstyle
\big|\hspace{-1pt}\big|\prod_{i=1}^{p} X^{(i)}_N - \prod_{i=1}^{p} \bar X^{(i)}\big|\hspace{-1pt}\big| \leq ||\Delta^{(p)}_N|| \,  || \prod_{i=1}^{p-1} \bar X^{(i)}|| \\ \textstyle + ||\bar X^{(p)}|| \; || \prod_{i=1}^{p-1} \Delta^{(i)}_N||+ ||\Delta^{(p)}_N|| \; ||\prod_{i=1}^{p-1}\Delta^{(i)}_N||,
\end{multlined}
\label{eq:X1X2pN-X1X2pbar}
\end{equation}
which, in turn, is bounded by
\begin{equation}
\begin{multlined}
\textstyle
||\Delta^{(p)}_N|| \, \prod_{i=1}^{p-1} ||  \bar X^{(i)}||  + ||\bar X^{(p)}|| \; \prod_{i=1}^{p-1} ||  \Delta^{(i)}_N|| \\ 
\textstyle+ ||\Delta^{(p)}_N|| \; \prod_{i=1}^{p-1} ||\Delta^{(i)}_N||
\to 0 \wpone,
\end{multlined}
\end{equation}
where the convergence follows by assumption.
Then, \eqref{eq:prodX-prodXbar->0} is verified when assuming~\eqref{eq:prodX-prodXbar->0_p-1}, which considering also~\eqref{eq:X1X2pN-X1X2pbar} and an induction argument, concludes the proof.
\end{proof}

%

\section{Consistency of Step 2}

The main purpose of this appendix is to prove Theorem~\ref{thm:consistencyLS}.
However, before we do so, we introduce some results regarding the norm of some vectors and matrices.

\begin{itemize}[leftmargin=*]
\item $\big|\hspace{-1pt}\big|\hat{\eta}_N-\eta^{n(N)}_\nul\big|\hspace{-1pt}\big|$ \textit{tends to zero, as $N$ tends to infinity, w.p.1}
\end{itemize}
Consider the estimated parameter vector ${\hat{\eta}_N:=\hat{\eta}^{n(N)}_N}$~\eqref{eq:eta_regls}, and the truncated true parameter vector $\eta_\nul^{n(N)}$~\eqref{eq:eta0}.
Using the triangular inequality, we have
\begin{equation}
\big|\hspace{-1pt}\big|\hat{\eta}_N-\eta^{n(N)}_\nul\big|\hspace{-1pt}\big| \leq \big|\hspace{-1pt}\big|\hat{\eta}_N-\bar{\eta}^{n(N)}\big|\hspace{-1pt}\big| + \big|\hspace{-1pt}\big|\bar{\eta}^{n(N)}-\eta^{n(N)}_\nul\big|\hspace{-1pt}\big| ,
\label{eq:hateta-eta0n}
\end{equation}
where $\bar{\eta}^n$ is defined by~\eqref{eq:etahat_conv}.
Then, from Proposition~\ref{lemma:etabar-eta0}, the second term on the right side of~\eqref{eq:hateta-eta0n} tends to zero as ${n(N)\to\infty}$. 
From Proposition~\ref{thm:hateta-bareta}, the first term on the right side of~\eqref{eq:hateta-eta0n} tends to zero, as ${N\to\infty}$, w.p.1. 
Thus,
\begin{equation}
\big|\hspace{-1pt}\big|\hat{\eta}_N-\eta^{n(N)}_\nul\big|\hspace{-1pt}\big| \to 0 , \wpone .
\label{eq:hateta-eta0}
\end{equation}

\begin{itemize}[leftmargin=*]
\item $\big|\!\big|Q_n(\hat{\eta}_N)-Q_n(\eta^{n(N)}_\nul)\big|\!\big|$ \textit{tends to zero, as $N$ to infinity, w.p.1}
\end{itemize}
Consider $Q_n(\eta_\nul^{n(N)})$, given by~\eqref{eq:Qeta} evaluated at the truncated true parameter vector $\eta_\nul^{n(N)}$, and the matrix $Q_n(\hat{\eta}_N)$, given by~\eqref{eq:Qeta} evaluated at the estimated parameters $\hat{\eta}_N$.
We have
\begin{equation}
\begin{aligned}
&\big|\hspace{-1pt}\big|Q_n(\hat{\eta}_N)-Q_n(\eta^{n(N)}_\nul)\big|\hspace{-1pt}\big| \leq \big|\hspace{-1pt}\big|Q^c_n(\hat{\eta}_N)-Q^c_n(\eta^{n(N)}_\nul)\big|\hspace{-1pt}\big| \\
&\quad+ \big|\hspace{-1pt}\big|Q^l_n(\hat{\eta}_N)-Q^l_n(\eta^{n(N)}_\nul)\big|\hspace{-1pt}\big| + \big|\hspace{-1pt}\big|Q^f_n(\hat{\eta}_N)-Q^f_n(\eta^{n(N)}_\nul)\big|\hspace{-1pt}\big|\\ 
&\leq C \big|\hspace{-1pt}\big|\hat{\eta}_N-\eta^{n(N)}_\nul\big|\hspace{-1pt}\big| .
\end{aligned}
\label{eq:Qhateta-Qeta0_proof}
\end{equation}
Then, using~\eqref{eq:hateta-eta0}, we conclude that
\begin{equation}
\big|\hspace{-1pt}\big|Q_n(\hat{\eta}_N)-Q_n(\eta^{n(N)}_\nul)\big|\hspace{-1pt}\big| \to 0 , \wpone .
\label{eq:Qhateta-Qeta0}
\end{equation}

\begin{itemize}[leftmargin=*]
\item $\big|\hspace{-1pt}\big|Q_n(\eta_\nul^n)\big|\hspace{-1pt}\big|$ \textit{is bounded for all} $n$
\end{itemize}
We have that 
\begin{equation}
\begin{aligned}
\norm{Q_n(\eta_\nul^n)} &\leq 
\norm{Q^c_n(\eta_\nul^n)} \!+\! \norm{Q^l_n(\eta_\nul^n)} \!+\! \norm{Q^f_n(\eta_\nul^n)} \!+\! \norm{Q^d_n} \\
&\leq C \norm{\eta_\nul^n} + 1 \leq C \norm{\eta_\nul} + 1 , \forall n \\
\end{aligned}
\label{eq:bound_Qeta0}
\end{equation}
which is bounded, by stability of the true system.

\begin{itemize}[leftmargin=*]
\item $\norm{Q_n(\hat{\eta}_N)}$ \textit{is bounded for large $N$, w.p.1}
\end{itemize}
Using the triangular inequality, we have
\begin{equation}
\norm{Q_n(\hat{\eta}_N)} \leq \big|\hspace{-1pt}\big| Q_n(\hat{\eta}_N)-Q_n(\eta^{n(N)}_\nul)\big|\hspace{-1pt}\big| + \big|\hspace{-1pt}\big| Q_n(\eta_\nul^{n(N)}) \big|\hspace{-1pt}\big| .
\label{eq:Qhat_triang}
\end{equation}
Using now~\eqref{eq:bound_Qeta0} and~\eqref{eq:Qhateta-Qeta0_proof}, the first term on the right side of~\eqref{eq:Qhat_triang} can be made arbitrarily small as $N$ increases, while the second term is bounded for all $n(N)$.
Then, there exists $\bar{N}$ such that
\begin{equation}
\norm{Q_n(\hat{\eta}_N)} \leq C , \quad \forall N>\bar{N} .
\label{eq:bound_Qhateta}
\end{equation}

\begin{itemize}[leftmargin=*]
\item $\norm{T_n(\theta_\nul)}$ \textit{is bounded for all $n$}
\end{itemize}
Consider the matrix $T_n(\theta_\nul)$, given by~\eqref{eq:T0}.
First, we introduce the following result.
Let $X(q)=\sum_{k=0}^\infty x_k q^{\minus k}$ and define 
\begin{equation}
\mathbb{T}[X(q)] :=
\begin{bmatrix}
x_0    & 0      & 0      & \cdots \\
x_1    & x_0    & 0      & \ddots \\
x_2    & x_1    & x_0    & \ddots \\
\vdots & \ddots & \ddots & \ddots
\end{bmatrix} .
\label{eq:mathbbT}
\end{equation}
If $\sqrt{\sum_{k=0}^\infty |x_k|^2}<C_1$, we have that~\cite{peller}
\begin{equation}
\norm{\mathbb{T}[X(q)]} \leq C .
\label{eq:Tinfbounded}
\end{equation}
When $X(q)$ can be written as a rational transfer function, \eqref{eq:Tinfbounded} follows from $X(q)$ having all poles strictly inside the unit circle, as, in this case, the sum of squares of its impulse response coefficients is bounded.

We observe that the blocks of $T_n(\theta_\nul)$ satisfy that $T^f_n(\theta_\nul)$, $T^c_n(\theta_\nul)$, and $T^l_n(\theta_\nul)$ are sub-matrices of $\mathbb{T}[F_\nul(q)]$, $\mathbb{T}[C_\nul(q)]$, and $\mathbb{T}[L_\nul(q)]$, respectively.
Then, we have that
\begin{equation}
\begin{aligned}
\norm{T_n(\theta_\nul)} &\leq \norm{T^f_n(\theta_\nul)} + \norm{T^c_n(\theta_\nul)} + \norm{T^l_n(\theta_\nul)} \\
&\leq \norm{\mathbb{T}[F_\nul(q)]} + \norm{\mathbb{T}[C_\nul(q)]} + \norm{\mathbb{T}[L_\nul(q)]} \\
&\leq C \;\; \forall n ,
\end{aligned}
\label{eq:T0_bounded}
\end{equation}
where the last inequality follows from~\eqref{eq:Tinfbounded} and from $F(q)$, $C(q)$, and $L(q)$ being finite order polynomials.

The following lemma is useful for invertibility of the least-squares problem~\eqref{eq:theta_ls}.

\begin{lemma}
\label{thm:ls_inv}
Let Assumption~\ref{ass:model_truesystem} hold and
\begin{equation}
M(\eta_\nul) :=\lim\limits_{n\to\infty} Q_n^\top(\eta^n_\nul)Q_n(\eta^n_\nul), 
\label{eq:Meta0def}
\end{equation}
where $Q_n(\eta^n_\nul)$ is given by~\eqref{eq:Qeta} evaluated at $\eta^n_\nul$, defined in~\eqref{eq:eta0}.
Then, $M(\eta_\nul)$ is invertible.
\end{lemma}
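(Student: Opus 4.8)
The plan is to show that $M(\eta_\nul)$ is positive definite; since it is a limit of Gram matrices $Q_n^\top(\eta^n_\nul)Q_n(\eta^n_\nul)$, hence positive semidefinite, positive definiteness is equivalent to invertibility. The first observation is structural: in \eqref{eq:Qeta} the two row-blocks of $Q_n$ of size $n$ act on disjoint column blocks, since the top block $[\,0\;\;0\;\;{-Q^c_n}\;\;Q^d_n\,]$ involves only the $(c,d)$-parameters while the bottom block $[\,{-Q^f_n}\;\;Q^l_n\;\;0\;\;0\,]$ involves only the $(f,l)$-parameters. Consequently $Q_n^\top(\eta^n_\nul)Q_n(\eta^n_\nul)$ is block diagonal, with one block assembled from $(Q^f_n,Q^l_n)$ and one from $(Q^c_n,Q^d_n)$, and $M(\eta_\nul)$ inherits this block-diagonal form. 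It therefore suffices to establish positive definiteness of each diagonal block.

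For an arbitrary $v=[v_f^\top\;\;v_l^\top\;\;v_c^\top\;\;v_d^\top]^\top$ I write $v^\top M(\eta_\nul)v=\lim_{n\to\infty}\norm{Q_n(\eta^n_\nul)v}^2$, and the key step is to identify this limit with a frequency-domain norm of residual polynomials. Using the Toeplitz-as-convolution identity \eqref{eq:Toeplitzform} and the definitions \eqref{eq:Q0blocksdef}, the bottom block $-Q^f_n v_f+Q^l_n v_l$ equals the first $n$ power-series coefficients of $A_\nul(q)V_l(q)-B_\nul(q)V_f(q)$, where $V_f(q)=\sum_{j=1}^{m_f}v_{f,j}q^{\minus j}$ and $V_l(q)=\sum_{j=1}^{m_l}v_{l,j}q^{\minus j}$; likewise the top block yields the first $n$ coefficients of $V_d(q)-A_\nul(q)V_c(q)$. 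Because $A_\nul$ and $B_\nul$ have absolutely summable impulse responses (stability, Assumption~\ref{ass:model_truesystem}), letting $n\to\infty$ gives
\begin{equation}
v^\top M(\eta_\nul)v=\norm{A_\nul V_l-B_\nul V_f}_2^2+\norm{A_\nul V_c-V_d}_2^2 ,
\end{equation}
where $\norm{\cdot}_2$ denotes the $\ell^2$-norm of the power-series coefficients; this also confirms that the limit defining $M(\eta_\nul)$ exists.

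It remains to show each term vanishes only for $v=0$. Substituting $A_\nul=D_\nul/C_\nul$ and $B_\nul=L_\nul D_\nul/(F_\nul C_\nul)$, the first term vanishes iff $F_\nul(q)V_l(q)=L_\nul(q)V_f(q)$ and the second iff $D_\nul(q)V_c(q)=C_\nul(q)V_d(q)$. Here I invoke the coprimeness assumptions: from $F_\nul V_l=L_\nul V_f$ it follows that $F_\nul$ divides $V_f$, so $V_f=F_\nul s$ and $V_l=L_\nul s$ for some polynomial $s$. Evaluating at $q^{\minus 1}=0$ and using $F_\nul(0)=1$ with $V_f(0)=0$ forces $s(0)=0$, while the degree bound $\deg V_f\le m_f=\deg F_\nul$ (the model orders coinciding with the true ones) forces $s$ to be constant; hence $s=0$ and $v_f=v_l=0$. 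The identical argument applied to $D_\nul V_c=C_\nul V_d$ gives $v_c=v_d=0$. Thus $v^\top M(\eta_\nul)v>0$ for every $v\neq 0$, so $M(\eta_\nul)$ is positive definite and therefore invertible.

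I expect the main obstacle to be the rigorous passage from the finite-$n$ Gram matrices to the polynomial (frequency-domain) identities, together with the careful combination of coprimeness and the degree/constant-term constraints on $V_f,V_l,V_c,V_d$ needed to exclude nontrivial common factors; by contrast, the block-diagonal reduction and the convergence of the truncated coefficient sums are comparatively routine consequences of Assumption~\ref{ass:model_truesystem}.
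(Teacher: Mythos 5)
Your proof is correct and is essentially the paper's argument in a different dress: the paper writes $M(\eta_\nul)=Q_\infty^\top(\eta_\nul)Q_\infty(\eta_\nul)$ and shows that $Q_\infty(\eta_\nul)$ has trivial right null space, which is exactly your statement that $v^\top M(\eta_\nul)v=\lim_n\norm{Q_n(\eta^n_\nul)v}^2$ vanishes only at $v=0$; both arguments decouple the $(f,l)$ and $(c,d)$ blocks, translate the vanishing of each block into the polynomial identities $F_\nul V_l=L_\nul V_f$ and $C_\nul V_d=D_\nul V_c$, and kill them via coprimeness plus the degree bounds. The only cosmetic differences are your explicit positive-semidefinite/$\ell^2$ framing and the constant-term argument on the common factor $s$, where the paper just counts degrees of $\alpha(q)$ and $\beta(q)$ directly.
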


\begin{proof}
Firts, we observe that the limit in~\eqref{eq:Meta0def} is well defined, because the entries of $M(\eta^n_\nul):=Q^\top(\eta^n_\nul)Q(\eta^n_\nul)$ are either zero or sums with form
\begin{equation}
\textstyle
\sum_{k=1}^n a_k^\nul a_{k+p}^\nul , \;\;
\sum_{k=1}^n a_k^\nul b_{k+p}^\nul , \;\;
\sum_{k=1}^n b_k^\nul b_{k+p}^\nul ,
\end{equation}
for some finite integers $p$, and the coefficients $a_k^\nul$ and $b_k^\nul$ are stable sequences.
Thus, these sums converge as $n\to\infty$.
For simplicity of notation, let $Q_\infty(\eta_\nul):=\lim_{n\to\infty} Q_n(\eta^n_\nul)$; that is, $Q_\infty(\eta_\nul)$ is block Toeplitz according to~\eqref{eq:Qeta}, with each block having an infinite number of rows and given by
\begin{align}
&Q_\infty^c(\eta_\nul) \!=\! \mathcal{T}_{\infty,m_c}(A(q,\eta_\nul)), \;\; 
Q_\infty^l(\eta_\nul) \!=\! \mathcal{T}_{\infty,m_l}(A(q,\eta_\nul)), \\
&Q_\infty^f(\eta_\nul) \!=\! \mathcal{T}_{\infty,m_f}(B(q,\eta_\nul)), \;\;
Q_\infty^d \!=\! 
\begin{bmatrix}
I_{m_d,m_d} \\ 0_{\infty,m_d}
\end{bmatrix} \label{eq:Qtoepinf}.
\end{align}
We can then write
$
M(\eta_\nul) = Q^\top_\infty(\eta_\nul) Q_\infty(\eta_\nul).
$
From this factorization, we observe that $M(\eta_\nul)$ is singular if and only if $Q_\infty(\eta_\nul)$ has a non-trivial right null-space. 
Moreover, the block anti-diagonal structure of $Q_\infty(\eta_\nul)$ implies that $Q_\infty$ has full column rank if and only if both matrices $[-Q_\infty^f(\eta_\nul) \; Q_\infty^l(\eta_\nul)]$ and $[-Q_\infty^c(\eta_\nul) \; Q_\infty^d(\eta_\nul)]$ have full column rank.
We proceed by contradiction. Suppose that
\begin{equation}
\begin{bmatrix}
-Q_\infty^f(\eta_\nul) & Q_\infty^l(\eta_\nul)
\end{bmatrix}
\begin{bmatrix}
\alpha \\ \beta
\end{bmatrix}
= 
-Q_\infty^f(\eta_\nul)\alpha + Q_\infty^l(\eta_\nul)\beta = 0,
\label{eq:-Qfa+Qfb=0}
\end{equation}
where $\alpha$ and $\beta$ are some vectors $\alpha = [\alpha_0 \; \dots \; \alpha_{m_{f-1}}]^\top$ and $\beta = [\beta_0 \; \dots \; \beta_{m_{l-1}}]^\top$.
Then, \eqref{eq:-Qfa+Qfb=0} implies
\begin{equation}
B(q,\eta_\nul) \alpha(q) = A(q,\eta_\nul) \beta(q) \Leftrightarrow L(q,\theta_\nul) \alpha(q) = F(q,\theta_\nul) \beta(q) ,
\label{eq:La=Fb}
\end{equation}
where $\alpha(q)=\sum_{k=0}^{m_f-1} \alpha_k q^{-k}$ and $\beta(q)=\sum_{k=0}^{m_l-1} \beta_k q^{-k}$.
Because $L(q,\theta_\nul)$ and $F(q,\theta_\nul)$ are co-prime by Assumption~\ref{ass:model_truesystem} and polynomials of order $m_l-1$ and $m_f$, and $\alpha(q)$ and $\beta(q)$ are polynomials of orders at most $m_f-1$ and $m_l-1$, \eqref{eq:La=Fb} can only be satisfied if $\alpha(q)\equiv 0 \equiv\beta(q)$.
Hence, $[-Q_\infty^f(\eta_\nul) \; Q_\infty^l(\eta_\nul)]$ has full column rank.

Analogously for $[-Q_\infty^c(\eta_\nul) \; Q_\infty^d(\eta_\nul)]$, this matrix has full column rank if and only if
$
C(q,\theta_\nul) \alpha(q) = D(q,\theta_\nul) \beta(q)
$
is satisfied only for $\alpha(q)\equiv 0 \equiv\beta(q)$, where here we have $\alpha(q)=\sum_{k=0}^{m_c-1} \alpha_k q^{-k}$ and $\beta(q)=\sum_{k=0}^{m_d-1} \beta_k q^{-k}$.
This is the case, as $C(q,\theta_\nul)$ and $D(q,\theta_\nul)$ are co-prime and polynomials of higher order than $\alpha(q)$ and $\beta(q)$.
Hence, $[-Q_\infty^f(\eta_\nul) \; Q_\infty^l(\eta_\nul)]$ and $[-Q_\infty^c(\eta_\nul) \; Q_\infty^d(\eta_\nul)]$ are full column rank, implying that $Q_\infty(\eta_\nul)$ has a trivial right null-space and $M(\eta_\nul)$ is invertible.
\end{proof}

Finally, we have the necessary results to prove Theorem~\ref{thm:consistencyLS}.

\subsubsection*{Proof of Theorem \ref{thm:consistencyLS}}
\label{app:ls_consistency_proof}

We start by using~\eqref{eq:theta_ls} to write 
\begin{equation}
\begin{aligned}
\thetals\!\!-\!\theta_\nul \! &= \! \left[ Q_n^\top\!(\hat{\eta}_N\!) Q_n(\hat{\eta}_N\!) \right]^{\minus 1} \!\!Q_n^\top\!(\hat{\eta}_N\!) \hat{\eta}_N - \theta_\nul \\
&= \left[ Q_n^\top\!(\hat{\eta}_N\!) Q_n(\hat{\eta}_N\!) \right]^{\minus 1} \!\!Q_n^\top\!(\hat{\eta}_N\!) \left[\hat{\eta}_N-Q_n(\hat{\eta}_N\!) \theta_\nul\right] \\
&= \left[ Q_n^\top\!(\hat{\eta}_N\!) Q_n(\hat{\eta}_N\!) \right]^{\minus 1} \!\! Q_n^\top\!(\hat{\eta}_N\!) T_n(\theta_\nul)\! [\hat{\eta}_N\!-\!\eta^{n(N)}_\nul]
\end{aligned}
\label{eq:thetals-theta0}
\end{equation}
where the last equality follows from~\eqref{eq:Tetadiff}.
If $n$ were fixed, consistency would follow if $\hat{\eta}_N-\eta^n_\nul$  would approach zero as $N\to\infty$, provided the inverse of $Q_n^\top(\hat{\eta}_N) Q_n(\hat{\eta}_N)$ existed for sufficiently large $N$.
However, $n=n(N)$ increases according to Assumption~\ref{ass:ARXorder}.
This implies that the dimensions of the vectors $\hat{\eta}_N$ and $\eta^{n(N)}_\nul$, and of the matrices $Q_n(\hat{\eta}_N)$ (number of rows) and $T_n(\theta_\nul)$ (number of rows and columns), become arbitrarily large.
Therefore, extra requirements are necessary.
In particular, we use~\eqref{eq:thetals-theta0} to write
\begin{align}
&\big|\hspace{-1pt}\big|\thetals-\theta_\nul\big|\hspace{-1pt}\big| = \big|\hspace{-1pt}\big|M^{\minus 1}(\hat{\eta}_N) Q_n^\top(\hat{\eta}_N) T_n(\theta_\nul) (\hat{\eta}_N-\eta^{n(N)}_\nul)\big|\hspace{-1pt}\big| \\
&\leq \big|\hspace{-1pt}\big|M^{\minus 1}(\hat{\eta}_N)\big|\hspace{-1pt}\big| \, \big|\hspace{-1pt}\big|Q_n(\hat{\eta}_N)\big|\hspace{-1pt}\big| \, \big|\hspace{-1pt}\big|T_n(\theta_\nul)\big|\hspace{-1pt}\big| \, \big|\hspace{-1pt}\big|\hat{\eta}_N-\eta^{n(N)}_\nul\big|\hspace{-1pt}\big| ,
\label{eq:norm_thetals-theta0}
\end{align}
where $M(\hat{\eta}_N):=Q_n^\top(\hat{\eta}_N) Q_n(\hat{\eta}_N)$. 
Consistency is achieved if the last factor on the right side of the inequality in~\eqref{eq:norm_thetals-theta0} approaches zero, as $N\to\infty$, w.p.1, and the remaining factors are bounded for sufficiently large $N$, w.p.1.
This can be shown using~\eqref{eq:bound_Qhateta}, \eqref{eq:T0_bounded}, and~\eqref{eq:hateta-eta0}, but we need additionally that $M(\hat\eta_N)$ is invertible for sufficiently large $N$, w.p.1.

With this purpose, we write 
\begin{equation}
\begin{multlined}
||M(\hat{\eta}_N)- M(\eta^{n(N)}_\nul)|| \\
= ||Q_n^\top(\hat{\eta}_N)Q_n(\hat{\eta}_N)-Q_n^\top(\eta^{n(N)}_\nul)Q_n(\eta^{n(N)}_\nul)|| 
\end{multlined}
\label{eq:Mhat-M0_ls}
\end{equation}
Using~\eqref{eq:Qhateta-Qeta0_proof}, \eqref{eq:bound_Qeta0}, \eqref{eq:bound_Qhateta}, and Proposition~\ref{prop:Xdiff->0}, and because ${M(\eta^n_\nul)\to M(\eta_\nul)}$ as ${n\to\infty}$, we have that
\begin{equation}
M(\hat{\eta}_N) \to M(\eta_\nul), \wpone.
\label{eq:Mhateta->M0}
\end{equation}
As $M(\eta_\nul)$ is invertible (Lemma~\ref{thm:ls_inv}), by~\eqref{eq:Mhateta->M0} and because the map from the entries of a matrix to its eigenvalues is continuous, there is $\bar{N}$ such that $M(\hat{\eta}_N)$ is invertible for all $N>\bar{N}$, w.p.1.

Returning to~\eqref{eq:norm_thetals-theta0}, we may now write
\begin{equation}
\begin{aligned}
\big|\hspace{-1pt}\big|\thetals-\theta_\nul\big|\hspace{-1pt}\big| &\leq C \big|\hspace{-1pt}\big|\hat{\eta}_N-\hat{\eta}_\nul^{n(N)}\big|\hspace{-1pt}\big| , \quad \forall  N>\bar{N}. \\
&\to 0, \wpone.
\end{aligned}
\label{eq:thetals-thetanul_preparerate}
\end{equation}
Moreover, using~\eqref{eq:hateta-eta0n}, we can re-write~\eqref{eq:thetals-thetanul_preparerate} as
\begin{equation}
\big|\hspace{-1pt}\big|\thetals-\theta_\nul\big|\hspace{-1pt}\big| \leq C \big( \big|\hspace{-1pt}\big|\hat{\eta}_N-\bar{\eta}^{n(N)}\big|\hspace{-1pt}\big| + \big|\hspace{-1pt}\big|\bar{\eta}^{n(N)}-\eta^{n(N)}_\nul\big|\hspace{-1pt}\big| \big) .
\end{equation}
From Proposition~\ref{lemma:etabar-eta0}, we have ${||\bar{\eta}^{n(N)}-\eta^{n(N)}_\nul|| \leq C d(N)}$, which thus approaches zero faster than $||\hat{\eta}_N-\bar{\eta}^{n(N)}||$, whose decay rate is according to~\eqref{eq:hateta-bareta_rate}.
For the decay rate of $||\thetals-\theta_\nul||$, it suffices then to take the rate of the slowest-decaying term, which is given by~\eqref{eq:thetalsdecay}, as we wanted to show.
\hfill $\qed$

\section{Consistency of Step 3}

The main purpose of this appendix is to prove Theorem~\ref{thm:consistencyWLS}.
However, before we do so, we introduce some results regarding the norm of some vectors and matrices.

\begin{itemize}[leftmargin=*]
\item $\norm{R^n_N}$ \textit{is bounded for all $n$ and sufficiently large $N$, w.p.1}
\end{itemize}
Let $R^n_N$ be defined as in~\eqref{eq:RnN}.
Then, from Lemma 4.2 in~\cite{ljung&wahlberg92}, we have that there exists $\bar{N}$ such that, w.p.1,
\begin{equation}
\norm{R^n_N} \leq C, \;\; \forall n, \; \forall N>\bar{N}.
\label{eq:RnN_bounded}
\end{equation}

\begin{itemize}[leftmargin=*]
\item $||T^{\minus 1}_n(\theta_\nul)||$ \textit{is bounded for all $n$}
\end{itemize}
We observe that, with $T_n(\theta)$ is given by~\eqref{eq:T0}, the inverse of $T_n(\theta_\nul)$ is given by
\begin{equation}
T^{\minus 1}_n(\theta) =
\begin{bmatrix}
T^c_n(\theta)^{\minus 1} & 0 \\
T^f_n(\theta)^{\minus 1} T^l_n(\theta) T^c_n(\theta)^{\minus 1} & T^f_c(\theta)^{\minus 1}
\end{bmatrix} ,
\label{eq:Tinv}
\end{equation}
evaluated at the true parameters $\theta_\nul$.
Also, $T^f_n(\theta_\nul)^{\minus 1}$, $T^c_n(\theta_\nul)^{\minus 1}$, and $T^l_n(\theta_\nul)$ are sub-matrices of $\mathbb{T}[1/F_\nul(q)]$, $\mathbb{T}[1/C_\nul(q)]$, and $\mathbb{T}[L_\nul(q)]$, respectively, where $\mathbb{T}[X(q)]$ is defined by~\eqref{eq:mathbbT}.
Then, 
\begin{align}
\big|\hspace{-1pt}\big| T^{\minus 1}_n(\theta_\nul) \big|\hspace{-1pt}\big| \leq &\big|\hspace{-1pt}\big| T^f_n(\theta_\nul)^{\minus 1}\big|\hspace{-1pt}\big| + \big|\hspace{-1pt}\big| T^c_n(\theta_\nul)^{\minus 1}\big|\hspace{-1pt}\big| \\ 
&+ \big|\hspace{-1pt}\big| T^f_n(\theta_\nul)^{\minus 1}\big|\hspace{-1pt}\big| \, \big|\hspace{-1pt}\big| T^l_n(\theta_\nul)\big|\hspace{-1pt}\big| \, \big|\hspace{-1pt}\big|T^c_n(\theta_\nul)^{\minus 1}\big|\hspace{-1pt}\big| \\
\leq &\norm{\mathbb{T}[1/F_\nul(q)]} + \norm{\mathbb{T}[1/C_\nul(q)]} \\
&+ \norm{\mathbb{T}[1/F_\nul(q)]}\norm{\mathbb{T}[L_\nul(q)]}\norm{\mathbb{T}[1/C_\nul(q)]} \\
\leq &C , \;\; \forall n, \label{eq:Tinv0_bounded}
\end{align}
where the last inequality follows from~\eqref{eq:Tinfbounded} and the fact that $1/F_\nul(q)$, $1/C_\nul(q)$, and $L_\nul(q)$ are stable transfer functions.

\begin{itemize}[leftmargin=*]
\item $||T^{\minus 1}_n(\thetals)||$ \textit{is bounded for all $n$ and sufficiently large $N$}
\end{itemize}
Consider the term $||T^{\minus 1}_n(\thetals)||$, where $T^{\minus 1}_n(\thetals)$ is given by~\eqref{eq:Tinv} evaluated at $\thetals$.
We have that, proceeding as in~\eqref{eq:Tinv0_bounded},
\begin{equation}
\begin{aligned}
&\big|\hspace{-1pt}\big|T^{\minus 1}_n(\thetals)\big|\hspace{-1pt}\big| \leq \big|\hspace{-1pt}\big|\mathbb{T}[1/C(q,\thetals)]\big|\hspace{-1pt}\big| + \big|\hspace{-1pt}\big|\mathbb{T}[1/F(q,\thetals)]\big|\hspace{-1pt}\big| \\ 
&\hphantom{\leq}+ \big|\hspace{-1pt}\big|\mathbb{T}[1/C(q,\thetals)]\big|\hspace{-1pt}\big| \, \big|\hspace{-1pt}\big|\mathbb{T}[L(q,\thetals)]\big|\hspace{-1pt}\big| \, \big|\hspace{-1pt}\big|\mathbb{T}[1/F(q,\thetals)]\big|\hspace{-1pt}\big|
\end{aligned}
\label{eq:invTls_triang}
\end{equation}
for all $n$. This will be bounded if $F(q,\thetals)$ and $C(q,\thetals)$ have all poles strictly inside the unit circle.
From Theorem~\ref{thm:consistencyLS} and stability of the true system by Assumption~\ref{ass:model_truesystem}, we conclude that there exists $\bar{N}$ such that $F(q,\thetals)$ and $C(q,\thetals)$ have all roots strictly inside the unit circle for all $N>\bar{N}$.
Thus, we have
\begin{equation}
\big|\hspace{-1pt}\big|T^{\minus 1}_n(\thetals)\big|\hspace{-1pt}\big| \leq C , \;\; \forall n, \; \forall N>\bar{N}, \; \text{w.p.1} .
\label{eq:invTls_bounded}
\end{equation}

\begin{itemize}[leftmargin=*]
\item $||T^{\minus 1}_n(\hat{\theta}^\text{LS}_N) - T^{\minus 1}_n(\theta_\nul)||$ \textit{tends to zero, as $N$ tends to infinity, w.p.1}
\end{itemize}
For the term $|| T^{\minus 1}_n(\hat{\theta}^\text{LS}_N) - T^{\minus 1}_n(\theta_\nul) ||$, with $n=n(N)$,
we have
\begin{align}
\begin{multlined}
\big|\hspace{-1pt}\big|T^{\minus 1}_n(\hat{\theta}^\text{LS}_N) - T^{\minus 1}_n(\theta_\nul)\big|\hspace{-1pt}\big| \leq \\
\leq \big|\hspace{-1pt}\big|T^{\minus 1}_n(\hat{\theta}^\text{LS}_N)\big|\hspace{-1pt}\big| \, \big|\hspace{-1pt}\big|T_n(\hat{\theta}^\text{LS}_N)-T_n(\theta_\nul)\big|\hspace{-1pt}\big| \, \big|\hspace{-1pt}\big|T^{\minus 1}_n(\theta_\nul)\big|\hspace{-1pt}\big| .
\end{multlined}
\label{eq:invTls-invT0}
\end{align}
Because $||X||\leq\sqrt{||X||_1||X||_\infty}$, where $X$ is an arbitrary matrix, we have that
\begin{equation}
\begin{aligned}
\big|\hspace{-1pt}\big|T_n(\hat{\theta}^\text{LS}_N)-T_n(\theta_\nul)\big|\hspace{-1pt}\big| &\leq \textstyle{\sum_{k=1}^{m_f+m_l+m_c}}|\hat{\theta}^{k,\text{LS}}_N - \theta_\nul^k| \\
&\leq C \big|\hspace{-1pt}\big|\hat{\theta}^\text{LS}_N-\theta_\nul\big|\hspace{-1pt}\big| , 
\end{aligned}
\label{eq:Tls-T0}
\end{equation}
with superscript $k$ denoting the $k^\text{th}$ element of the vector, we can use Theorem~\ref{thm:consistencyLS} to show that
\begin{equation}
\big|\hspace{-1pt}\big|T_n(\hat{\theta}^\text{LS}_N)-T_n(\theta_\nul)\big|\hspace{-1pt}\big| = \mathcal{O} \bigg( \sqrt{n(N)\frac{\log N}{N}} \big(1+d(N)\big) \bigg).
\label{eq:O_That-T0}
\end{equation}
From~\eqref{eq:D3} and~\eqref{eq:D4},
\begin{equation}
\sqrt{n^2(N)\frac{\log N}{N}}\big(1+d(N)\big) \to 0, \text{ as } N\to\infty ,
\end{equation}
and thus  
\begin{equation}
\big|\hspace{-1pt}\big|T_n(\hat{\theta}^\text{LS}_N)-T_n(\theta_\nul)\big|\hspace{-1pt}\big| \rightarrow 0, \wpone .
\end{equation}
Together with~\eqref{eq:Tinv0_bounded}, \eqref{eq:invTls_bounded}, and~\eqref{eq:invTls-invT0}, this implies that
\begin{equation}
\big|\hspace{-1pt}\big|T^{\minus 1}_n(\hat{\theta}^\text{LS}_N) - T^{\minus 1}_n(\theta_\nul)\big|\hspace{-1pt}\big| \rightarrow 0 , \wpone.
\label{eq:Tinvhat-Tinv0->0}
\end{equation}

The following two lemmas are useful for the invertibility of the weighted least squares problem~\eqref{eq:theta_wls}.

\begin{lemma}
\label{thm:wls_inv}
Let Assumption~\ref{ass:model_truesystem} hold and
\begin{equation}
\bar{M}(\eta_\nul,\theta_\nul) := \lim\limits_{n\to\infty} Q_n^\top(\eta^n_\nul) \bar{W}_n(\theta_\nul) Q_n(\eta^n_\nul),
\label{eq:M_etanuln_theta0}
\end{equation}
where $\bar{W}_n(\theta_\nul)$ is given by~\eqref{eq:barW}, and $Q_n(\eta^n_\nul)$ is defined by~\eqref{eq:Qeta} at the true parameters $\eta^n_\nul$.
Then, $\bar{M}(\eta_\nul,\theta_\nul)$ is invertible.
\end{lemma}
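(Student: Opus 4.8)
The plan is to show that $\bar{M}(\eta_\nul,\theta_\nul)$ is in fact positive definite, which is stronger than invertibility, by bounding its smallest eigenvalue away from zero uniformly in $n$. The route mirrors the proof of Lemma~\ref{thm:ls_inv}, but now the weighting $\bar{W}_n(\theta_\nul)$ must be controlled. First I would exploit the factored form of the weighting: since $T_n(\theta_\nul)$ is square and invertible (its inverse is given by~\eqref{eq:Tinv}), \eqref{eq:barW} can be rewritten as $\bar{W}_n(\theta_\nul) = \sigma_\nul^{\minus 2} T_n^{\minus\top}(\theta_\nul)\,\bar{R}^n\,T_n^{\minus 1}(\theta_\nul)$, which is symmetric positive definite. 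I would then bound its smallest eigenvalue from below, uniformly in $n$, via $\lambda_{\min}(\bar{W}_n(\theta_\nul)) = 1/\lambda_{\max}(\sigma_\nul^2 T_n(\theta_\nul)[\bar{R}^n]^{\minus 1}T_n^\top(\theta_\nul))$ together with $\lambda_{\max}(\sigma_\nul^2 T_n[\bar{R}^n]^{\minus 1}T_n^\top) \leq \sigma_\nul^2\norm{T_n(\theta_\nul)}^2\norm{[\bar{R}^n]^{\minus 1}}$ and the bound $\norm{T_n(\theta_\nul)}\leq C$ from~\eqref{eq:T0_bounded}; it then remains only to bound $\norm{[\bar{R}^n]^{\minus 1}}$ uniformly in $n$.

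The second step combines this with the elementary eigenvalue inequality $\lambda_{\min}(Q_n^\top(\eta_\nul^n)\bar{W}_n(\theta_\nul)Q_n(\eta_\nul^n)) \geq \lambda_{\min}(\bar{W}_n(\theta_\nul))\,\lambda_{\min}(Q_n^\top(\eta_\nul^n)Q_n(\eta_\nul^n))$, which holds because $\bar{W}_n(\theta_\nul)$ is positive definite and $Q_n(\eta_\nul^n)$ has full column rank for all large $n$ (established in the proof of Lemma~\ref{thm:ls_inv}). The second factor equals $\lambda_{\min}(M(\eta_\nul^n))$, and since $M(\eta_\nul^n)\to M(\eta_\nul)$ with $M(\eta_\nul)$ invertible by Lemma~\ref{thm:ls_inv} and~\eqref{eq:Meta0def}, and since eigenvalues depend continuously on the matrix entries, $\lambda_{\min}(M(\eta_\nul^n))$ is bounded below by a positive constant for all $n$ large. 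Together with the uniform lower bound on $\lambda_{\min}(\bar{W}_n(\theta_\nul))$ from the first step, this yields $\liminf_{n\to\infty}\lambda_{\min}(Q_n^\top(\eta_\nul^n)\bar{W}_n(\theta_\nul)Q_n(\eta_\nul^n)) > 0$, so the limit $\bar{M}(\eta_\nul,\theta_\nul)$ is positive definite and hence invertible.

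The main obstacle is the uniform lower bound on the spectrum of $\bar{R}^n$, equivalently the uniform bound $\norm{[\bar{R}^n]^{\minus 1}}\leq C$. I would obtain this from the coercivity of the joint spectrum of $[r_t\; e_t]^\top$ in Assumption~\ref{ass:input}~v): because $\bar{R}^n$ is the covariance matrix of the regressor $\varphi_t^n$, which is a stably filtered version of $\{r_t\}$ and $\{e_t\}$ through the closed loop, its spectrum is likewise bounded away from zero, and the classical bound relating the eigenvalues of a (block-)Toeplitz covariance matrix to the essential infimum of the underlying spectral density then gives $\bar{R}^n \succeq \delta' I$ for some $\delta'>0$ independent of $n$. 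A secondary point to address is that the limit defining $\bar{M}(\eta_\nul,\theta_\nul)$ exists; this follows from the convergence of the individual factors to their frequency-domain integral representations, consistent with the identification $\bar{M}(\eta_\nul,\theta_\nul)=M_\text{CR}$ that emerges in the asymptotic-covariance analysis.
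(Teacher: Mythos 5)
Your proof is correct, but it takes a genuinely different route from the paper's. The paper computes the limit explicitly in the frequency domain: it writes $\bar{\mathbb{E}}[\varphi_t^n(\varphi_t^n)^\top]$ as $\frac{1}{2\pi}\int_{-\pi}^{\pi}\Lambda_n\Phi_z\Lambda_n^*\,d\omega$, shows that $\lim_{n\to\infty}Q_n^\top(\eta_\nul^n)T_n^{-\top}(\theta_\nul)\Lambda_n=\Omega$, and thereby identifies $\bar{M}(\eta_\nul,\theta_\nul)=M_\text{CR}$ as in \eqref{eq:barM=MCR}, inheriting invertibility from the existence of the CR bound for an informative experiment. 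You instead stay with the finite-$n$ matrices and bound smallest eigenvalues uniformly: $\lambda_{\min}(\bar{W}_n(\theta_\nul))\geq 1/\big(\sigma_\nul^2\norm{T_n(\theta_\nul)}^2\norm{[\bar{R}^n]^{\minus 1}}\big)$ using \eqref{eq:T0_bounded} and a uniform bound on $\norm{[\bar{R}^n]^{\minus 1}}$, and then $\lambda_{\min}\big(Q_n^\top\bar{W}_nQ_n\big)\geq\lambda_{\min}(\bar{W}_n)\,\lambda_{\min}(Q_n^\top Q_n)$, with the second factor controlled by Lemma~\ref{thm:ls_inv} and continuity of eigenvalues. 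These steps are all sound; in particular, the coercivity route to $\norm{[\bar{R}^n]^{\minus 1}}\leq C$ works because the filter mapping $[r_t\;\,e_t]^\top$ to the regressor signals has determinant $-H_\nul/(1+KG_\nul)$, which is bounded away from zero on the unit circle (inverse stability of $H_\nul$, boundedness of $K$ and $G_\nul$), so the regressor spectrum is coercive and the classical block-Toeplitz eigenvalue bound applies---this is precisely the role Assumption~\ref{ass:input}~v) plays in the framework the paper borrows from \cite{ljung&wahlberg92}. What your route buys is a self-contained, quantitative, purely linear-algebraic argument that never invokes informativity of the experiment or existence of the CR bound. What the paper's route buys is the identity $\bar{M}(\eta_\nul,\theta_\nul)=M_\text{CR}$, which is not incidental: it is reused essentially in the proof of Theorem~\ref{thm:asycov}, so the frequency-domain computation does double duty. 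One point to tighten in your write-up: your argument gives positive definiteness of the limit only once the limit is known to exist, and for existence you defer to the very frequency-domain representation the paper constructs; strictly, that existence step should be carried out rather than cited as being ``consistent with'' the identification $\bar{M}=M_\text{CR}$.
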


\begin{proof}
Using~\eqref{eq:Rbar} and~\eqref{eq:barW}, we re-write~\eqref{eq:M_etanuln_theta0} as
\begin{multline}
\bar{M}(\eta_\nul,\theta_\nul) = \\ \lim\limits_{n\to\infty} Q^\top_n(\eta^n_\nul) T^{\minus\top}_n(\theta_\nul) \bar{\mathbb{E}} \left[ \varphi_t^n (\varphi_t^n)^\top \right] T^{\minus 1}_n(\theta_\nul) Q_n(\eta^n_\nul) .
\label{eq:barM_rewritten}
\end{multline}
Re-writing $\varphi_t^n$, defined in~\eqref{eq:phi}, as
\begin{equation}
\varphi_t^n =
\begin{bmatrix}
-\Gamma_n y_t \\ \Gamma_n u_t
\end{bmatrix}
=
\begin{bmatrix}
-\Gamma_n G_\nul(q) & -\Gamma_n H_\nul(q) \\
\Gamma_n & 0
\end{bmatrix}
\begin{bmatrix}
u_t \\ e_t
\end{bmatrix} ,
\end{equation}
we can then write
\begin{equation}
\bar{\mathbb{E}} \left[ \varphi_t^n (\varphi_t^n)^\top \right] =
\frac{1}{2\pi} \int_{\minus\pi}^{\pi} \Lambda_n(e^{i\omega}) \Phi_z \Lambda_n^*(e^{i\omega}) d\omega ,
\end{equation}
where
\begin{equation}
\Lambda_n(q) =
\begin{bmatrix}
-\Gamma_n G_\nul(q) & -\Gamma_n H_\nul(q) \\
\Gamma_n & 0
\end{bmatrix} .
\end{equation}
Then, we can re-write~\eqref{eq:barM_rewritten} as
\begin{multline}
\bar{M}(\eta_\nul,\theta_\nul) = \frac{1}{2\pi} \int_{\minus\pi}^{\pi} \lim\limits_{n\to\infty} Q^\top_n(\eta^n_\nul) T^{\minus\top}_n(\theta_\nul)  \Lambda_n(e^{i\omega}) \\ \cdot \Phi_z \Lambda_n^*(e^{i\omega}) T^{\minus 1}_n(\theta_\nul) Q_n(\eta^n_\nul) d\omega .
\label{eq:M0_freqdom}
\end{multline}
Moreover,
\begin{equation}
Q_n^\top(\eta^n_\nul)T^{\minus\top}_n(\theta_\nul) =
\begin{bmatrix}
0 & -\mathcal{T}_{n,m_f}^\top\left(\frac{B_\nul}{F_\nul}\right) \\
0 & \mathcal{T}_{n,m_l}^\top\left(\frac{A_\nul}{F_\nul}\right) \\
-\mathcal{T}_{n,m_c}^\top\left(\frac{A_\nul}{C_\nul}\right) & -\mathcal{T}_{n,m_c}^\top\left(\frac{L_\nul A_\nul}{F_\nul C_\nul}\right) \\
\mathcal{T}_{n,m_d}^\top\left(\frac{1}{C_\nul}\right) & \mathcal{T}_{n,m_d}^\top\left(\frac{L_\nul}{F_\nul C_\nul}\right)
\end{bmatrix} ,
\end{equation}
where the argument $q$ of the polynomials was dropped for notational simplicity.
In turn, we can also write
\begin{equation}
\begin{aligned}
&Q_n^\top(\eta^n_\nul)T^{\minus\top}_n(\theta_\nul)\Lambda_n = \\
&\begin{bmatrix}
-\mathcal{T}_{n,m_f}^\top\!\!\left(\!\frac{B_\nul}{F_\nul}\!\right) \!\! \Gamma_n & 0\\
\mathcal{T}_{n,m_l}^\top\!\!\left(\!\frac{A_\nul}{F_\nul}\!\right) \!\! \Gamma_n & 0 \\
\mathcal{T}_{n,m_c}^\top\!\!\left(\!\frac{A_\nul}{C_\nul}\!\right) \!\! \Gamma_n G_\nul \!\!-\!\! \mathcal{T}_{n,m_c}^\top\!\!\left(\!\frac{L_\nul A_\nul}{F_\nul C_\nul}\!\right) \!\! \Gamma_n & \mathcal{T}_{n,m_c}^\top\!\!\left(\!\frac{A_\nul}{C_\nul}\!\right) \!\! \Gamma_n H_\nul \\
-\mathcal{T}_{n,m_d}^\top\!\!\left(\!\frac{1}{C_\nul}\!\right) \!\! \Gamma_n G_\nul \!\!+\!\! \mathcal{T}_{n,m_d}^\top\!\!\left(\!\frac{L_\nul}{F_\nul C_\nul}\!\right) \!\! \Gamma_n & -\mathcal{T}_{n,m_d}^\top\!\!\left(\!\frac{1}{C_\nul}\!\right) \!\! \Gamma_n H_\nul
\end{bmatrix} .
\end{aligned}
\end{equation}
It is possible to observe that, for some polynomial ${X(q)=\sum_{k=0}^\infty x_k q^{\minus k}}$, $\lim\limits_{n\to\infty} \mathcal{T}_{n,m}^\top (X(q)) \Gamma_n = X(q) \Gamma_m$.
Then, using also~\eqref{eq:truearxpoly}, we have $\lim\limits_{n\to\infty} Q_n^\top(\eta^n_\nul)T^{\minus\top}_n(\theta_\nul)\Lambda_n = \Omega$, where $\Omega$ is given by~\eqref{eq:Omega}. This allows us to re-write~\eqref{eq:M0_freqdom} as
\begin{equation}
\bar{M}(\eta_\nul,\theta_\nul) = \frac{1}{2\pi} \int_{\minus\pi}^{\pi} \Omega \Phi_z \Omega^* d\omega = M_\text{CR} ,
\label{eq:barM=MCR}
\end{equation}
which is invertible because the CR bound exists for an informative experiment~\cite{ljung99}.
\end{proof}

\begin{lemma}
\label{lem:wls_inv_stoch}
Let Assumptions~\ref{ass:model_truesystem}, \ref{ass:noise}, \ref{ass:input}, and~\ref{ass:ARXorder} hold.
Also, let $M(\hat{\eta}_N,\thetals) := Q_n^\top(\hat{\eta}_N) W_n(\thetals) Q_n(\hat{\eta}_N)$, where ${\hat{\eta}_N:=\hat{\eta}^{n(N)}_N}$ is defined by~\eqref{eq:eta_regls}, $Q_n(\hat{\eta}_N)$ is defined by~\eqref{eq:Qeta} evaluated at the estimated parameters $\hat{\eta}_N$, and $W_n(\thetals)$ is defined by~\eqref{eq:Wls}.
Then, 
\begin{equation}
M(\hat{\eta}_N,\thetals) \to \bar{M}(\eta_\nul,\theta_\nul), \wpone .
\label{eq:M0->limnM}
\end{equation}
\end{lemma}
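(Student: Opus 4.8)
The plan is to factor the weighting through \eqref{eq:Wls} and to recognize $M(\hat{\eta}_N,\thetals)$ as a product of five matrix sequences of growing dimension $n=n(N)$, to which Proposition~\ref{prop:Xdiff->0} can be applied, followed by passing the resulting deterministic limit through $n\to\infty$. Explicitly, I would write
\begin{equation}
M(\hat{\eta}_N,\thetals) = Q_n^\top(\hat{\eta}_N)\, T_n^{\minus\top}(\thetals)\, R^n_N\, T_n^{\minus 1}(\thetals)\, Q_n(\hat{\eta}_N) ,
\end{equation}
and pair each stochastic factor with a deterministic reference, namely $Q_n(\hat{\eta}_N)$ with $Q_n(\eta^{n(N)}_\nul)$, the factor $T_n^{\minus 1}(\thetals)$ with $T_n^{\minus 1}(\theta_\nul)$, and $R^n_N$ with $\bar{R}^{n(N)}$ (and the two transposed factors correspondingly).

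Next I would verify the two hypotheses of Proposition~\ref{prop:Xdiff->0} for each factor. The w.p.1 convergence of each factor to its reference is already in hand: \eqref{eq:Qhateta-Qeta0} handles $Q_n(\hat{\eta}_N)$, while \eqref{eq:Tinvhat-Tinv0->0} handles $T_n^{\minus 1}(\thetals)$ and hence, by transposition, $T_n^{\minus\top}(\thetals)$. For $R^n_N$ I would invoke Proposition~\ref{lemma:RnN-barRn} together with \eqref{eq:D3}: the latter forces the rate $\mathcal{O}\big(2n(N)\sqrt{\log N/N}+Cn^2(N)/N\big)\to 0$ w.p.1, so that $\norm{R^{n(N)}_N-\bar{R}^{n(N)}}\to 0$. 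The uniform (in $n$) boundedness of the references follows from \eqref{eq:bound_Qeta0} and \eqref{eq:Tinv0_bounded}, together with a uniform bound $\norm{\bar{R}^n}\leq C$ obtained either from the bounded spectral density of $\varphi^n_t$ or, more directly, from \eqref{eq:RnN_bounded} combined with the convergence just established; boundedness of the stochastic factors themselves is covered by \eqref{eq:bound_Qhateta}, \eqref{eq:invTls_bounded}, and \eqref{eq:RnN_bounded}. Proposition~\ref{prop:Xdiff->0} with $p=5$ then gives, w.p.1,
\begin{equation}
\norm{ M(\hat{\eta}_N,\thetals) - Q_n^\top(\eta^{n(N)}_\nul)\, T_n^{\minus\top}(\theta_\nul)\, \bar{R}^{n(N)}\, T_n^{\minus 1}(\theta_\nul)\, Q_n(\eta^{n(N)}_\nul) } \to 0 .
\end{equation}

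It then remains to pass the deterministic reference product to its limit. By \eqref{eq:Rbar} we have $\bar{R}^n=\bar{\mathbb{E}}[\varphi^n_t(\varphi^n_t)^\top]$, so the reference product is exactly the $n(N)$-th term of the sequence defining $\bar{M}(\eta_\nul,\theta_\nul)$ in \eqref{eq:barM_rewritten} (equivalently \eqref{eq:M_etanuln_theta0}). Since $n(N)\to\infty$ by Assumption~\ref{ass:ARXorder} (D1) and this limit exists by Lemma~\ref{thm:wls_inv}, the reference product converges to $\bar{M}(\eta_\nul,\theta_\nul)$; a triangle inequality combining this deterministic convergence with the previous display yields \eqref{eq:M0->limnM}.

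The main obstacle is not any single estimate but the fact that Proposition~\ref{prop:Xdiff->0} is being applied to factors whose dimensions grow with $N$, so every ``difference $\to 0$'' and every bound on a reference factor must hold \emph{uniformly} in $n$ rather than merely for each fixed $n$. All the ingredients except one are previously established uniform bounds, and the only genuinely new input is the w.p.1 convergence $R^{n(N)}_N\to\bar{R}^{n(N)}$, which hinges delicately on the rate in Proposition~\ref{lemma:RnN-barRn} being annihilated by \eqref{eq:D3}; once that is secured, the remainder is bookkeeping.
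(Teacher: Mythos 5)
Your proof is correct and essentially coincides with the paper's: the paper likewise reduces the claim to showing $\big\|M(\hat{\eta}_N,\thetals)-Q_n^\top(\eta_\nul^{n(N)})\bar{W}_n(\theta_\nul)Q_n(\eta_\nul^{n(N)})\big\|\to 0$ w.p.1 via Proposition~\ref{prop:Xdiff->0}, verified with the same ingredients you cite (\eqref{eq:Qhateta-Qeta0}, \eqref{eq:bound_Qeta0}, \eqref{eq:bound_Qhateta}, \eqref{eq:RnN_bounded}, \eqref{eq:Tinv0_bounded}, \eqref{eq:invTls_bounded}, \eqref{eq:Tinvhat-Tinv0->0}, and Proposition~\ref{lemma:RnN-barRn}), and then lets the deterministic reference product tend to $\bar{M}(\eta_\nul,\theta_\nul)$ using Lemma~\ref{thm:wls_inv}. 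The only difference is organizational---the paper applies Proposition~\ref{prop:Xdiff->0} in two nested stages (first to get $\|W_n(\thetals)-\bar{W}_n(\theta_\nul)\|\to 0$ as a three-factor product, then to the outer product $Q_n^\top W_n Q_n$) whereas you flatten everything into a single five-factor application---and your explicit justification of the uniform bound on $\|\bar{R}^n\|$ is, if anything, more careful than the paper's, which cites \eqref{eq:RnN_bounded} for that bound without comment.
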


\begin{proof}
For the purpose of showing~\eqref{eq:M0->limnM}, we will show that
\begin{equation}
\begin{multlined}
\big|\hspace{-1pt}\big|M(\hat{\eta}_N,\thetals) - Q_n^\top(\eta^{n(N)}_\nul) \bar{W}_n(\theta_\nul) Q_n(\eta^{n(N)}_\nul)\big|\hspace{-1pt}\big| \\ \to 0, \wpone.
\end{multlined}
\label{eq:normQWQhat-QWQ}
\end{equation}
We apply Proposition~\ref{prop:Xdiff->0}, whose conditions can be verified using~\eqref{eq:Qhateta-Qeta0}, \eqref{eq:bound_Qhateta}, \eqref{eq:RnN_bounded}, and~\eqref{eq:invTls_bounded}, but additionally we need to verify that $||\bar{W}_n(\theta_\nul)||$ is bounded and $||W_n(\hat{\theta}^\text{LS}_N)-\bar{W}_n(\theta_\nul)||$ tends to zero.
For the first, we have that
\begin{equation}
||\bar{W}_n(\theta_\nul)|| \leq ||T^{-1}_n(\theta_\nul)||^2 ||\bar R^n|| \leq C,
\label{eq:barWnbound}
\end{equation}
following from~\eqref{eq:RnN_bounded} and~\eqref{eq:Tinv0_bounded}.
For the second, with $W_n(\thetals)$ given by~\eqref{eq:Wls} and $\bar{W}_n(\theta_\nul)$ by~\eqref{eq:barW}, conditions in Proposition~\ref{prop:Xdiff->0} are satisfied using~\eqref{eq:RnN_bounded}, \eqref{eq:Tinv0_bounded}, \eqref{eq:Tinvhat-Tinv0->0} and Proposition~\ref{lemma:RnN-barRn}, from where it follows that 
\begin{equation}
||W_n(\thetals)-\bar{W}_n(\theta_\nul)|| \to 0, \wpone .
\label{eq:hatWn-barWn->0}
\end{equation}
Having shown~\eqref{eq:barWnbound} and~\eqref{eq:hatWn-barWn->0}, the assumptions of Proposition~\ref{prop:Xdiff->0} are verified, from which~\eqref{eq:normQWQhat-QWQ} follows and implies~\eqref{eq:M0->limnM}.
\end{proof}

We now have the necessary results to prove Theorem~\ref{thm:consistencyWLS}.

\subsubsection*{Proof of Theorem~\ref{thm:consistencyWLS}}
\label{app:wls_consistency_proof}

Similarly to~\eqref{eq:thetals-theta0}, we write
\begin{multline}
\thetawls-\theta_\nul \\ = M^{\minus 1}(\hat{\eta}_N,\thetals) Q_n^\top(\hat{\eta}_N) W_n(\thetals) T_n(\theta_\nul) (\hat{\eta}_N-\eta^{n(N)}_\nul) ,
\label{eq:thetawls-theta0_expand}
\end{multline}
and analyze
\begin{multline}
\big|\hspace{-1pt}\big|\thetawls-\theta_\nul\big|\hspace{-1pt}\big| \leq \big|\hspace{-1pt}\big|M^{\minus 1}(\hat{\eta}_N,\thetals)\big|\hspace{-1pt}\big| \, \big|\hspace{-1pt}\big|Q_n(\hat{\eta}_N)\big|\hspace{-1pt}\big| \\ \cdot \big|\hspace{-1pt}\big|W_n(\thetals)\big|\hspace{-1pt}\big| \, \big|\hspace{-1pt}\big|T_n(\theta_\nul)\big|\hspace{-1pt}\big| \, \big|\hspace{-1pt}\big|\hat{\eta}_N-\eta^{n(N)}_\nul\big|\hspace{-1pt}\big| .
\label{eq:norm_thetawls-theta0}
\end{multline}
From Lemma~\ref{lem:wls_inv_stoch}, $M(\hat{\eta}_N,\thetals)$ converges to $\bar{M}(\eta_\nul,\theta_\nul)$, which is invertible from Lemma~\ref{thm:wls_inv}. 
Hence, because the map from the entries of the matrix to its eigenvalues is continuous, $M(\hat{\eta}_N,\thetals)$ is invertible for sufficiently large $N$, and therefore its norm is bounded, as it is a matrix of fixed dimensions.
Also, from~\eqref{eq:bound_Qhateta}, $\norm{Q_n(\hat{\eta}_N)}$ is bounded for sufficiently large $N$.
Moreover, we have that, making explicit that $n=n(N)$,
\begin{equation}
\big|\hspace{-1pt}\big|W_{n(N)}(\thetals)\big|\hspace{-1pt}\big| \leq \big|\hspace{-1pt}\big|T^{\minus 1}_{n(N)}(\thetals)\big|\hspace{-1pt}\big|^2 \, \big|\hspace{-1pt}\big|R^{n(N)}_N\big|\hspace{-1pt}\big| .
\label{eq:normWthetals}
\end{equation}
Then, from~\eqref{eq:invTls_bounded} and~\eqref{eq:RnN_bounded}, we have
\begin{equation}
\big|\hspace{-1pt}\big|W_{n(N)}(\thetals)\big|\hspace{-1pt}\big| \leq C, \; \forall N>\bar{N} .
\label{eq:Wls_bounded}
\end{equation}
Finally, using also~\eqref{eq:bound_Qeta0}, \eqref{eq:T0_bounded}, and~\eqref{eq:hateta-eta0}, we conclude that
\begin{equation}
\big|\hspace{-1pt}\big|\thetawls-\theta_\nul\big|\hspace{-1pt}\big| \to 0, \wpone. \qed
\end{equation}

\section{Asymptotic Distribution and Covariance of Step 3}
\label{app:asymp_cov_proof}

The purpose of this appendix is to prove Theorem~\ref{thm:asycov}: asymptotic distribution and covariance of
\begin{equation}
\sqrt{N} (\thetawls-\theta_\nul) = \sqrt{N} \Upsilon^n(\hat{\eta}_N,\thetals) (\hat{\eta}_N-\eta^{n(N)}_\nul) ,
\label{eq:sqrtN_thetadiff_1}
\end{equation}
which is re-written from~\eqref{eq:thetawls-theta0_expand}, where 
\begin{equation}
\begin{multlined}
\Upsilon^n(\hat{\eta}_N,\thetals) = [Q_n^\top(\hat{\eta}_N) W_n(\thetals) Q_n(\hat{\eta}_N)]^{-1} \\ \cdot Q_n^\top(\hat{\eta}_N) W_n(\thetals) T_n(\theta_\nul).
\end{multlined}
\end{equation}
If $\Upsilon^n(\hat{\eta}_N,\thetals)$ were of fixed dimensions, the standard idea would be to show that $\Upsilon^n(\hat{\eta}_N,\thetals)$ converges w.p.1 to a deterministic matrix, as consequence of $\hat\eta_N$ and $\thetals$ being consistent estimates of $\eta_\nul$ and $\theta_\nul$, respectively.
Then, for computing the asymptotic distribution and covariance of~\eqref{eq:sqrtN_thetadiff_1}, one can consider the asymptotic distribution and covariance of $\sqrt{N}(\hat{\eta}_N-\eta^{n(N)}_\nul)$ while $\Upsilon^n(\hat{\eta}_N,\thetals)$ can be replaced by the deterministic matrix it converges to.
This standard result follows from~\cite[Lemma B.4]{soderstromstoica89book}, but it is not applicable here because the dimensions of $\Upsilon^n(\hat{\eta}_N,\thetals)$ and $\hat{\eta}_N-\eta^{n(N)}_\nul$ are not fixed.
In this scenario, Proposition~\ref{thm:Upsilon} must be used instead.
However, \eqref{eq:sqrtN_thetadiff_1} is not ready to be used with Proposition~\ref{thm:Upsilon}, because it requires $\hat{\eta}_N-\eta^n_\nul$ to be pre-multiplied by a deterministic matrix.
The key idea of proving Theorem~\ref{thm:asycov} is to show that~\eqref{eq:sqrtN_thetadiff_1} has the same asymptotic distribution and covariance as an expression of the form $\bar \Upsilon^n [\hat{\eta}_N-\eta^{n(N)}_\nul]$, where $\bar \Upsilon^n$ is a deterministic matrix, and then apply Proposition~\ref{thm:Upsilon}.
The following result will be useful for this purpose.

\begin{prop}
Let $\hat{x}_N = \sqrt{N} \hat{A}_N \hat{B}_N \hat{\delta}_N $ be a finite-dimensional vector, where $\hat{A}_N$ and $\hat{B}_N$ are stochastic matrices and $ \hat{\delta}_N $ is a stochastic vector of compatible dimensions.
The dimensions may increase to infinity as function of $N$, except for the number of rows of $\hat{A}_N$, which is fixed.
We assume that there is $\bar{N}$ such that $||\hat{A}_N||<C$ for all $N>\bar{N}$, there is $\bar{B}$ such that $||\hat{B}_N-\bar{B}||\to 0$ as $N\to\infty$ w.p.1, and $||\hat{\delta}_N||\to 0$ as $N\to\infty$ w.p.1.
Then, if $\sqrt{N}||\hat{B}_N-\bar{B}||\,||\hat{\delta}_N||\to 0$ as $N\to\infty$ w.p.1, $\hat{x}_N$ and $\sqrt{N} \hat{A}_N \bar{B} \hat{\delta}_N$ have the same asymptotic distribution and covariance.
\label{prop:SSlemma}
\end{prop}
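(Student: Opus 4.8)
The plan is to show that the difference between $\hat{x}_N$ and the target expression $\sqrt{N}\,\hat{A}_N\bar{B}\hat{\delta}_N$ vanishes w.p.1, and then to invoke a converging-together (Slutsky-type) argument: two finite-dimensional random sequences that differ by a term tending to zero in probability share the same asymptotic distribution, and therefore the same asymptotic covariance, the latter being read off from the common limiting normal law.

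First I would form the difference directly from the definitions,
\begin{equation}
\hat{x}_N - \sqrt{N}\,\hat{A}_N\bar{B}\hat{\delta}_N = \sqrt{N}\,\hat{A}_N(\hat{B}_N-\bar{B})\hat{\delta}_N ,
\end{equation}
and take norms. Submultiplicativity of the operator norm applies despite the growing inner dimensions, because the outer dimension---the fixed number of rows of $\hat{A}_N$, i.e.\ the length of the resulting vector---does not change with $N$; this yields
\begin{equation}
\norm{\hat{x}_N - \sqrt{N}\,\hat{A}_N\bar{B}\hat{\delta}_N} \leq \norm{\hat{A}_N}\,\Big(\sqrt{N}\,\norm{\hat{B}_N-\bar{B}}\,\norm{\hat{\delta}_N}\Big) .
\end{equation}
By hypothesis there is $\bar{N}$ with $\norm{\hat{A}_N}<C$ for all $N>\bar{N}$, so the first factor is eventually bounded, while the bracketed factor is exactly the quantity assumed to tend to zero w.p.1. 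A bounded sequence multiplied by a sequence tending to zero tends to zero, so the left-hand side converges to zero, w.p.1. (The separate hypotheses $\norm{\hat{B}_N-\bar{B}}\to 0$ and $\norm{\hat{\delta}_N}\to 0$ are not strictly needed for this estimate, but they make the combined-rate assumption the natural one and fix $\bar{B}$ as the correct centering.)

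Finally, since the two sequences differ by a vector tending to zero w.p.1---and hence in probability---the converging-together lemma implies that whenever one of them converges in distribution, the other converges to the same limit; with the limiting law being the Gaussian $As\mathcal{N}(a,P)$ supplied by Proposition~\ref{thm:Upsilon}, both the asymptotic mean and covariance coincide. I expect the only genuinely delicate point to be this last transfer: one must ensure that vanishing of the difference carries over the \emph{full} limiting distribution and not merely its first two moments. This is unproblematic here precisely because the asymptotic covariance is defined as the covariance of the limiting normal distribution (as encoded in the notation $As\mathcal{N}$), rather than as a limit of finite-sample covariances, so no additional uniform-integrability argument is required; the growing dimensions enter only through the norm bound above and are harmless once that estimate is established.
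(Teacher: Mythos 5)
Your proposal is correct and follows essentially the same route as the paper's proof: the identical decomposition $\hat{x}_N = \sqrt{N}\hat{A}_N\bar{B}\hat{\delta}_N + \sqrt{N}\hat{A}_N(\hat{B}_N-\bar{B})\hat{\delta}_N$, the same norm bound using the eventual bound on $\norm{\hat{A}_N}$ together with the assumed rate $\sqrt{N}\norm{\hat{B}_N-\bar{B}}\,\norm{\hat{\delta}_N}\to 0$, and the same concluding converging-together step (the paper invokes Lemma B.4 of S\"oderstr\"om and Stoica, which is precisely the Slutsky-type result you describe, applicable because the fixed row dimension of $\hat{A}_N$ keeps the vectors finite-dimensional).
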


\begin{proof}
We begin by writing
\begin{equation}
\hat{x}_N = \sqrt{N} \hat{A}_N \bar{B} \hat{\delta}_N + \sqrt{N} \hat{A}_N (\hat{B}_N - \bar{B}) \hat{\delta}_N .
\label{eq:hatx_expand}
\end{equation}
Although some of the matrix and vector dimensions may increase to infinity with $N$, the number of rows of $\hat{A}_N$ is fixed, which makes $\hat{x}_N$ finite dimensional, to which~\cite[Lemma B.4]{soderstromstoica89book} may be applied.
Then, it is a consequence of this lemma that $\hat{x}_N$ and $\sqrt{N}\hat{A}_N \bar{B} \hat{\delta}_N$ have the same asymptotic distribution and covariance if the second term on the right side of~\eqref{eq:hatx_expand} tends to zero with probability one.
By assumption, we have
\begin{equation}
\begin{multlined}
|| \sqrt{N} \hat{A}_N (\hat{B}_N - \bar{B}) \hat{\delta}_N || \leq
\sqrt{N} \, ||\hat{A}_N|| \, ||\hat{B}_N - \bar{B}|| \, ||\hat{\delta}_N|| \\
\to 0, \wpone,
\end{multlined}
\end{equation}
which completes the proof.
\end{proof}

We now have the necessary results to prove Theorem~\ref{thm:asycov}.
\subsubsection*{Proof of Theorem~\ref{thm:asycov}}
We start by re-writing~\eqref{eq:sqrtN_thetadiff_1} as
\begin{equation}
\sqrt{N} (\thetawls-\theta_\nul) = M^{\minus 1}(\hat{\eta}_N,\thetals) x(\hat{\eta}_N,\thetals) ,
\end{equation}
where 
\begin{equation}
\begin{aligned}
M(\hat{\eta}_N,\thetals) &=  Q_n^\top(\hat{\eta}_N) W_n(\thetals) Q_n(\hat{\eta}_N) , \\
x(\hat{\eta}_N,\thetals) &=  \sqrt{N} Q_n^\top(\hat{\eta}_N) W_n(\thetals) T_n(\theta_\nul) (\hat{\eta}_N\!-\!\eta^{n(N)}_\nul) .
\end{aligned}
\label{eq:Mx}
\end{equation}
Both $M(\hat{\eta}_N,\thetals)$ and $x(\hat{\eta}_N,\thetals)$ are of fixed dimension, and we have from~\eqref{eq:barM=MCR} and~\eqref{eq:M0->limnM} that
\begin{equation}
M^{\minus 1}(\hat{\eta}_N,\thetals) \to M_\text{CR}^{\minus 1} , \wpone .
\end{equation}
Then, if we assume that
\begin{equation}
x(\hat{\eta}_N,\thetals) \sim As \mathcal{N} (0,P) ,
\label{eq:xdist}
\end{equation}
we have that, from~\cite[Lemma B.4]{soderstromstoica89book},
\begin{equation}
\sqrt{N}(\thetawls-\theta_\nul) \sim As\mathcal{N} \bigg(0,M_\text{CR}^{\minus 1} P M_\text{CR}^{\minus 1} \bigg) .
\label{eq:thetawls_dist_MandP}
\end{equation}
We will proceed to show that~\eqref{eq:xdist} is verified with
\begin{equation}
P = \sigma_\nul^2 \lim\limits_{n\to\infty} Q_n^\top(\eta^n_\nul) \bar{W}_n(\theta_\nul) Q_n(\eta^n_\nul) = \sigma_\nul^2 M_\text{CR},
\label{eq:P}
\end{equation}
where the second equality follows directly from~\eqref{eq:M_etanuln_theta0} and~\eqref{eq:barM=MCR}. 
We now proceed to show the first equality.

In the following arguments, we will apply Proposition~\ref{prop:SSlemma} repeatedly to $x(\hat{\eta}_N,\thetals)$. 
This required the boundedness of some matrices; however, because all the matrices in $x(\hat{\eta}_N,\thetals)$ have been shown to be bounded for sufficiently large $N$ w.p.1, for readability we will refrain from referring to this every time Proposition~\ref{prop:SSlemma} is applied.

Because it is more convenient to work with $\bar{\eta}^{n(N)}$ than $\eta_\nul^{n(N)}$, we start by re-writing $x(\hat{\eta}_N,\thetals)$ as
\begin{equation}
\begin{multlined}
x(\hat{\eta}_N,\thetals) = \sqrt{N} Q_n^\top(\hat{\eta}_N) W_n(\hat{\theta}^\text{LS}_N) T_n(\theta_\nul) (\hat{\eta}_N-\bar{\eta}^{n(N)})\\
+ Q_n^\top(\hat{\eta}_N) W_n(\hat{\theta}^\text{LS}_N) T_n(\theta_\nul) \sqrt{N} (\bar{\eta}^{n(N)}-\eta^{n(N)}_\nul) .
\end{multlined}
\end{equation}
Using Proposition~\ref{lemma:etabar-eta0}, we have, for sufficiently large $N$ w.p.1,
\begin{align}
\begin{split}
\big|\hspace{-1pt}\big|Q_n^\top(\hat{\eta}_N) W_n(\hat{\theta}^\text{LS}_N) T_n(\theta_\nul) \sqrt{N} (\bar{\eta}^{n(N)}-\eta^{n(N)}_\nul)\big|\hspace{-1pt}\big| \leq& \\
\leq C \sqrt{N} d(N) \rightarrow 0, \text{ as } N\rightarrow\infty .&
\end{split}
\end{align}
Using an identical argument to Proposition~\ref{prop:SSlemma}, we have that $x(\hat{\eta}_N,\thetals)$ and
\begin{equation}
\sqrt{N} Q_n^\top(\hat{\eta}_N) W_n(\hat{\theta}^\text{LS}_N) T_n(\theta_\nul) (\hat{\eta}_N-\bar{\eta}^{n(N)})
\label{eq:-sqrtNV'_gfixed}
\end{equation}
have the same asymptotic distribution and covariance, so we will analyze~\eqref{eq:-sqrtNV'_gfixed} instead.

Expanding $W_n(\hat{\theta}^\text{LS}_N)$ in~\eqref{eq:-sqrtNV'_gfixed}, we obtain
\begin{equation}
\begin{multlined}
\sqrt{N} Q_n^\top(\hat{\eta}_N) W_n(\hat{\theta}^\text{LS}_N) T_n(\theta_\nul) (\hat{\eta}_N-\bar{\eta}^{n(N)}) \\
\!\!= \!\sqrt{N} Q_n^\top\!(\hat{\eta}_N) T^{-\top}_n\!(\hat{\theta}^\text{LS}_N) R^n_N T^{-1}_n\!(\hat{\theta}^\text{LS}_N) T_n(\theta_\nul) (\hat{\eta}_N\!-\bar{\eta}^{n(N)}).
\end{multlined}
\label{eq:-sqrtNV'_expandW}
\end{equation}
Using Proposition~\ref{prop:SSlemma}, we conclude that \eqref{eq:-sqrtNV'_expandW} and
\begin{equation}
\begin{multlined}
\sqrt{N} Q_n^\top\!(\hat{\eta}_N) T^{-\top}_n\!(\hat{\theta}^\text{LS}_N) R^n_N T^{-1}_n\!(\theta_\nul) T_n(\theta_\nul) (\hat{\eta}_N\!-\bar{\eta}^{n(N)}) \\
= \sqrt{N} Q_n^\top (\hat{\eta}_N) T^{-\top}_n (\hat{\theta}^\text{LS}_N) R^n_N (\hat{\eta}_N\!-\bar{\eta}^{n(N)})
\end{multlined}
\label{eq:-sqrtNV'_T1fixed}
\end{equation}
have the same asymptotic properties if
\begin{equation}
\begin{multlined}
\sqrt{N}\,||T_n^{-1}(\thetals)-T_n^{-1}(\theta_\nul)||\,||\hat{\eta}_N\!-\bar{\eta}^{n(N)}||\to 0, \\ \wpone .
\end{multlined}
\end{equation}
Using~\eqref{eq:Tinv0_bounded}, \eqref{eq:invTls_bounded} and~\eqref{eq:invTls-invT0} to write
\begin{equation}
\begin{multlined}
\sqrt{N}\,||T_n^{-1}(\thetals)-T_n^{-1}(\theta_\nul)||\,||\hat{\eta}_N-\bar{\eta}^{n(N)}|| \\
\qquad \leq C \sqrt{N}\,||T_n(\thetals)-T_n(\theta_\nul)|| \,||\hat{\eta}_N-\bar{\eta}^{n(N)}|| ,
\end{multlined}
\end{equation}
we have from~\eqref{eq:O_That-T0} and Proposition~\ref{thm:hateta-bareta} that
\begin{multline}
\sqrt{N} \big|\hspace{-1pt}\big|T_n(\theta_\nul)-T_n(\hat{\theta}^\text{LS}_N) \big|\hspace{-1pt}\big| \, \big|\hspace{-1pt}\big|\hat{\eta}_N-\bar{\eta}^{n(N)}\big|\hspace{-1pt}\big| = \\
= \mathcal{O} \bigg( \frac{n(N)\log N}{\sqrt{N}} \big(1+d(N)\big)^2  \bigg) ,
\label{eq:O_sqrtN_Tdiff_gdiff}
\end{multline}
where
\begin{equation}
\frac{n(N)\log N}{\sqrt{N}} = \left(\frac{n^{3+\delta}(N)}{N}\right)^{\frac{1}{3+\delta}} \frac{\log N}{N^{\frac{1+\delta}{2(3+\delta)}}}
\rightarrow 0, \text{ as } N\rightarrow\infty ,
\label{eq:nlogN/sqrtN->0}
\end{equation}
due to Condition D2 in Assumption~\ref{ass:ARXorder}.
This implies that \eqref{eq:-sqrtNV'_expandW} and~\eqref{eq:-sqrtNV'_T1fixed}, and in turn $x(\hat{\eta}_N,\thetals)$, have the same asymptotic distribution and covariance.
Repeating this procedure, it can be shown that these, in turn, have the same asymptotic distribution and covariance as
\begin{equation}
\sqrt{N} Q_n^\top (\hat{\eta}_N) T^{-\top}_n (\theta_\nul) R^n_N (\hat{\eta}_N-\bar{\eta}^{n(N)}).
\label{eq:-sqrtNV'_Tfixed}
\end{equation}

There are two stochastic matrices left in~\eqref{eq:-sqrtNV'_Tfixed}, which we need to replace by deterministic matrices that do not affect the asymptotic properties.
Using Proposition~\ref{lemma:RnN-barRn}, 
\begin{align}
\begin{split}
&\sqrt{N} \big|\hspace{-1pt}\big|R^n_N - \bar{R}^n\big|\hspace{-1pt}\big| \, \big|\hspace{-1pt}\big|\hat{\eta}_N-\bar{\eta}^{n(N)}\big|\hspace{-1pt}\big| = \\
&\begin{multlined}[.9\displaywidth]
= \mathcal{O} \bigg( 2 \frac{n^{3/2}(N) \log N}{\sqrt{N}} \big(1+d(N)\big) \\
+ C \sqrt{\frac{n^2(N)\log N}{N}}\sqrt{\frac{n^3(N)}{N}}\big(1+d(N)\big) \bigg),
\end{multlined}
\end{split}
\end{align}
where the first term tends to zero by applying Condition D2 in Assumption~\ref{ass:ARXorder} to
\begin{multline}
\frac{n^{3/2}(N)\log N}{\sqrt{N}} = \left(\frac{n^{4+\delta}(N)}{N}\right)^{\frac{3}{2(4+\delta)}} \frac{\log N}{N^{\frac{1+\delta}{2(4+\delta)}}} \\
\rightarrow 0, \text{ as } N\rightarrow\infty ,
\label{eq:n3/2logN/sqrtN->0}
\end{multline}
and the second because of Condition D2 in Assumption~\ref{ass:ARXorder}, and~\eqref{eq:D3}.
Then, from Proposition~\ref{prop:SSlemma}, we have that \eqref{eq:-sqrtNV'_Tfixed} and
\begin{equation}
\sqrt{N} Q^\top_n(\hat{\eta}_N) T_n^{\minus \top}(\theta_\nul) \bar{R}^n (\hat{\eta}_N-\bar{\eta}^{n(N)}),
\label{eq:-sqrtNV'_Rfixed}
\end{equation}
and in turn $x(\hat{\eta}_N,\thetals)$, have the same asymptotic distribution and covariance, so we will analyze~\eqref{eq:-sqrtNV'_Rfixed}.

Applying again Proposition~\ref{prop:SSlemma}, we have that \eqref{eq:-sqrtNV'_Rfixed} and
\begin{equation}
\sqrt{N} Q^\top_n(\bar{\eta}^{n(N)}) T_n^{\minus \top}(\theta_\nul) \bar{R}^n (\hat{\eta}_N-\bar{\eta}^{n(N)})
\label{eq:-sqrtNV'_Qbar}
\end{equation}
have the same asymptotic properties, since
\begin{equation}
\begin{aligned}
\sqrt{N} &\big|\hspace{-1pt}\big|Q_n(\hat{\eta}_N) - Q_n(\bar{\eta}^{n(N)})\big|\hspace{-1pt}\big| \, \big|\hspace{-1pt}\big|\hat{\eta}_N-\bar{\eta}^{n(N)}\big|\hspace{-1pt}\big| \\
&\leq C \sqrt{N} \big|\hspace{-1pt}\big|\hat{\eta}_N-\bar{\eta}^{n(N)}\big|\hspace{-1pt}\big|^2 \\
&=\mathcal{O} \bigg( \frac{n(N)\log N}{\sqrt{N}}\big(1+d(N)\big)^2 \bigg)
\end{aligned}
\end{equation}
by using~\eqref{eq:Qhateta-Qeta0_proof} and Proposition~\ref{thm:hateta-bareta}, which tends to zero as ${N\to\infty}$, identically to~\eqref{eq:O_sqrtN_Tdiff_gdiff}.

In~\eqref{eq:-sqrtNV'_Qbar}, the matrix multiplying $\hat{\eta}_N-\bar{\eta}^{n(N)}$ is finally deterministic, but it will be more convenient to work with $Q(\eta_\nul^{n(N)})$.
With this purpose, Proposition~\ref{lemma:etabar-eta0} can be used to show that~\eqref{eq:-sqrtNV'_Qbar} and 
\begin{equation}
\sqrt{N} Q^\top_n(\eta^{n(N)}_\nul) T_n^{\minus \top}(\theta_\nul) \bar{R}^n (\hat{\eta}_N-\bar{\eta}^{n(N)})
\label{eq:-sqrtNV'_deterministic}
\end{equation}
have the same asymptotic properties, as
\begin{equation}
\begin{aligned}
\sqrt{N} &\big|\hspace{-1pt}\big| Q_n(\bar{\eta}^{n(N)}) - Q_n(\eta^n_\nul) \big|\hspace{-1pt}\big|\, \big|\hspace{-1pt}\big|\hat{\eta}_N-\bar{\eta}^{n(N)}\big|\hspace{-1pt}\big| \\
&\leq C \sqrt{N} \big|\hspace{-1pt}\big|\bar{\eta}^{n(N)}-\eta^{n(N)}_\nul\big|\hspace{-1pt}\big| \big|\hspace{-1pt}\big|\hat{\eta}_N-\bar{\eta}^{n(N)}\big|\hspace{-1pt}\big| \\
&=\mathcal{O} \bigg( \sqrt{\frac{n(N)\log N}{N}} \big(1+d(N)\big)\sqrt{N}d(N) \bigg),
\end{aligned}
\end{equation}
which tends to zero due to~\eqref{eq:D3} and~\eqref{eq:D4}.
Thus, $x(\hat{\eta}_N,\thetals)$ and~\eqref{eq:-sqrtNV'_deterministic} have the same asymptotic distribution and covariance, so we will analyze~\eqref{eq:-sqrtNV'_deterministic} instead.

Let $\Upsilon^n := Q^\top_n(\eta^n_\nul) T_n^{\minus \top}(\theta_\nul) \bar{R}^n$.
Then, using Proposition~\ref{thm:Upsilon},
\begin{equation}
\sqrt{N} \Upsilon^n (\hat{\eta}_N-\bar{\eta}^{n(N)}) \sim As\mathcal{N}(0,P) ,
\end{equation}
where $P$ is given by~\eqref{eq:P}.
Finally, using~\eqref{eq:barM=MCR}, \eqref{eq:thetawls_dist_MandP}, and~\eqref{eq:P}:
\begin{equation}
\sqrt{N} (\thetawls-\theta_\nul) \sim As\mathcal{N} (0,\sigma_\nul^2 M_\text{CR}^{\minus 1}).\qed
\end{equation}

\bibliographystyle{unsrt}        
\bibliography{mybib}             
                                 
\begin{IEEEbiography}{Miguel Galrinho}
was born in 1988.
He received his M.S. degree in aerospace engineering in 2013 from Delft University of Technology,
The Netherlands, and the Licentiate degree in electrical engineering in 2016
from KTH Royal Institute of Technology, Stockholm, Sweden.

He is currently a PhD student at KTH, with the Department of Automatic Control, School of Electrical Engineering,
under supervision of Professor H{\aa}kan Hjalmarsson.
His research is on least-squares methods for identification of structured models.
\end{IEEEbiography}

\begin{IEEEbiography}{Cristian R. Rojas}
(M'13) was born in 1980. He
received the M.S. degree in electronics engineering
from the Universidad T{\'e}cnica Federico Santa
Mar{\'i}a, Valpara{\'i}so, Chile, in 2004, and the Ph.D.
degree in electrical engineering at The University of
Newcastle, NSW, Australia, in 2008.

Since October 2008, he has been with the Royal
Institute of Technology, Stockholm, Sweden, where
he is currently Associate Professor at the Department of Automatic
Control, School of Electrical Engineering.
His research interests lie in system identification and
signal processing.
\end{IEEEbiography}

\begin{IEEEbiography}{H{\aa}kan Hjalmarsson}
(M'98--SM'11--F'13) was
born in 1962. He received the M.S. degree in electrical
engineering in 1988, and the Licentiate and
Ph.D. degrees in automatic control in 1990 and
1993, respectively, all from Link{\"o}ping University,
Link{\"o}ping, Sweden.

He has held visiting research positions at California
Institute of Technology, Louvain University,
and at the University of Newcastle, Australia. His research
interests include system identification, signal
processing, control and estimation in communication
networks and automated tuning of controllers.

Dr. Hjalmarsson has served as an Associate Editor for Automatica
(1996--2001) and for the IEEE Transactions on Automatic Control
(2005--2007), and has been Guest Editor for the European Journal of Control
and Control Engineering Practice. He is a Professor at the School of Electrical
Engineering, KTH, Stockholm, Sweden. He is a Chair of the IFAC Coordinating
Committee CC1 Systems and Signals. In 2001, he received the KTH award
for outstanding contribution to undergraduate education. He is co-recipient of
the European Research Council advanced grant.
\end{IEEEbiography}

\end{document}